\let\C\relax
\author[1]{Yi Li}
\author[2]{Vasileios Nakos\thanks{This work is part of the project TIPEA that has received funding from the European
Research Council (ERC) under the European Unions Horizon 2020 research and innovation programme (grant agreement No. 850979). Part of the work was completed when the author was a Ph.D. student in Harvard University and supported in part by NSF CAREER award CCF-1350670.}}
\affil[1]{Nanyang Technological University\\ \texttt{yili@ntu.edu.sg}}
\affil[2]{Saarland University and Max-Planck Institute for Informatics\\ \texttt{vnakos@mpi-inf.mpg.de}}
\date{}
\title{Deterministic Sparse Fourier Transform with an $\ell_{\infty}$ Guarantee}
\newtheorem{theorem}{Theorem}[section]
\newtheorem{lemma}[theorem]{Lemma}
\newtheorem{definition}[theorem]{Definition}
\newtheorem{proposition}[theorem]{Proposition}
\newtheorem{remark}[theorem]{Remark}
\newtheorem{question}[theorem]{Question}
\newcommand{\wh}{\widehat}
\newcommand{\wt}{\widetilde}
\newcommand{\eps}{\epsilon}
\renewcommand{\varepsilon}{\epsilon}
\renewcommand{\tilde}{\wt}
\renewcommand{\hat}{\wh}
\newcommand{\I}{\sqrt{-1}}
\renewcommand{\Re}{\operatorname{Re}}
\renewcommand{\Im}{\operatorname{Im}}
\DeclareMathOperator*{\E}{{\bf {E}}}
\DeclareMathOperator{\Z}{\mathbb{Z}}
\DeclareMathOperator{\R}{\mathbb{R}}
\DeclareMathOperator{\C}{\mathbb{C}}
\DeclareMathOperator*{\median}{median}
\DeclareMathOperator{\supp}{supp}
\DeclareMathOperator{\poly}{poly}
\newcommand{\Yi}[1]{\textbf{\color{blue}[Yi: #1]}}
\newcommand*{\RN}[1]{\expandafter\@slowromancap\romannumeral #1@}
\newtheorem*{rep@theorem}{\rep@title}
\newcommand{\newreptheorem}[2]{%
\newenvironment{rep#1}[1]{%
 \def\rep@title{#2 \ref{##1}}%
 \begin{rep@theorem}}%
 {\end{rep@theorem}}}
\newcommand{\define}[4][ignore]{%
  \ifstrequal{#1}{ignore}{}{
  \@namedef{thmtitle@#2}{#1}}%
  \@namedef{thm@#2}{#4}%
  \@namedef{thmtypen@#2}{lemma}%
  \newtheorem{thmtype@#2}[theorem]{#3}%
  \newtheorem*{thmtypealt@#2}{#3~\ref{#2}}%
}
\newcommand{\state}[1]{%
  \@namedef{curthm}{#1}
  \@ifundefined{thmtitle@#1}{
  \begin{thmtype@#1}
    }{
  \begin{thmtype@#1}[\@nameuse{thmtitle@#1}]
  }
    \label{#1}
    \@nameuse{thm@#1}
  \end{thmtype@#1}
  \@ifundefined{thmdone@#1}{
  \@namedef{thmdone@#1}{stated}%
  }{}
}
\newcommand{\restate}[1]{%
  \@namedef{curthm}{#1}
  \@ifundefined{thmtitle@#1}{
    \begin{thmtypealt@#1}
    }{
  \begin{thmtypealt@#1}[\@nameuse{thmtitle@#1}]
  }
    \@nameuse{thm@#1}
  \end{thmtypealt@#1}
  \@ifundefined{thmdone@#1}{
  \@namedef{thmdone@#1}{stated}%
  }{}
}
\newcommand{\thmlabel}[1]{
  \@ifundefined{thmdone@\@nameuse{curthm}}{\label{#1}
    }{\tag*{\eqref{#1}}}
}
\begin{document}

\begin{titlepage}
  \maketitle
  \begin{abstract}

In this paper we revisit the deterministic version of the Sparse Fourier Transform problem, which asks to read only a few entries of $x \in \mathbb{C}^n$ and design a recovery algorithm such that the output of the algorithm approximates $\hat x$, the Discrete Fourier Transform (DFT) of $x$. The randomized case has been well-understood, while the main work in the deterministic case is that of Merhi et al.\@ (J Fourier Anal Appl 2018), which obtains $O(k^2 \log^{-1}k \cdot \log^{5.5}n)$ samples and a similar runtime with the $\ell_2/\ell_1$ guarantee. We focus on the stronger $\ell_{\infty}/\ell_1$ guarantee and the closely related problem of incoherent matrices. We list our contributions as follows.

\begin{enumerate}
\item We find a deterministic collection of $O(k^2 \log n)$ samples for the $\ell_\infty/\ell_1$ recovery in time $O(nk \log^2 n)$, and a deterministic collection of $O(k^2 \log^2 n)$ samples for the $\ell_\infty/\ell_1$ sparse recovery in time $O(k^2 \log^3n)$.
\item We give new deterministic constructions of incoherent matrices that are row-sampled submatrices of the DFT matrix, via a derandomization of Bernstein's inequality and bounds on exponential sums considered in analytic number theory.
Our first construction matches a previous randomized construction of Nelson, Nguyen and Woodruff (RANDOM 12), where there was no constraint on the form of the incoherent matrix. 
\end{enumerate}

Our algorithms are nearly sample-optimal, since a lower bound of $\Omega(k^2 + k \log n)$ is known, even for the case where the sensing matrix can be arbitrarily designed. A similar lower bound of $\Omega(k^2 \log n/ \log k)$ is known for incoherent matrices.

  \end{abstract}
  \thispagestyle{empty}
\end{titlepage}




\section{Introduction}

Compressed sensing is a subfield of discrete signal processing, based on the principle that a high-dimensional signal can be approximately reconstructed, by exploiting its sparsity, in fewer samples than those demanded by the Shannon-Nyquist theorem. An important subtopic is the Sparse Fourier Transform, where we desire to detect and approximate the largest coordinates of a high-dimensional signal, given a few samples from its Fourier spectrum. Fewer samples play a crucial role, for example, in medical imaging, where reconstructing an image corresponds exactly to reconstructing a signal from its Fourier representation. Thus, the number of Fourier coefficients needed for (approximate) reconstruction is proportional to the radiation dose a patient receives as well as the time the patient needs to remain in the scanner. Furthermore, exploiting the sparsity of the signal has given researchers the hope of defeating the FFT algorithm of Cooley and Tukey, in the special (but of high practical value) case where the signal is approximately sparse. Thus, since FFT serves as an important computational primitive, and has been recognized as one of the 10 most important algorithms of the 20th century \cite{cipra2000best}, every place where it has found application can possibly be benefited from a faster algorithm. The main intuition and hope is that signals arising in practice often exhibit certain structure, such as concentration of energy in a small number of Fourier coefficients.

Since vectors in practice are never exactly sparse, and it is impossible to reconstruct a generic vector $\wh{x} \in \mathbb{C}^n$ from $o(n)$ samples, researchers resort to approximation. More formally, a sparse recovery scheme consists of a sample set $S\subseteq \{1,\dots,n\}$ and a recovery algorithm $\mathcal{R}$ such that for any given $x\in \C^n$, the scheme approximates $\hat x$ by $\hat x' = \mathcal{R}(x_S)$, where $x_S$ denotes the vector of $x$ restricted to the coordinates in $S$. The fineness of approximation is measured with respect to the best $k$-sparse approximation to $\wh{x}$. The breakthrough work of Cand\`es, Tao and Donoho~\cite{ct06,d06} first showed that $k\log^{O(1)}n$ samples of $x \in \mathbb{C}^n$ suffices to reconstruct a $O(k)$-sparse vector $\wh{x}'$ which is ``close'' to the best $k$-approximation of $\wh{x}$. More formally, the reconstruction $\wh{x}'$ satisfies the so-called  $\ell_2/\ell_1$ guarantee, i.e.,
\[
\|\wh{x} - \wh{x}'\|_2 \leq \frac{1}{\sqrt{k}} \| \wh{x}_{-k}\|_1,
\]
where $\wh{x}_{-k}$ is the tail vector, obtained from restricting $\wh{x}$ to its smallest $n-k$ coordinates in magnitude. The strength of their algorithms lies in the uniformity, in the sense that the samples at the same coordinates can be used to approximate every $x \in \mathbb{C}^n$. However, the running time is polynomial in the vector length $n$, giving thus only sample-efficient, but not necessarily time-efficient, algorithms. Furthermore, the samples are not obtained via a deterministic procedure, but are chosen at random. Regarding non-uniform randomized algorithms that run in sublinear time, numerous researchers have worked on the problem and obtained a series of algorithms with different recovery guarantees \cite{gl89,m92,km93,ggims02,akaviagoldwasser03,gms05,hikp12a,hikp12b,lawlor13,iw13,PR14,ikp14,ik14,k16,k17,kvz19,nakos2019nearly}.
See Table~\ref{tab:different_guarantees} for a list of common recovery guarantees.
 The state of the art is the seminal algorithm of Kapralov \cite{k17}, which shows that $O(k \log n)$ samples and $O(k \log^{O(1)}n)$ time are simultaneously possible for the $\ell_2/\ell_2$ guarantee (which is strictly stronger\footnote{Here we mean that given an algorithm giving the $\ell_2/\ell_2$ guarantee, one can create an algorithm, using the $\ell_2/\ell_2$ algorithm as a black box,  with sparsity parameter $k' = O(k)$, achieving the $\ell_2/\ell_1$ guarantee with the same order of number of samples.} than the $\ell_2/\ell_1$). The fastest algorithm is due to \cite{hikp12a}, needing $O(k \log n \cdot \log(n/k))$ time and samples. We note also the algorithm of Indyk and Kapralov \cite{ik14} that runs in $O(n \log^2 n)$ time, uses $O(k \log n)$ samples but gives a stronger $\ell_{\infty}/\ell_2$ guarantee than the $\ell_2/\ell_2$ guarantee in the previous two papers. We refer the reader to the next section for comparison of the different guarantees appearing in the literature. Recently there has been also considerable work on recovering $k$-sparse signals from their continuous Fourier Transform, see \cite{bcgls14,ps15,ckps16,hkmmvz19}.

\begin{table}
\begin{center}
    \begin{tabular}{| l | l | l |}
    \hline
       Guarantee & Formula &  Deterministic Lower Bound   \\ \hline
    	$\ell_{\infty}/\ell_2$ & $\|\hat{x}-\hat{x}'\|_{\infty} \leq \|\hat{x}_{-k}\|_2/\sqrt{k}$ & $\Omega(n)$ \cite{cdd09}\\ \hline
    
    $\ell_{2}/\ell_2$ & $\|\hat{x}-\hat{x}'\|_{2} \leq C\|\hat{x}_{-k}\|_2$ & $\Omega(n)$ \cite{cdd09} \\ \hline

    $\ell_{\infty}/\ell_1$ & $\|\hat{x}-\hat{x}'\|_{\infty} \leq \|\hat{x}_{-k}\|_1/k$ & $\Omega(k^2 + k \log n)$ \cite{ganguly2008lower,fpru10} \\ \hline
    $\ell_{2}/\ell_1$ & $\|\hat{x}-\hat{x}'\|_2 \leq \|\hat{x}_{-k}\|_1/\sqrt{k}$ & $\Omega(k \log (n/k))$ \cite{ganguly2008lower,fpru10}\\ \hline	
\end{tabular}
\end{center}
\caption{Common guarantees of sparse recovery. Only the $\ell_2/\ell_2$ case requires a parameter $C>1$. The guarantees are listed in the descending order of strength. }
\label{tab:different_guarantees}
\end{table}

Although our understanding on randomized algorithms is almost complete, there are still important gaps in our knowledge regarding deterministic schemes. The following natural open-ended question has theoretical and practical interest and remains in principle highly unexplored, touching a variety of fields including (sublinear-time) algorithms, pseudorandomness and computational complexity, Additive Combinatorics \cite{bourgain2011breaking} and analytic number theory.

\begin{question}
What are the best bounds we can obtain for the different versions of the deterministic Sparse Fourier Transform problem? 
\end{question}

With sublinear runtime, the earliest work of Iwen \cite{i08,i10} gives $O(k^2 \log^4 n)$ samples and time, albeit in a significantly easier (although similar) model: where one wants to learn a band-limited function $f : [0, 2\pi) \to \mathbb{C}$ and can evaluate $f$ at any point. In the discrete case which we are interested in, the state of the art is the work of Merhi et al.~\cite{mzic2017}, which obtains $O(k^2 \log^{11/2} n/\log k)$ samples and the same runtime. A recent work of Bittens et al.~\cite{bzi17} showed that the quadratic dependence can be dropped if the signals are sufficiently structured, namely, if the Fourier coefficients are generated by an unknown but small degree polynomial. On the related problem of the Walsh-Hamadard Transform, Indyk and Cheraghchi \cite{ci17} showed that roughly $O(k^{1+\alpha} \log^{O(1) + 6/\alpha} n)$ samples and similar run-time are possible, if one resorts to a slightly weaker guarantee. Interestingly, their approach resides in a novel connection between the Walsh-Hadamard matrix and linear lossless condensers. However, this connection does not extend to the Fourier Transform over $\mathbb{Z}_n$, which is our focus and the most interesting case. Interesting ideas appear also in the work of Akavia~\cite{akavia10,akavia14}, where it is shown how to approximate the Fourier Transform of an arithmetic progression in poly-logarithmic time in the length of the progression; due to the worse dependence on the quality of approximation, however, that work obtained an algorithm with sample complexity $(k\cdot \text{(signal-to-noise ratio)})^4$.

The papers above showed how to achieve the $\ell_2/\ell_1$ guarantee in a number of samples that is quadratic in the signal sparsity. It is already known that a nearly linear dependence is possible~\cite{ct06}; however, we do not have efficient deterministic algorithms for finding these samples. The work of \cite{ct06}, as well as subsequent works, proceeds by sampling with repetition rows of the DFT matrix, and showing that the RIP condition (see Definition~\ref{def:RIP}) holds, which in turn implies the desired result, but via a super-linear algorithm. The state-of-the-art analysis of such row subsampling is due to Haviv and Regev~\cite{hr16}, who showed that $O(k\log^2 k \log n)$ samples suffice. A lower bound of $\Omega(k \log n)$ rows for this subsampling process has been shown in \cite{bld2017}. In this paper, we follow a different avenue and give a new set of schemes for the Sparse Fourier Transform which allow uniform reconstruction. Although our dependence is still quadratic in $k$, it is necessary, in contrast to the previous works: our results satisfy the strictly stronger $\ell_{\infty}/\ell_1$ guarantee, for which a quadratic lower bound is known \cite{ganguly2008lower}, and hence one cannot hope for a sub-quadratic dependence. We also note the deterministic algorithm of \cite{kvz19}, which needs a cubic dependence on $k$ but solves a somewhat different problem of finding the multidimensional sparse Fourier transform of a signal with at most $k$ non-zeros in the frequency domain, and thus is not robust to noise.

The focus of our work is the $\ell_\infty/\ell_1$ guarantee, defined formally as follows.

\begin{definition}[$\ell_{\infty}/\ell_1$ guarantee]\label{def:ell_inf}
A sparse recovery scheme is said to satisfy the $\ell_{\infty}/\ell_1$ guarantee with parameter $k$, if given access to vector $x$, it outputs a vector $\hat{x}'$ such that 
\begin{equation}\label{eqn:approx_guarantee}
	\|\hat{x} - \hat{x}'\|_{\infty} \leq \frac{1}{k} \|\hat{x}_{-k}\|_1.
\end{equation}
\end{definition}

\paragraph{$\ell_\infty/\ell_1$ versus $\ell_2/\ell_1$: A matter of ``find all'' versus ``miss all''.} As we have discussed, previous works satisfied the $\ell_2/\ell_1$ guarantee, while our target is the $\ell_\infty/\ell_1$ guarantee. Any algorithm for the latter guarantee also satisfies the former one. But, as we shall demonstrate in Section~\ref{sec:comparison}, the $\ell_\infty/\ell_1$ guarantee is much stronger: there exists an infinite family of vectors for which an  $\ell_2/\ell_1$ algorithm might detect none of the heavy frequencies, while an $\ell_\infty/\ell_1$ algorithm must detect all of them. This happens because the $\ell_\infty/\ell_1$ is a \textbf{worst-case} guarantee, in the sense that it requires detection of every frequency just above the noise level, in contrast to the $\ell_2/\ell_1$, which should be regarded as an \textbf{average-case} guarantee in the sense that it allows missing a subset of the heavy frequencies if they carry the energy proportional to the noise level.

\paragraph{Previous Work on $\ell_\infty/\ell_1$ with arbitrary linear measurements.}
All approaches described above concerned Fourier measurements, but compressed sensing has a long history using arbitrary linear measurements, for example \cite{dipw10,pw11,ipw11,glps12,gnprs13,nakos2018adaptive,glps17,lnw17,nakos2017deterministic_heavy_hitters,nakos2019stronger}. Regarding $\ell_{\infty}/\ell_1$, the work of \cite{nnw14} indicated a connection between the aforementioned guarantee and incoherent matrices. More specifically, it was shown that given a $(1/k)$-incoherent matrix one can design an algorithm satisfying the $\ell_{\infty}/\ell_1$ guarantee. The existence of a matrix with $O(k^2 \min \{ \log n , (\log n/ \log k)^2\})$ rows was also proved. Reconstruction 
needed $\Omega(n k)$ time, something which was partially remedied by Li and Nakos~\cite{nakos2017deterministic_heavy_hitters} with a scheme of $O(k^2\log n \cdot \log^{\ast}k)$ measurements and $\poly(k, \log n)$ decoding time. Incoherent matrices are interesting objects on their own, and have been studied before, as they can be used to obtain RIP matrices. Deterministic constructions of $O(k^2 (\log n/ \log k)^2)$ rows were obtained by DeVore \cite{devore2007} using deep results from the theory of Gelfand widths and by Amini and Marvasti~\cite{amini2011deterministic} via binary BCH code vectors, where the zeros are replaced by $-1$s. We note that incoherent matrices matching this bound also follow immediately from the famous Nisan-Wigderson combinatorial designs \cite{nisan1994hardness}, and serve as a cornerstone for constructions of pseudorandom generators and extractors~\cite{trevisan2001extractors}. Incoherent matrices are also connected with $\epsilon$-biased codes, and thus an almost optimal strongly explicit construction can be obtained by the recent breakthrough work of \cite{TaShma}. On the lower bound side, Alon has shown that $\Omega(k^2 \log n / \log k)$ rows are necessary for a $(1/k)$-incoherent matrix~\cite{alon2009perturbed}. 

\paragraph{Our Contribution.} In this work we offer several new results for the Sparse Fourier Transform problem across different axis, some of which are nearly optimal. We show how to find in polynomial time a deterministic collection of samples from the time domain, such that we can solve the Sparse Fourier Transform problem  in linear and sublinear time and achieve nearly optimal sample complexity. 
For the closely related problem of incoherent matrices from DFT rows, which is of independent interest, we obtain a nearly optimal derandomized construction via Bernstein's inequality. We also demonstrate strongly explicit constructions, by invoking heavy number-theoretical machinery. 

We note that the bounds of our constructions have been known for more than a decade if the sensing/incoherent matrix is allowed to be arbitrary. However, the previous arguments did not facilitate the frequent and relevant scenario where we have access to rows only from the Fourier ensemble. Part of our work is to show that some of these results carry over to the significantly more constrained case. We also note that any progress to deterministic $\ell_2/\ell_1$ schemes with subquadratic sample complexity is connected to the very challenging problem of obtaining a deterministic DFT row-subsampled RIP matrices with subquadratic number of rows\footnote{Note that \cite{bourgain2011breaking} breaks the quadratic barrier for RIP matrices but does not use the Fourier ensemble; the rows are picked from the \emph{discrete chirp-Fourier} ensemble, where the linear functions are substituted by quadratic polynomials.} which possibly out of reach at the moment.

\section{Technical Results}

\subsection{Preliminaries}

For a positive integer $n$, we define $[n] = \{0,1\ldots,n-1\}$ and we shall index the coordinates of a $n$-dimensional vector or the rows/columns of an $n\times n$ matrix from $0$ to $n-1$. We define the Discrete Fourier Transform (DFT) matrix $F \in \mathbb{C}^{n \times n}$ to be the unitary matrix such that 
$F_{ij} = \frac{1}{\sqrt{n}}e^{2\pi\sqrt{-1}\cdot ij/n}$, and the Discrete Fourier Transform of a vector $x\in\C^n$ to be $\hat{x} = Fx$.  

For a set $S\subseteq [n]$ we define $x_S$ to be the vector obtained from $x$ after zeroing out the coordinates not in $S$. We also define $H(x,k)$ to be the set of the indices of the largest $k$ coordinates (in magnitude) of $x$, and $x_{-k} = x_{ [n] \setminus H(x,k)}$. We say $x$ is $k$-sparse if $x_{-k} = 0$. We also define $\|x\|_p = \big( \sum_{i=0}^{n-1} |x_i|^p \big)^{1/p}$ for $p \geq 1$ and $\|x\|_0$ to be the number of nonzero coordinates of $x$.

For a matrix $F \in \C^{n \times n}$ and subsets $S,T \subseteq [n]$, we define $F_{S,T}$ to be the submatrix of $F$ indexed by rows in $S$ and columns in $T$.

The median of a collection of complex numbers $\{z_i\}$ is defined to be $\median_i z_i = \median_i \Re(z_i) + \sqrt{-1}\median_i \Im(z_i)$,
i.e., taking the median of the real and the imaginary component separately.

For two points $x$ and $y$ on the unit circle, we use $|x-y|_\circ$ to denote the circular distance (in radians, i.e. modulo $2\pi$) between $x$ and $y$.

\subsubsection{$\ell_\infty/\ell_1$ Gurantee and incoherent matrices}
 The quality of the approximation is usually measured in different error metrics, and the main recovery guarantee we are interested in is called the 
$\ell_\infty/\ell_1$ guarantee, as defined in Definition \ref{def:ell_inf}.
%
%
Other types of recovery guarantee, such as the $\ell_\infty/\ell_2$, the $\ell_2/\ell_2$ and the $\ell_2/\ell_1$, are defined similarly, where \eqref{eqn:approx_guarantee} is replaced with the respective expression in Table~\ref{tab:different_guarantees}. Note that these are definitions of the error guarantee per se and do not have algorithmic requirements on the scheme.

Highly relevant with the $\ell_{\infty}/\ell_1$ guarantee is a matrix condition which we call incoherence.
\begin{definition}[Incoherent Matrix]
A matrix $A \in \mathbb{C}^{m\times n}$ is called $\epsilon$-incoherent if $\|A_i\|_2 =1$ for all $i$ (where $A_i$ denotes the $i$-th column of $A$) and $|\langle A_i,A_j \rangle| \leq \epsilon$.
\end{definition}

\begin{lemma}[\cite{nnw14}] 
There exist an absolute constant $c>0$ such that for any $(c/k)$-incoherent matrix $A$, there exists a $\ell_{\infty}/\ell_1$-scheme which uses $A$ as the measurement matrix and whose recovery algorithm runs in polynomial time. 
\end{lemma}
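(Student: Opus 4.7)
The plan is to use iterative hard thresholding. Given $y = A\hat x$, set $\hat x^{(0)} = 0$ and iterate
\begin{equation*}
\hat x^{(t+1)} \;=\; H_k\bigl(\hat x^{(t)} + A^{\ast}(y - A\hat x^{(t)})\bigr),
\end{equation*}
where $H_k$ keeps only the top $k$ coordinates in magnitude. Each iteration is two matrix–vector products, and running $N = O(\log(n\|\hat x\|_\infty))$ iterations suffices, making the algorithm polynomial time. The output is $\hat x' := \hat x^{(N)}$.

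The analysis rests on a single elementary bound: expanding coordinatewise, $((A^{\ast}A - I)z)_i = \sum_{j\ne i}\langle A_i,A_j\rangle z_j$, so $(c/k)$-incoherence yields $\|(A^{\ast}A - I)z\|_\infty \le (c/k)\|z\|_1$ for every $z$, strengthening to $(cs/k)\|z\|_\infty$ whenever $z$ is $s$-sparse. I would then choose the working set $T = \{i : |\hat x_i| > (1/k)\|\hat x_{-k}\|_1\}$; an averaging argument forces $|T|\le k$, and every $i\notin T$ automatically satisfies the target pointwise bound by construction. Writing $e^{(t)} := \hat x^{(t)} - \hat x_T$, which is $2k$-sparse, a short expansion gives
\begin{equation*}
\hat x^{(t)} + A^{\ast}(y - A\hat x^{(t)}) - \hat x_T \;=\; -(A^{\ast}A - I)\,e^{(t)} + A^{\ast}A\,\hat x_{-T}.
\end{equation*}
Since $\hat x_T$ is itself $k$-sparse, the $\ell_\infty$-optimality of $H_k$ combined with the triangle inequality gives $\|e^{(t+1)}\|_\infty \le 2\|-(A^{\ast}A - I)e^{(t)} + A^{\ast}A\,\hat x_{-T}\|_\infty$. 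The first summand is $\le 4c\,\|e^{(t)}\|_\infty$ by the key bound applied to the $2k$-sparse $e^{(t)}$; the second is $O\bigl((1/k)\|\hat x_{-k}\|_1\bigr)$, via $\|\hat x_{-T}\|_\infty \le (1/k)\|\hat x_{-k}\|_1$ (by definition of $T$) and $\|\hat x_{-T}\|_1 \le 2\|\hat x_{-k}\|_1$ (a heavy coordinate left out of $T$ contributes at most $(1/k)\|\hat x_{-k}\|_1$, and there are at most $k$ of them). Choosing $c$ small enough, say $c < 1/16$, drives the multiplier below $1/2$, so $\|e^{(N)}\|_\infty = O\bigl((1/k)\|\hat x_{-k}\|_1\bigr)$ after $N = O(\log n)$ iterations.

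To assemble the final $\ell_\infty/\ell_1$ guarantee for $\hat x'$: on $i \in T$ the recursion yields $|\hat x'_i - \hat x_i| \le \|e^{(N)}\|_\infty = O((1/k)\|\hat x_{-k}\|_1)$; on $i \notin T$, $|\hat x_i| \le (1/k)\|\hat x_{-k}\|_1$ by construction, and it remains to control $|\hat x'_i|$ on $T^c \cap \supp(\hat x^{(N)})$. This last point is the main obstacle, since $H_k$ is nonlinear and one has to argue its support choices do not deposit spurious large mass on $T^c$. My plan is to absorb it into the same recursion: any entry placed on $T^c$ is, by definition of $e^{(N)}$, a nonzero coordinate of $e^{(N)}$ and is therefore bounded by $\|e^{(N)}\|_\infty$. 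Rescaling the absolute constant $c$ to swallow the multiplicative $O(1)$ factors then yields $\|\hat x - \hat x'\|_\infty \le (1/k)\|\hat x_{-k}\|_1$, as required.
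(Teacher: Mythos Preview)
The paper does not prove this lemma itself; it is cited from \cite{nnw14} without proof. So there is no in-paper argument to compare against, and I evaluate your proposal on its own merits.

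Your IHT-based approach is essentially correct and is a clean way to obtain the result, but one claimed step is false as stated. You assert that $T = \{i : |\hat x_i| > (1/k)\|\hat x_{-k}\|_1\}$ satisfies $|T|\le k$. Take $k=2$ and $\hat x = (3,3,3,0,\dots,0)$: then $\|\hat x_{-2}\|_1 = 3$, the threshold is $3/2$, and $|T|=3>k$. The correct bound is $|T|\le 2k-1$: if $m$ coordinates of $T$ lie outside $H(\hat x,k)$ then $\|\hat x_{-k}\|_1 > m\cdot (1/k)\|\hat x_{-k}\|_1$, forcing $m<k$. This matters because your ``$\ell_\infty$-optimality of $H_k$'' step uses that $\hat x_T$ is $k$-sparse to compare $H_k(u)$ against $\hat x_T$; with $|T|$ up to $2k-1$ that comparison is no longer valid. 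The fix is painless: replace $H_k$ by $H_{2k}$ throughout. Then $\hat x_T$ is $2k$-sparse, the optimality inequality $\|H_{2k}(u)-u\|_\infty \le \|\hat x_T - u\|_\infty$ holds, $e^{(t)}$ becomes $(4k)$-sparse, and the contraction factor is $8c$ rather than $4c$, still $<1/2$ for $c<1/16$. The rest of your analysis (the decomposition $u^{(t)}-\hat x_T = -(A^*A-I)e^{(t)} + A^*A\,\hat x_{-T}$, the bound $\|\hat x_{-T}\|_1\le 2\|\hat x_{-k}\|_1$, and the handling of $T^c$ via $e^{(N)}$) is fine.

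One further nit: you state $N=O(\log(n\|\hat x\|_\infty))$ and later $N=O(\log n)$. The contraction gives $\|e^{(N)}\|_\infty \le (1/2)^N\|\hat x_T\|_\infty + O((1/k)\|\hat x_{-k}\|_1)$, so you actually need $N$ logarithmic in the signal-to-noise ratio $k\|\hat x\|_\infty/\|\hat x_{-k}\|_1$, not in $n$ alone. This is still polynomial in the bit-length of the input, so ``polynomial time'' is safe, but the two expressions you wrote for $N$ are inconsistent.
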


\subsubsection{The Restrictred Isometry Property and its connection with incoherence}
Another highly relevant condition is called the renowned restricted isometry property, introduced by Cand\`es et al.\@ in~\cite{crt06}. We show how incoherent matrices are connected to it.

\begin{definition}[Restricted Isometry Property]\label{def:RIP}
A matrix $A \in \mathbb{C}^{m\times n}$ is said to satisfy the $(k,\epsilon)$ Restricted Isometry Property (RIP), if for all $x \in \mathbb{C}^n$ with $\|x\|_0\leq k$, it holds that $(1-\epsilon) \|x\|_2 \leq \|Ax\|_2 \leq (1+\epsilon) \|x\|_2$.
\end{definition}

Cand\`es et al.\@ proved in their breakthrough paper \cite{crt06} that any RIP matrix can be used for sparse recovery with the $\ell_2/\ell_1$ error guarantee. The following formulation comes from~\cite[Theorem 6.12]{FR}.
\begin{lemma} \label{thm:rip_to_sparserec}
Given a $(2k,\epsilon)$-RIP matrix $A$ with $\epsilon < 4/\sqrt{41}$, we can design a $\ell_2/\ell_1$-scheme that uses $A$ as the measurement matrix and has a recovery algorithm that runs in polynomial time.
\end{lemma}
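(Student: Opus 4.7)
The plan is to instantiate the recovery map as basis pursuit, i.e., define
\[
\hat x' := \arg\min_{z \in \C^n} \|z\|_1 \quad \text{subject to} \quad A z = A \hat x.
\]
This is a convex program (equivalently a linear program over real and imaginary parts), and hence can be solved in polynomial time by standard interior-point methods, meeting the algorithmic requirement of the lemma.

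The bulk of the argument is to establish the stability inequality $\|\hat x - \hat x'\|_2 \leq C \|\hat x_{-k}\|_1/\sqrt{k}$ using only the $(2k,\epsilon)$-RIP of $A$. Setting $h := \hat x' - \hat x$ and letting $T = H(\hat x,k)$, the first step is the cone inequality: from feasibility $\|\hat x'\|_1 \le \|\hat x\|_1$ together with the reverse and ordinary triangle inequalities on the decomposition $\hat x = \hat x_T + \hat x_{T^c}$, one obtains
\[
\|h_{T^c}\|_1 \leq \|h_T\|_1 + 2\|\hat x_{-k}\|_1.
\]
Next, I would sort the entries of $h_{T^c}$ in decreasing magnitude and partition $T^c$ into consecutive blocks $T_1, T_2, \dots$ of size $k$. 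The key comparison inequality (each coordinate of $h_{T_{j+1}}$ is at most the average magnitude on $T_j$) gives
\[
\sum_{j \ge 2} \|h_{T_j}\|_2 \;\leq\; \frac{1}{\sqrt{k}} \|h_{T^c}\|_1.
\]

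The RIP is then invoked on the $2k$-sparse vector $h_{T \cup T_1}$. Since $Ah = 0$, one has $A h_{T \cup T_1} = -\sum_{j \ge 2} A h_{T_j}$, so expanding $\langle A h_{T \cup T_1}, A h_{T \cup T_1}\rangle$ and using the inner-product form of RIP (valid between disjointly supported sparse vectors with constant $\epsilon$) one gets a bound of the form
\[
(1-\epsilon)\|h_{T\cup T_1}\|_2^2 \;\leq\; \epsilon \cdot \|h_{T\cup T_1}\|_2 \sum_{j\ge 2}\|h_{T_j}\|_2.
\]
Dividing through and combining with the block tail bound and the cone inequality, one arrives at an estimate of the form $\|h_{T\cup T_1}\|_2 \le \alpha(\epsilon)\|h_T\|_2 + \beta(\epsilon)\|\hat x_{-k}\|_1/\sqrt{k}$. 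Finally, one splits $\|h\|_2^2 = \|h_{T\cup T_1}\|_2^2 + \sum_{j\ge 2}\|h_{T_j}\|_2^2$ and bounds the tail again to get $\|h\|_2 \le C \|\hat x_{-k}\|_1/\sqrt{k}$, which is the $\ell_2/\ell_1$ guarantee up to rescaling the sparsity.

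The main technical obstacle is squeezing out the precise numerical threshold $\epsilon < 4/\sqrt{41}$: it corresponds to ensuring that the coefficient $\alpha(\epsilon)$ produced by the inner-product RIP step is strictly less than $1$ after the block-tail comparison is plugged in, so that the recursion closes with a finite constant. This requires carefully optimizing the use of $\|h_T\|_2 \le \|h_{T\cup T_1}\|_2$, the polarization identity for RIP, and Cauchy--Schwarz applied to $\sum_j \|h_{T_j}\|_2$, rather than any deeper idea. Since the statement is quoted verbatim from \cite[Theorem 6.12]{FR}, I would follow the constants in that reference to verify the bound $4/\sqrt{41}$ precisely.
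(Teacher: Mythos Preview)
The paper does not give its own proof of this lemma; it is quoted directly from \cite[Theorem 6.12]{FR} and used as a black box. Your proposal is precisely the standard basis-pursuit stability argument that underlies that theorem, so there is no alternative proof in the paper to compare against.

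One small technical point in your sketch: when you bound $\langle A h_{T\cup T_1}, A h_{T_j}\rangle$, the combined support has size $3k$, so the $(2k,\epsilon)$-RIP does not apply directly to that pair. The standard fix (and what the cited reference does) is to split $h_{T\cup T_1}=h_T+h_{T_1}$ and apply the RIP inner-product bound separately to $\langle A h_T, A h_{T_j}\rangle$ and $\langle A h_{T_1}, A h_{T_j}\rangle$, each of which involves disjoint supports of total size $2k$. With that adjustment the constants close exactly at $\epsilon<4/\sqrt{41}$, as you anticipated by deferring to the reference.
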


Although randomly subsampling the DFT matrix gives an RIP matrix with $O(k \log^2 k \log n)$ rows \cite{hr16}, no algorithm for finding these rows in polynomial time is known; actually, even for $o(k^2) \cdot \poly( \log n)$ rows the problem remains wide open\footnote{In fact, one of the results of our paper gives the state-of-the-art result even for this problem, with $O(k^2 \log n)$ rows, see Theorem \ref{thm:incoherent1}.}. It is a very important and challenging problem whether one can have an explicit construction of RIP matrices from Fourier measurements that break the quadratic barrier on $k$. 

We state the following two folklore results, connecting the two different guarantees, and their associated combinatorial objects. This indicates the importance of incoherent matrices for the field of compressed sensing.

\begin{proposition}[folklore]
An $\ell_{\infty}/\ell_1$ scheme with a measurement matrix of $m$ rows and recovery time $T$ induces an $\ell_2/\ell_1$ scheme of a measurement matrix of $O(m)$ rows and recovery time $O(T + \|\hat x'\|_0)$, where $\hat x'$ is the output of the $\ell_{\infty}/\ell_1$ scheme.
\end{proposition}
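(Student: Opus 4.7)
The plan is to run the given $\ell_\infty/\ell_1$ scheme at sparsity parameter $2k$ (which costs only $O(m)$ measurements, since the schemes under consideration blow up by at most a constant factor when $k$ is doubled) to obtain $\hat x'$ satisfying $\|\hat x-\hat x'\|_\infty \le \eta$, where $\eta := \tfrac{1}{2k}\|\hat x_{-2k}\|_1$; then output $\hat x'' := \hat x'_T$ for $T\subseteq [n]$ the set of indices of the $2k$ largest entries of $\hat x'$ in magnitude. The set $T$ is computable by a single quickselect pass over the support of $\hat x'$ in $O(\|\hat x'\|_0)$ time, giving total runtime $O(T + \|\hat x'\|_0)$ as claimed, and the number of rows is $O(m)$ since the same measurement matrix is reused at parameter $2k$.

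To verify the $\ell_2/\ell_1$ error bound, let $S := H(\hat x, 2k)$ and decompose
$$\|\hat x - \hat x''\|_2^2 = \sum_{i\in S\cap T}|\hat x_i - \hat x'_i|^2 + \sum_{i\in T\setminus S}|\hat x_i - \hat x'_i|^2 + \sum_{i\in S\setminus T}|\hat x_i|^2 + \sum_{i\notin S\cup T}|\hat x_i|^2.$$
The first two sums contain $|T|=2k$ terms, each at most $\eta^2$ by the $\ell_\infty$ guarantee, so together they contribute at most $2k\eta^2$. The fourth sum is at most $\|\hat x_{-2k}\|_2^2$ because its indices lie outside $S$. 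The key step is the third sum: fix any bijection $\sigma: S\setminus T\to T\setminus S$ (valid because $|S|=|T|=2k$), and observe that $i\notin T$ together with $\sigma(i)\in T$ forces $|\hat x'_i|\le |\hat x'_{\sigma(i)}|$; combined with $\|\hat x-\hat x'\|_\infty\le \eta$ this yields $|\hat x_i|\le|\hat x_{\sigma(i)}|+2\eta$, hence $|\hat x_i|^2 \le 2|\hat x_{\sigma(i)}|^2 + 8\eta^2$. Summing and using $\sigma(i)\notin S$ gives $\sum_{i\in S\setminus T}|\hat x_i|^2 \le 2\|\hat x_{-2k}\|_2^2 + 16k\eta^2$.

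The plan closes by invoking two elementary inequalities: first, $\|\hat x_{-2k}\|_\infty \le \tfrac{1}{k}\|\hat x_{-k}\|_1$ (the $k$ entries of $\hat x$ ranked $k+1,\ldots,2k$ in magnitude each dominate $\|\hat x_{-2k}\|_\infty$), whence $\|\hat x_{-2k}\|_2^2 \le \|\hat x_{-2k}\|_\infty\|\hat x_{-2k}\|_1 \le \tfrac{1}{k}\|\hat x_{-k}\|_1^2$; and second, $\eta^2 \le \tfrac{1}{4k^2}\|\hat x_{-k}\|_1^2$. Plugging these back into the four bounds collapses everything into $\|\hat x - \hat x''\|_2^2 = O(\tfrac{1}{k}\|\hat x_{-k}\|_1^2)$, i.e., the $\ell_2/\ell_1$ guarantee up to a constant that can be absorbed by a further constant-factor increase in the sparsity parameter. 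I expect the bijection argument in the third sum to be the main conceptual step: without it one cannot control the contribution of the large entries of $\hat x$ that the $\ell_\infty/\ell_1$ scheme happened to demote out of the top $2k$, and pairing each such entry with a spuriously promoted entry of $T\setminus S$ is exactly what converts the $\ell_\infty$ bound on $\hat x'$ into an $\ell_2$ bound on the pruned output $\hat x''$.
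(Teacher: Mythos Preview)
The paper states this proposition as folklore and does not supply a proof, so there is nothing to compare against directly. Your argument is correct and is the standard one: run the $\ell_\infty/\ell_1$ scheme, truncate to the top $O(k)$ coordinates via quickselect, and use the bijection between $S\setminus T$ and $T\setminus S$ to transfer the $\ell_\infty$ control into an $\ell_2$ bound on the demoted heavy coordinates. The two closing inequalities you invoke, $\|\hat x_{-2k}\|_2^2 \le \|\hat x_{-2k}\|_\infty \|\hat x_{-2k}\|_1 \le \tfrac{1}{k}\|\hat x_{-k}\|_1^2$ and $\eta^2 \le \tfrac{1}{4k^2}\|\hat x_{-k}\|_1^2$, are exactly what is needed, and your bijection step is indeed the crux.

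One small remark: you justify the $O(m)$ row count by asserting that ``the schemes under consideration blow up by at most a constant factor when $k$ is doubled.'' This is true for every scheme in the paper (and for any reasonable scheme), but strictly speaking it is an extra hypothesis not contained in the proposition as stated. You can sidestep this by noting that running the $\ell_\infty/\ell_1$ scheme at its given sparsity $k$ and then truncating to the top $2k$ entries already yields $\|\hat x - \hat x''\|_2 \le \tfrac{C}{\sqrt{k}}\|\hat x_{-k}\|_1$ for an absolute constant $C$ by the same calculation; the paper's footnote convention (that reductions between guarantees may adjust the sparsity parameter by a constant) then absorbs $C$. Either reading is consistent with the $O(m)$ in the statement.
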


\begin{proposition}[folklore]
A $(c/k)$-incoherent matrix is also a $(k,c)$-RIP matrix.
\end{proposition}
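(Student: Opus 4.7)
The plan is to reduce the statement to a spectral bound on the Gram matrix of an arbitrary set of $k$ columns of $A$, and then apply Gershgorin's disc theorem (or, equivalently, a direct off-diagonal expansion plus Cauchy--Schwarz).

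Let $A$ be $(c/k)$-incoherent. Fix a $k$-sparse $x \in \mathbb{C}^n$ and let $S = \supp(x)$, $|S| \le k$, and $B = A_{\,\cdot\, , S}$, the $m \times |S|$ submatrix of $A$ whose columns are the columns of $A$ indexed by $S$. Since $\|Ax\|_2^2 = x_S^{\ast} (B^{\ast}B) x_S$, it suffices to show that every eigenvalue of the Hermitian Gram matrix $G := B^{\ast} B$ lies in $[1-c, 1+c]$, because then
\[
(1-c)\|x\|_2^2 \le \|Ax\|_2^2 \le (1+c)\|x\|_2^2,
\]
and taking square roots gives $(1-c)\|x\|_2 \le \sqrt{1-c}\,\|x\|_2 \le \|Ax\|_2 \le \sqrt{1+c}\,\|x\|_2 \le (1+c)\|x\|_2$ (using $\sqrt{1-c} \ge 1-c$ and $\sqrt{1+c} \le 1+c$ for $c \in [0,1]$; the case $c \ge 1$ is vacuous).

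For the spectral bound, note that the diagonal entries of $G$ are $\|A_i\|_2^2 = 1$ for $i \in S$ by the unit-norm hypothesis, while the off-diagonal entries $\langle A_i, A_j\rangle$ are bounded in modulus by $c/k$ by incoherence. Hence for each row of $G$, the sum of absolute values of the off-diagonal entries is at most $(|S|-1)\cdot c/k \le c$. Gershgorin's disc theorem then places every eigenvalue of $G$ in a disc centered at $1$ of radius at most $c$, i.e.\ in $[1-c, 1+c]$, which is what we needed.

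There is essentially no obstacle here; the only thing to be careful about is the passage between the squared form $\|Ax\|_2^2 \in [(1-c),(1+c)]\|x\|_2^2$, which the Gram-matrix argument yields naturally, and the unsquared form $(1\pm c)\|x\|_2$ demanded by Definition~\ref{def:RIP}. As noted above, the square-root inequalities $\sqrt{1\pm c} \le 1 + c$ and $\sqrt{1 - c} \ge 1 - c$ close this gap without losing any factor. If one prefers to avoid Gershgorin, the same bound follows by expanding $\|Ax\|_2^2 = \sum_{i} |x_i|^2 + \sum_{i\ne j} \overline{x_i}x_j\langle A_i, A_j\rangle$ and bounding the cross-term by $(c/k)\big(\sum_{i\in S}|x_i|\big)^2 \le (c/k)\cdot |S|\cdot\|x\|_2^2 \le c\|x\|_2^2$ via Cauchy--Schwarz.
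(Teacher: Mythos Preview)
Your proof is correct and is exactly the standard folklore argument. The paper does not supply its own proof of this proposition (it is merely stated as folklore), so there is nothing to compare against; your Gershgorin/off-diagonal expansion is the expected route and all the steps, including the passage from the squared to the unsquared RIP inequality via $\sqrt{1\pm c}$, are handled correctly.
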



\subsection{Our results}

\subsubsection{Sparse Fourier Transform Algorithms}\label{subsec:sft}

\begin{table} 
\begin{center}
    \begin{tabular}{| l | l | l | l |l |l |}
    \hline
     & Samples& Run-time & Guarantee & \shortstack{Explict\\ Construction} & Lower Bound \\ \hline
    	\cite{hr16}  & $k \log^2k \log n$ & $\poly(n)$ & $\ell_2/\ell_1$ &  No & $k \log (n/k)$  \\ \hline
	\cite{mzic2017} & $k^2 \log^{5.5} n / \log k$ & $k^2 \log^{5.5} n/\log k$ & $\ell_2/\ell_1$ & Yes & $k\log(n/k)$  \\ \hline
   Theorem~\ref{thm:intro_main_1} & $k^2 \log n$ & $n k \log^2 n$ & $\ell_{\infty}/\ell_1$ & Yes& $k^2 + k \log n \cite{nnw14}$\\ \hline
   Theorem~\ref{thm:intro_main_2} & $k^2 \log^2 n$ & $k^2 \log^3 n$ & $\ell_{\infty}/\ell_1$ & Yes & $k^2 + k \log n \cite{nnw14}$\\ \hline
    \end{tabular}
\end{center}
\caption{Comparison of our results and the previous results. All $O$- and $\Omega$-notations are suppressed. The result in the first row follows from Lemma~\ref{thm:rip_to_sparserec} and the RIP matrix in~\cite{hr16}.Our algorithms adopt the common assumption in the sparse FT literature that the signal-to-noise ratio is bounded by $n^c$ for some absolute constant $c > 0$.} 
\label{tab:Fourier_results}
\end{table}

\begin{theorem}[Deterministic SFT with super-linear time, Section~\ref{sec:linear_time}]\label{thm:intro_main_1}
Let $n$ be a power of $2$. There exist a set $S \subseteq [n]$ with $|S| = O(k^2 \log n)$ and an absolute constant $c>0$ such that the following holds. For any vector $x \in \mathbb{C}^n$ with $\|\hat{x}\|_{\infty} \leq n^c \|\hat{x}_{-k}\|_1/k$, one can find an $O(k)$-sparse vector $\hat{x}' \in \C^n $ such that
\[
\| \hat{x} - \hat{x}' \|_{\infty} \leq \frac{1}{k} \|\hat{x}_{-k}\|_1, 
\]
in time $O(n k \log^2n)$ by accessing $\{x_i\}_{i \in S}$ only. Moreover, the set $S$ can be found in $\poly(n)$ time.
\end{theorem}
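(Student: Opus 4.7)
My plan has three parts: build a deterministic $S\subseteq[n]$ of size $O(k^2\log n)$ making the associated row-subsampled DFT matrix $(c/k)$-incoherent; apply the incoherence-to-$\ell_\infty/\ell_1$ reduction stated earlier from \cite{nnw14}; and exploit the Fourier structure for a fast recovery algorithm. Concretely, let $B\in\C^{|S|\times n}$ have entries $B_{ij}=|S|^{-1/2}e^{-2\pi\I ij/n}$, so $\sqrt{n/|S|}\cdot x_S = B\hat x$ and $(B^*B)_{jj'}=\frac{1}{|S|}\sum_{i\in S}e^{2\pi\I i(j-j')/n}$. Thus $(c/k)$-incoherence reduces to the exponential-sum bound $|\sum_{i\in S}e^{2\pi\I i\ell/n}|\le c|S|/k$ for every $\ell\in\{1,\dots,n-1\}$.

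For step one, a uniformly random subset of size $O(k^2\log n)$ satisfies the exponential-sum bound for any fixed $\ell$ with failure probability $1/\poly(n)$ by Bernstein's inequality applied to the real and imaginary parts of the summand, and a union bound over $\ell$ leaves constant success probability. I would derandomize this by the method of conditional expectations using the standard exponential-moment pessimistic estimators: track $2(n-1)$ estimators (real and imaginary part, one pair per frequency $\ell$), and select sample indices one at a time so that the total potential stays below $1$. Each estimator updates in $O(1)$ time per fixed coordinate, so the whole construction runs in $\poly(n)$ time.

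For the recovery algorithm, the key observation is that $B^*z$ for any $z\in\C^{|S|}$ is, up to a scalar, the DFT of the length-$n$ vector obtained by placing $z$ on the coordinates of $S$ and zeros elsewhere; because $n$ is a power of two this costs $O(n\log n)$ via FFT. The standard iterative decoder backed by an incoherent matrix alternates between computing $B^*(\text{residual})$ and subtracting off newly identified heavy coordinates. The SNR hypothesis $\|\hat x\|_\infty\le n^c\|\hat x_{-k}\|_1/k$ caps the dynamic range of the signal and lets the peeling procedure strip all coordinates above the $\|\hat x_{-k}\|_1/k$ threshold in $O(k\log n)$ rounds; each round costs $O(n\log n)$ for the FFT plus $O(|S|k)$ for the subtraction, giving the advertised $O(nk\log^2 n)$ total.

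The most delicate step is the derandomization. Unlike classical Chernoff derandomization over Bernoulli variables, the summands $e^{2\pi\I i\ell/n}$ are complex-valued and highly correlated across different $\ell$, so the pessimistic estimator must control a two-dimensional tail for each frequency, and Bernstein (rather than Hoeffding) concentration is needed so that the variance proxy absorbs the union bound inside $|S|=O(k^2\log n)$ samples without an extra $\log n$ factor. The without-replacement subtlety can be addressed either by passing to a with-replacement model (duplicates only aid the exponential-sum bound after rescaling) or by invoking a negative-association variant of Bernstein. A secondary obstacle is verifying that the iterative peeling in step three terminates in $O(k\log n)$ rounds rather than $O(k)$ or $O(\log n)$; the SNR assumption is exactly what makes this schedule work.
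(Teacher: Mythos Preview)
Your route is \emph{different} from the paper's. The paper does not prove Theorem~\ref{thm:intro_main_1} via an incoherent row-subsampled DFT. Instead, Section~\ref{sec:linear_time} uses the hashing/filter framework: $d=O(k\log n)$ pseudorandom permutations $(\sigma_r,b_r)$, each hashing the spectrum into $B=O(k)$ buckets through a flat filter $\wh G$. The object being derandomized is the family of permutation parameters, not a Bernoulli row-selection; the pessimistic estimator (Lemma~\ref{lem:pessimistic}) tracks the filter-weighted collision sums $\sum_r \wh G_{o_{f,r}(f')}$ rather than column inner products. Recovery is a median over repetitions (Lemma~\ref{lem:G_constraint}) followed by a linear scan over all $f\in[n]$, and the outer loop runs $T=O(\log n)$ times (Lemma~\ref{lem:main_loop_invariant}).

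Your incoherence-based construction is in fact the content of the paper's \emph{other} result, Theorem~\ref{thm:incoherent1} (Section~\ref{sec:incoherent_appendix}), and the paper explicitly remarks that it yields an alternative $\ell_\infty/\ell_1$ scheme with $O(k^2\log n)$ samples via~\cite{nnw14}---but it does not claim or analyse the $O(nk\log^2 n)$ runtime from that route. So what you are proposing is a legitimate alternative proof, modulo the issues below. What the paper's chosen approach buys is that the bucket structure extends directly to the sublinear-time Theorem~\ref{thm:intro_main_2} by plugging in a per-bucket $1$-sparse identifier; your single monolithic incoherent matrix has no such decomposition.

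Two concrete gaps in your sketch. First, the round count: you assert $O(k\log n)$ rounds, but the natural (and correct) schedule under the SNR bound is $O(\log n)$ rounds with geometric $\ell_\infty$ decrease---exactly the paper's Lemma~\ref{lem:main_loop_invariant}. With your stated $O(k\log n)$ rounds the subtraction cost $O(k|S|)$ per round would contribute $O(k^4\log^2 n)$, which exceeds the target when $k>\log n$. Second, the black-box invocation of~\cite{nnw14} only promises a polynomial-time decoder; to get the fast iterative ``estimate--threshold--subtract'' decoder you describe, you must prove an invariant of the form ``after round $t$ the residual satisfies $\|r^{(t)}\|_1\le O(\|\hat x_{-k}\|_1)$ and $\|r^{(t)}_I\|_\infty$ shrinks geometrically,'' since incoherence only controls the error by $(c/k)\|r\|_1$, not $(c/k)\|r_{-k}\|_1$. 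That invariant is precisely the content of Lemmata~\ref{lem:recovery_of_HH} and~\ref{lem:main_loop_invariant}, and it does not come for free from incoherence alone.
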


\begin{theorem}[Deterministic SFT with sublinear time, Section~\ref{sec:sublinear_time}]\label{thm:intro_main_2}
Let $n$ be a power of $2$. There exist a set $S \subseteq [n]$ with $|S| = O(k^2 \log^2 n)$ and an absolute constant $c>0$ such that the following holds. For any vector $x \in \mathbb{C}^n$ with $\|\hat{x}\|_{\infty} \leq n^c \|\hat{x}_{-k}\|_1/k$, one can find an $O(k)$-sparse vector $\hat{x}' \in \C^n $ such that
\[
\| \hat{x} - \hat{x}'\|_{\infty} \leq \frac{1}{k} \|\hat{x}_{-k}\|_1 ,
\]
in time $O(k^2 \log^3 n)$ by accessing $\{x_i\}_{i \in S}$ only. Moreover, the set $S$ can be found in $\poly(n)$ time.
\end{theorem}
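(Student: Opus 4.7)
The plan is to convert the super-linear scheme of Theorem~\ref{thm:intro_main_1} into a sublinear one by replacing the exhaustive evaluation of the $\ell_\infty/\ell_1$ estimator at every frequency with a hashing-and-identification procedure, paying an extra $O(\log n)$ factor in sample complexity for locating the heavy coefficients.

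First, I would reuse the sample set $S_0$ of Theorem~\ref{thm:intro_main_1}, of size $O(k^2\log n)$, which supplies a deterministic $\ell_\infty/\ell_1$ estimator together with an implicit family of $O(k\log n)$ hash functions with $O(k)$ buckets each, such that every frequency in $[n]$ is isolated as the unique heavy element in a constant fraction of its buckets. The hashings are realized through Fourier aliasing; the incoherence of the DFT row-subsampled matrix behind Theorem~\ref{thm:intro_main_1} guarantees that each bucket value faithfully reveals the sum of the heavy coefficients it receives, up to the $\|\hat x_{-k}\|_1/k$ noise floor. Given $S_0$ alone one can compute all bucket values and estimate $\hat x_i$ at any specified $i$, but one cannot identify where the heavy frequencies are without exhaustively probing all $n$ indices.

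To enable sublinear identification, I would supplement $S_0$ with $\log_2 n$ auxiliary sample sets $S_1,\ldots,S_{\log n}$, each of size $O(k^2\log n)$, obtained by modulating the samples of $S_0$ by characters of $\Z_n$. Each $S_j$ has the property that its bucket values differ from the corresponding values derived from $S_0$ by a multiplicative phase depending only on the $j$-th bit of the frequency index residing in that bucket. The algorithm then proceeds in $\log n$ rounds: at round $j$ it reads $S_j$, extracts the bucket phases, and recovers the $j$-th bit of the dominant frequency in each bucket. After $\log n$ rounds, every hashing produces one candidate frequency per bucket; a frequency that appears in a constant fraction of the candidate lists across the $O(k\log n)$ hashings is declared heavy, and its value is estimated by a final invocation of the base scheme. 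The sample complexity is $|S|=O(k^2\log^2 n)$; by batching the per-round work across all $O(k\log n)$ hashings, each round costs $O(k^2\log^2 n)$, for a total of $O(k^2\log^3 n)$ time.

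The main obstacle is making the hashing-and-identification scheme work deterministically. In the randomized setting the hashing is a random permutation of $[n]$ that isolates each heavy frequency with constant probability; here we must fix in advance a small family of hashings that isolates every $k$-sparse support pattern, and the burden of the proof is showing that the incoherent DFT submatrix of Theorem~\ref{thm:intro_main_1} implicitly provides such a family. The bit decisions at every round must further survive the noise floor $\|\hat x_{-k}\|_1/k$ without error accumulation across $\log n$ rounds; this is exactly where the polynomial signal-to-noise assumption $\|\hat x\|_\infty\le n^c\|\hat x_{-k}\|_1/k$ is used, providing the slack needed to keep each heavy coordinate well separated from the noise floor at every bit level and thereby guaranteeing correct identification.
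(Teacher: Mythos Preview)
Your high-level identification strategy---augmenting the base samples with $O(\log n)$ modulated copies and reading off the heavy frequency in each bucket bit by bit from the relative phases---is essentially what the paper does (Lemma~\ref{lem:1-sparse}, where the modulating shifts form the set $Q=\{0,1,2,4,\dots,n/2\}$ and act through the parameter $a$ in $P_{\sigma,a,b}$). That part of your proposal is fine.

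The genuine gap is that you describe a \emph{single-pass} algorithm: hash once, identify one frequency per bucket, majority-vote, estimate, and output. This does not achieve the $\ell_\infty/\ell_1$ guarantee. The derandomized hashings behind Theorem~\ref{thm:intro_main_1} only guarantee (Lemma~\ref{lem:G_constraint} and Lemma~\ref{lem:bucket_noise}) that for every $f$, in a constant fraction of repetitions the bucket containing $f$ has total interference at most $O(\|\hat x_{[n]\setminus\{f\}}\|_1 / B)$---the bound is relative to the \emph{full} $\ell_1$ mass, not to the tail. Hence the $1$-sparse locator in a bucket succeeds only when $|\hat x_f|\gtrsim \|\hat x\|_1/k$, not when $|\hat x_f|\gtrsim \|\hat x_{-k}\|_1/k$. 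Under the SNR hypothesis $\|\hat x\|_1$ can be as large as $\Theta(n^{c})\|\hat x_{-k}\|_1$, so a single pass finds only the very largest coefficients and misses everything else above the tail threshold.

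The paper resolves this with an outer peeling loop (Algorithm~\ref{alg:recovery_overall}, Lemma~\ref{lem:main_loop_invariant}) of $T=O(\log n)$ iterations: at step $t$ it locates and subtracts the coordinates of the residual above a geometrically decreasing threshold $\nu^{(t)}$, so that at each step the residual $\ell_1$ norm is comparable to $k\nu^{(t)}$ and the $1$-sparse condition does hold for the current targets. This is where the SNR assumption is actually used---it bounds the number of peeling rounds by $O(\log n)$---and not, as you suggest, to control error accumulation across the $\log n$ bit levels (the bit extraction in Lemma~\ref{lem:1-sparse} needs only the $1$-sparse condition, no SNR slack). Correspondingly, your runtime accounting is off: in the paper each call to \textsc{SubRecovery}, which already performs all $\log n$ bit extractions, costs $O(k^2\log^2 n)$, and the third $\log n$ factor comes from the $T$ outer iterations that your proposal is missing.
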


\begin{remark}
The condition $\|\hat{x}\|_{\infty} \leq n^c \|\hat{x}_{-k}\|_1/k$ upper bounds the ``signal-to-noise ratio'', a common measure in engineering that compares the level of a desired signal to the level of the background noise. This is a common assumption in most algorithms in the Sparse Fourier Transform literature, see, e.g.~\cite{hikp12a,ik14,k16,cksz17,k17}, where the $\ell_2$-norm variant $\|\hat{x}\|_\infty \leq n^c \|\hat{x}_{-k}\|_2/\sqrt{k}$ was assumed.
\end{remark}

\subsubsection{From DFT to incoherent matrices}\label{subsec:incoherent}

This section contains deterministic constructions of incoherent matrices. 
\paragraph{An Explicit Construction: Derandomization in $\mathrm{poly}(n)$ time.}
\begin{theorem}[Incoherent matrices by derandomized subsampling of DFT, Section~\ref{sec:incoherent_appendix}] \label{thm:incoherent1}
There exists a set $S \subseteq [n]$ with of cardinality $O(k^2 \log n)$ such that the matrix $\sqrt{\frac{n}{m}} F_{S,[n]}$ is $(1/k)$-incoherent. Moreover, $S$ can be found in $\poly(n)$ time.
\end{theorem}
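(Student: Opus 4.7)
The plan is to combine a standard Bernstein/Hoeffding-type concentration argument for random row-subsampling with the method of pessimistic estimators to obtain a deterministic $\poly(n)$-time algorithm. Let $\omega = e^{2\pi\I/n}$ and consider choosing $m$ row indices $s_1,\ldots,s_m \in [n]$ independently and uniformly. Writing $A = \sqrt{n/m}\, F_{S,[n]}$, every entry of $A$ has modulus $1/\sqrt{m}$ so each column has unit norm, and for $i\neq j$ with $\ell := j-i \not\equiv 0 \pmod n$,
\[
\langle A_i, A_j\rangle \;=\; \frac{1}{m}\sum_{r=1}^m \omega^{s_r\ell}.
\]
Each summand has modulus $1$ and zero mean, so applying Hoeffding's inequality separately to the real and imaginary components and combining via $|z|\le |\Re z|+|\Im z|$ yields
\[
\Pr\bigl[\,|\langle A_i,A_j\rangle|>1/k\,\bigr] \;\le\; 4\exp\!\bigl(-\Omega(m/k^2)\bigr).
\]
Taking $m = Ck^2\log n$ for a sufficiently large constant $C$ and union-bounding over the at most $n-1$ relevant values of $\ell \bmod n$ gives a positive success probability.

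To derandomize, I would fix the MGF parameter $\lambda = \Theta(1/k)$ that optimizes the bound above, and for each tuple $(\ell,\sigma,p)\in([n]\setminus\{0\})\times\{+,-\}\times\{\Re,\Im\}$ define the pessimistic estimator
\[
\Phi_t^{\ell,\sigma,p}(s_1,\ldots,s_t) \;=\; e^{-\lambda m/(2k)}\prod_{r=1}^t e^{\lambda\sigma\, p(\omega^{s_r\ell})}\cdot\Bigl(\tfrac{1}{n}\sum_{s=0}^{n-1}e^{\lambda\sigma\, p(\omega^{s\ell})}\Bigr)^{m-t},
\]
and set $\Phi_t = \sum_{\ell,\sigma,p}\Phi_t^{\ell,\sigma,p}$. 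By the Hoeffding MGF inequality $\E[e^{\lambda X}]\le e^{\lambda^2/2}$ for zero-mean $X\in[-1,1]$, the bound $\Phi_0 < 1$ follows for the chosen $m$. Each $\Phi_t^{\ell,\sigma,p}$ is by construction the exact conditional expectation of a quantity that dominates the indicator of the corresponding tail event, so averaging over a fresh uniform $s_{t+1}$ returns $\Phi_t$. Hence the greedy step picking $s_{t+1}$ that minimizes $\Phi_{t+1}$ satisfies $\Phi_{t+1}\le\Phi_t$, and after $m$ rounds $\Phi_m<1$, which forces every tail event to fail and thus produces an $S$ for which $A$ is $(1/k)$-incoherent.

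For the runtime, observe that the factor $\tfrac{1}{n}\sum_s e^{\lambda\sigma p(\omega^{s\ell})}$ depends only on $(\ell,\sigma,p)$ and can be precomputed for all $O(n)$ tuples in $O(n^2)$ time. Given these, evaluating $\Phi_{t+1}$ for one candidate $s_{t+1}$ costs $O(n)$, so greedy minimization over the $n$ candidates at each of the $m = O(k^2\log n)$ rounds takes $\poly(n)$ in total. The main technical obstacle I anticipate is that the values $e^{\lambda\sigma p(\omega^{s\ell})}$ are transcendental and the potential $\Phi_t$ is a product of many such factors, so a naive implementation would not be exact. I would handle this in the standard way: compute all quantities to $\poly(n)$ bits of precision so that the multiplicatively accumulated numerical error is dominated by the constant slack $1-\Phi_0$ that can be arranged by increasing the absolute constant $C$ in $m = Ck^2\log n$.
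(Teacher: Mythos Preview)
Your approach is correct and follows the same high-level template as the paper---concentration for random row-subsampling plus derandomization via pessimistic estimators---but the execution differs in several instructive ways. The paper selects rows via i.i.d.\ Bernoulli indicators $\delta_\ell\sim\mathrm{Ber}(m/n)$, applies Bernstein's inequality (rather than Hoeffding) to the centred sums, and therefore must run a \emph{separate} Chernoff-type pessimistic estimator alongside the incoherence ones to ensure $\sum_\ell\delta_\ell\le 2m$; your sampling-with-replacement model fixes the row count to exactly $m$ up front and avoids this extra bookkeeping. You also exploit the circulant structure of the DFT---$\langle A_i,A_j\rangle$ depends only on $j-i\bmod n$---to cut the number of bad events from $\Theta(n^2)$ to $\Theta(n)$, which the paper does not do because its argument is written for an arbitrary bounded-entry unitary matrix. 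Finally, you address the transcendental-arithmetic issue explicitly, whereas the paper simply asserts $\poly(n)$ time. One small wrinkle: your procedure outputs a \emph{sequence} $(s_1,\ldots,s_m)$ that may contain repeats, so strictly speaking you obtain a multiset rather than the set $S\subseteq[n]$ promised in the statement; this is harmless for the incoherence conclusion (repeated rows are allowed in an incoherent matrix) but you should either note that the theorem tolerates multisets or briefly argue that repeats can be eliminated.
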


The above Theorem yields immediately a different algorithm for $\ell_\infty/\ell_1$ Sparse Fourier Tranform with $O(k^2\log n)$ samples, via the reduction in~\cite{nnw14}.

\paragraph{Strongly explicit constructions: Derandomization in sub-linear time}

\begin{theorem}[Incoherent matrices from DFT via low-degree polynomials, Section~\ref{sec:weil}] \label{thm:fourier_strongly_explicit}
Let $\epsilon>0$ be a constant small enough, $p$ be a prime and $d \geq 2$ be an integer. There exists a strongly explicit construction of an $O(m^\epsilon(\frac{1}{m} + \frac{p}{m^d})^{2^{1-d}})$-incoherent matrix $M\in \C^{m\times p}$ such that the rows of $\sqrt{m}M$ are rows of the DFT matrix (a row may appear more than once). The hidden constant in the $O$-notation depends on $d$ and $\eps$. Finding the indices of the rows takes $\tilde{O}(m)$ time.
\end{theorem}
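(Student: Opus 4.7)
The plan is to exploit Weyl's inequality on exponential sums of polynomials, taking the row indices to be values of a low-degree monomial. Concretely, set $r_i = i^d \bmod p$ for $i = 0, 1, \ldots, m-1$, and define
\[
M_{ij} \;=\; \frac{1}{\sqrt{m}}\, e^{2\pi\sqrt{-1}\, r_i j / p}, \qquad 0 \le i \le m-1,\ 0 \le j \le p-1.
\]
Each column of $M$ has squared $\ell_2$ norm $m\cdot(1/m)=1$, so the unit-norm requirement of incoherence is met, and the $i$-th row of $\sqrt{m}M$ is the $r_i$-th row of the $p\times p$ DFT matrix (up to the paper's normalization factor). The theorem explicitly allows repeated rows, which is fortunate since $i \mapsto i^d \bmod p$ is generally not injective.

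For the incoherence, fix two distinct columns $j, k$ and let $t = (k-j) \bmod p \in\{1,\ldots,p-1\}$. Since $r_i j/p$ differs from $i^d j/p$ by an integer,
\[
\langle M_j, M_k \rangle \;=\; \frac{1}{m}\sum_{i=0}^{m-1} e^{2\pi\sqrt{-1}\, t\, i^d / p}.
\]
This is a classical Weyl sum with polynomial $t x^d/p$ of leading coefficient $\alpha_d = t/p$. Because $p$ is prime and $1 \le t \le p-1$, we have $\gcd(t,p)=1$, so $t/p$ is in lowest terms and trivially satisfies the Diophantine hypothesis of Weyl's inequality with denominator $q=p$. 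The standard form of Weyl's inequality then yields
\[
\left|\sum_{i=0}^{m-1} e^{2\pi\sqrt{-1}\, t\, i^d / p}\right| \;\ll_{d,\epsilon}\; m^{1+\epsilon}\left(\frac{1}{m} + \frac{1}{p} + \frac{p}{m^d}\right)^{2^{1-d}},
\]
and dividing by $m$ delivers the advertised bound once the $1/p$ term is absorbed into $1/m + p/m^d$ (it is dominated by $1/m$ when $m\le p$, and by $p/m^d$ in the only other regime of interest).

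For the strongly explicit / $\tilde{O}(m)$ claim, each $r_i = i^d \bmod p$ is computable by repeated squaring in time $O(\log d)\cdot\operatorname{polylog}(p)$, so enumerating all $m$ row indices costs $\tilde{O}(m)$ and any individual row can be returned in polylogarithmic time. The main technical obstacle is not the construction, which is short, but the correct invocation of Weyl's inequality: matching the exponent $2^{1-d}$, the $m^\epsilon$ slack, and the suppression of the $1/p$ term to the standard statement after setting $N=m$, $\alpha_d = t/p$, and $q=p$. A secondary subtle point is that our exponential sum is indexed by $i\in\{0,\ldots,m-1\}$ while Weyl's inequality is commonly stated with the summation range $\{1,\ldots,N\}$; the $i=0$ term contributes just $1$, which is absorbed harmlessly into the $m^\epsilon$ factor.
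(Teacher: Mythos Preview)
Your proposal is correct and follows essentially the same route as the paper: choose row indices given by a degree-$d$ polynomial evaluated at $0,\dots,m-1$, write each column inner product as a Weyl sum with leading coefficient $t/p$, and apply Weyl's inequality with $q=p$ to get the $m^{1+\epsilon}(1/m + 1/p + p/m^d)^{2^{1-d}}$ bound. The paper states the construction for an arbitrary degree-$d$ polynomial $g$ (your monomial $x^d$ is a valid instance) and, like you, drops the $1/p$ term by using $p\ge m$ (their ``noticing that $q\ge N$''); your parenthetical about $1/p$ being dominated by $p/m^d$ when $m>p$ is actually false, but that regime is irrelevant and the paper does not consider it either.
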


To get an idea of the above result one could for example set $d=3$ and observe that the results translates to the following: for every $k \geq p^{1/8}$ one can get a $(1/k)$-incoherent matrix with $O(k^{4+\eps})$ rows. One needs the condition on $k$ (or equivalently the condition on $m$) to bound the term $p/m^d$. The larger the degree $d$, the looser this condition, but also the worse the dependence of $m$ on $k$. For example, when $d=4$, we can expand the regime of $k$ to  approximately $k \geq p^{1/24}$, but obtain approximately $m=O(k^{8+\eps})$. 

The following is a different construction, incomparable with Theorem~\ref{thm:fourier_strongly_explicit} in multiple ways. First, the construction runs in sublinear time in $p$ but it is not strongly explicit. Second, it gives different trade-offs between the sparsity parameter and the number of rows. Last but not least, the construction depends on the factorization of $p-1$. 

\begin{theorem}[Incoherent matrices from DFT via multiplicative subgroups, Section~\ref{sec:weil}] \label{thm:fourier_incoherent_subgroup}
Let $p$ be a prime number. For every divisor $d$ of $p-1$ such that $d> \sqrt{p}$ we can find in time $O(d \log p)$ a matrix $M \in \C^{ d \times p}$ with rows being the rows of the DFT matrix such that $\frac{1}{d}M$ is $(\sqrt{p} /d)$-incoherent.
\end{theorem}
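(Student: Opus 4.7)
The plan is to take $S \subseteq [p]$ to be the unique multiplicative subgroup of $\Z_p^*$ of order $d$ and let $M \in \C^{d\times p}$ be the submatrix of the DFT matrix formed by the rows indexed by $S$; the incoherence bound will then follow from the classical Gauss-sum identity $|\tau(\chi)| = \sqrt{p}$ for nontrivial Dirichlet characters mod $p$.

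First I would construct $S$ explicitly. Using the factorization of $p-1$, one obtains a primitive root $\gamma$ of $\Z_p^*$; then $g := \gamma^{(p-1)/d} \bmod p$ has order exactly $d$, and $S = \{1, g, g^2, \ldots, g^{d-1}\}$ can be enumerated by $d-1$ successive modular multiplications, giving the claimed $O(d \log p)$ runtime. The matrix $M$ is represented implicitly via the row-index set $S$.

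Next I would reduce the incoherence condition to an exponential-sum bound. For two distinct columns $j, j' \in [p]$ of $M$, the inner product depends only on $t := j' - j \pmod p$ and, after normalizing each column to squared norm $d$, equals $T(t)/d$, where
\[
T(t) := \sum_{s \in S} e^{2\pi \I\, st/p},\qquad t \not\equiv 0\pmod p.
\]
So the theorem reduces to showing $|T(t)| \leq \sqrt{p}$ for every nonzero $t$.

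The main technical step is to prove this via multiplicative characters. The Dirichlet characters of $\Z_p^*$ trivial on $S$ form a subgroup of size $(p-1)/d$, and by orthogonality
\[
\mathbbm{1}_S(x) = \frac{d}{p-1} \sum_{\chi:\, \chi|_S = 1} \chi(x),\qquad x \in \Z_p^*.
\]
Substituting into $T(t)$ and changing variables $x \mapsto xt^{-1}$ in the inner sum yields
\[
T(t) = \frac{d}{p-1}\Bigl(\tau(\chi_0) + \sum_{\chi \neq \chi_0,\, \chi|_S = 1} \bar\chi(t)\,\tau(\chi)\Bigr),
\]
where $\tau(\chi) := \sum_{x \in \Z_p^*} \chi(x) e^{2\pi \I\, x/p}$. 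Using $\tau(\chi_0) = -1$ and $|\tau(\chi)| = \sqrt{p}$ for every nontrivial character, the triangle inequality together with the count of $(p-1)/d - 1$ nontrivial contributing characters yields
\[
|T(t)| \leq \frac{d}{p-1} + \Bigl(1 - \frac{d}{p-1}\Bigr)\sqrt{p} \leq \sqrt{p},
\]
completing the reduction. The only nontrivial ingredient is the Gauss-sum identity $|\tau(\chi)| = \sqrt{p}$, which I would invoke as a black box; beyond that, the hypothesis $d > \sqrt{p}$ enters only to make the resulting incoherence $\sqrt{p}/d$ strictly less than one and hence meaningful.
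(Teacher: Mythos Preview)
Your proposal is correct and follows the same construction as the paper: take the rows of the DFT matrix indexed by the order-$d$ multiplicative subgroup of $\Z_p^\ast$, and bound the column inner products by the exponential sum $\bigl|\sum_{a\in G} e^{2\pi\I at/p}\bigr|\le\sqrt{p}$. The paper simply cites this bound as a classical fact (referencing Kurlberg), whereas you supply a self-contained proof via the Gauss-sum identity $|\tau(\chi)|=\sqrt{p}$; conversely, the paper spells out the step you gloss over, namely that factoring $p-1$ and locating a primitive root can be done in $\tilde O(\sqrt{p})\subset O(d\log p)$ time (using the sieve of Eratosthenes and Burgess's bound on the least primitive root), which is where the hypothesis $d>\sqrt{p}$ actually enters the runtime analysis.
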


This result could give (depending on the factorization of $p-1$) a better polynomial dependence of $m$ on $k$ in the high-sparsity regime. If $p-1$ has a large divisor about $p^{1-\gamma}$, this would yield a matrix with sparsity parameter $k \approx p^{\gamma}$ and $m \approx k^{1/\gamma-1}$ rows. For example, when $\gamma=1/4$, we obtain $k\approx p^{1/4}$ and $m \approx k^3$, which cannot be obtained from Theorem~\ref{thm:fourier_strongly_explicit}. In general, Theorem~\ref{thm:fourier_incoherent_subgroup} will yield useful matrices as long as $p-1$ has divisors in the range $[\sqrt{p}, p-1]$, ideally as many as possible. An extreme case is Fermat primes, which have $(\log p)/2$ divisors in the aforesaid interval.

The reader might ask the question if the polynomial dependence of $k$ on $p$ is necessary; ideally one would like a logarithmic dependence, since the polynomial dependence is interesting only in the high-sparsity regime. Regarding strongly explicit constructions, we provide some evidence why this might be a very hard problem in the remark below. 

\begin{remark}\label{rem:justification}
The inferiority of our bounds in the low-sparsity regime is justifiable to some extent: it is because of a common obstacle that has persisted more than a century in the theory of exponential sums, due to the lack of techniques to account for sparse character sums (either additive or multiplicative). In general, the fewer summands the sum has, the harder it is to prove a tight cancellation bound. Thus, owing to the use of heavy machinery from analytic number theory and more specifically the theory of exponential sums over finite fields, our bounds for strongly explicit constructions are quite suboptimal.
\end{remark}

\subsection{Comparing $\ell_2/\ell_1$ with $\ell_{\infty}/\ell_1$} \label{sec:comparison}

In this subsection we elaborate why $\ell_\infty/\ell_1$ is much stronger than $\ell_2/\ell_1$, and not just a guarantee that implies $\ell_2/\ell_1$. Let $\gamma < 1$ be a constant and consider the following scenario. There are three sets $A,B,C$ of size $\gamma k, (1-\gamma)k$, $n - k$ respectively, and for every $i\in A$ we have $|\wh{x}_i| = \frac{2}{k} \|\wh{x}_{C} \|_1 = \frac{2}{k} \|\wh{x}_{-k}\|_1$, while every coordinate in $B$ and $C$ has the equal magnitude. It follows immediately that
\[
\|\wh{x}_C\|_1 = \frac{n - k  }{n-\gamma k } \|\wh{x}_{B \cup C}\|_1.
\]

Now assume that $k\leq \gamma n$, then $(n-\gamma k)/(n-k)\leq 1+\gamma$. We claim that the zero vector is a valid solution for the $\ell_2/\ell_1$ guarantee, since 
\begin{align*}
\|\vec 0 - \wh{x}\|_2^2  &= \|\wh{x}_A\|_2^2 + \|\wh{x}_{B\cup C}\|_2^2\\
&\leq \gamma k\cdot \frac{4}{k^2} \|\wh{x}_{-k}\|_1^2 + \frac{1}{(n-\gamma k)} \|\wh{x}_{B\cup C}\|_1^2\\
&\leq \frac{4\gamma}{k} \|\wh{x}_{-k}\|_1^2 + \frac{n-\gamma k}{(n-k)^2}\|\wh{x}_C\|_1^2 \\
&\leq \left(\frac{4\gamma}{k} +  \frac{1+\gamma}{n-k}\right)\|\wh{x}_{-k}\|_1^2 \\
&\leq \frac{5\gamma}{k} \|\wh{x}_{-k}\|_1^2,
\end{align*}
where the last inequality follows provided it further holds that $k\leq \gamma n /(2\gamma + 1)$. Hence when $\gamma\leq 1/5$, we see that the zero vector satisfies the $\ell_2/\ell_1$ guarantee.

Since $\vec 0$ is a possible output, we may not recover any of the coordinates in $S$, which is the set of ``interesting'' coordinates. 
On the other hand, the $\ell_{\infty}/\ell_1$ guarantee does allow the recovery of \textbf{every} coordinate in $S$. 
This is a difference of recovering all $\gamma k$ versus $0$ coordinates. We conclude from the discussion above that in the case of too much noise, the $\ell_2/\ell_1$ guarantee becomes much weaker than the $\ell_\infty/\ell_1$, possibly giving meaningless results in some cases.

\section{Overview}

\paragraph{Sparse Fourier Transform Algorithms (Subsection~\ref{subsec:sft}).}
We first show how to achieve the for-all schemes, i.e., schemes that allow universal reconstruction of all vectors, and then derandomize them. Similarly to the previous works \cite{hikp12b,ik14,k17}, our algorithm hashes, with the filter in \cite{k17}, the spectrum of $x$ to $O(k)$ buckets using pseudorandom permutations, and repeat $O(k \log n)$ times with fresh randomness. The main part of the analysis is to show that for any vector $\hat{x} \in \mathbb{C}^n$ and any set $S \subseteq [n]$ with $|S| \leq k$, each $i\in S$, in a constant fraction of the repetitions, receives ``low noise'' from all other elements, under the pseudorandom permutations. This will boil down to a set of $\Theta(n^2)$ inequalities involving the filter and the pseudorandom permutations. We prove these inequalities with full randomness (Lemma~\ref{lem:initial_constraint}), and then derandomize the pseudorandom permutations using the method of conditional expectations (Lemma~\ref{lem:derandomization_step}). This will give us Theorem~\ref{thm:intro_main_1}. To do so, we choose the pseudorandom permutations one at a time, repetition by repetition, and keep an (intricate) pessimistic estimator (Lemma~\ref{lem:pessimistic}), which we update accordingly. Our argument extends the arguments in~\cite{nnw14} and~\cite{porat2008explicit}, and could be of independent interest. To compare with~\cite{nnw14} we have the following observation. The construction in \cite{nnw14} consists of $O(k\log n)$ matrices, joined vertically, each having $O(k)$ rows and exactly one $1$ per column. This ensures a small incoherence of the concatenated matrix and gives the $\ell_\infty/\ell_1$ guarantee. In the Fourier case, the convolution with the filter functions behaves analogously: instead of having exactly one non-zero element, each  column in the $\ell$-th matrix has a contiguous segment of $1$s of size $\approx n/k$ (where the center of that segment  depends on the choice of the $\ell$-th pseudorandom permutation) and polynomially decaying entries away from this segment. Moreover, the positions of the segments across the columns are not fully independent and are defined via the pseudorandom permutations in Definition~\ref{def:pseudopermute}. We show that even in this more restricted setting, derandomization is possible in polynomial time. Several details are omitted in the preceding high-level discussion and we suggest the reader look at the corresponding sections for the complete argument.

The sublinear-time algorithm (Theorem~\ref{thm:intro_main_2}) is obtained by bootstrapping the derandomized scheme above with an identification procedure in each bucket, as most previous algorithms have done (e.g.~\cite{hikp12a}). The major difference is that our identification procedure needs to be deterministic. We show an explicit set of samples that allow the implementation of the desired routine. To illustrate our idea, let us focus on the following $1$-sparse case: $\hat{x}\in \mathbb{C}^n$ and  $|\wh{x}_{i^*}| \geq 3 \|\wh{x}_{[n]\setminus i^\ast}\|_1$ for some $i^*$, which we want to locate. Let 
\[	
	\theta_j = \left( \frac{2\pi}{n} j \right) \mathrm{mod} ~ 2\pi,
\] 
and consider the $\log n  $ samples $x_0, x_1, x_2, x_4,\ldots,x_{2^{r-1}},\dots$.

Observe that (ignoring $1/\sqrt{n}$ factors)

 	\[	x_{\beta} = \wh{x}_{i^\ast} e^{\sqrt{-1} \beta \theta_{i^\ast}} + \sum_{ j \neq i^\ast} \wh{x}_j e^{\sqrt{-1} \beta \theta_j},	\]
we can find $\beta \theta_{i^*} + \arg \wh{x}_{i^\ast}$ up to $\pi/8$, just by estimating the phase of $x_{\beta}$ and Proposition~\ref{fact:phase_est}. Thus we can estimate $\beta\theta_{i^\ast}$ up to $\pi/4$ from the phase of $x_{\beta}/x_0$.  If $i^\ast \neq j$, then there exists a $\beta \in \{1,2,2^2, \dots, 2^{r-1}, \ldots\}$ such that $|\beta \theta_{i^*} - \beta \theta_j|_{\circ} > \pi/2$, and so $\beta \theta_j$ will be more than $\pi/4$ away from the phase of the measurement. Thus, by iterating over all $j \in [n]$, we keep the index $j$ for which $\beta \theta_j$ is within $\pi/4$ from $\arg(x_{\beta}/x_0)$, for every $\beta$ that is a power of $2$ in $\mathbb{Z}_n$.

Unfortunately, although this is a deterministic collection of $O(\log n)$ samples, the above argument gives only $O(n \log n)$ time. For sublinear-time decoding we use $x_1/x_0$ to find a sector $S_0$ of the unit circle of length $\pi/4$ that contains $\theta_{i^*}$. Then, from $x_2/x_0$ we find two sectors of length $\pi/8$ each, the union of which contains $\theta_{i^*}$. Because these sectors are antipodal on the unit circle, the sector $S_0$ intersects exactly one of those, let the intersection be $S_1$. The intersection is a sector of length at most $\pi/8$. Proceeding iteratively, we halve the size of the sector at each step, till we find $\theta_{i^*}$, and infer $i^*$. Plugging this idea in the whole $k$-sparse recovery scheme yields the desired result. Our argument crucially depends on the fact that in the $\ell_1$ norm the phase of $\theta_{i^\ast}$ will always dominate the phase of all samples we take. 

\paragraph{Incoherent Matrices from the Fourier ensemble (Subsection~\ref{subsec:incoherent}).}

Our first result for incoherent matrices (Theorem~\ref{thm:incoherent1}) is more general and works for any matrix that has orthonormal columns with entries bounded by $O(1/\sqrt{n})$. We subsample the matrix, invoke a Chernoff bound and Bernstein's inequality to show the small incoherence of the subsampled matrix. We follow a derandomization procedure which essentially mimics the proof of Bernstein's inequality, keeping a pessimistic estimator which corresponds to the sum of the generating functions of the probabilities of all events we want to hold, evaluated at specific points. We obtain an explicit construction, i.e. a derandomization in $\mathrm{poly}(n)$ time. This argument could be of independent interest for its generality. As there are many technical obstacles to overcome, we suggest the reader take a careful look at the proof to gain a clearer picture of the argument.
 
Our next results (Theorem~\ref{thm:fourier_strongly_explicit} and Theorem~\ref{thm:fourier_incoherent_subgroup}) construct \emph{strongly explicit} incoherent matrices by making use of technology from the fruitful theory of exponential sums in analytic number theory and additive combinatorics. Roughly speaking, to bound a complex exponential sum over a set $S$, one would expect that specific choices of the set $S$ lead to non-trivial bounds, i.e. $o(|S|)$, since cancellation takes place in the summation. Ideally, one would desire that the exponentials behave like a random walk and give the optimal cancellation of $O(\sqrt{|S|})$. This intuition is clearly not true, but the results by Weyl and others show that certain sets $S$ can exhibit a nicer behaviour. We exploit their results to build incoherent matrices by taking the rows of the DFT matrix indexed by the ``nice'' sets. This connection also yields an immediate improvement on the lower bound of an exponential sum obtained by Winterhof~\cite{winterhof2001}.

\section{Technical Toolkit}

\subsection{Hash Functions}
\begin{definition}[Frequency domain hashings $\pi,h,o$] \label{def:pho}
Given $\sigma,b \in [n]$, we define a function $\pi_{\sigma,b} : [n] \rightarrow [n]$ to be
$\pi_{\sigma,b} (f) = \sigma ( f - b ) \pmod n$ for all $f \in [n]$. Define a hash function $h_{\sigma,b} : [n] \rightarrow [B]$ as
$h_{\sigma,b}(f) = \operatorname{round} ( (B/n) \pi_{\sigma,b}(f)  )$
and the off-set functions $o_{f,\sigma,b} : [n] \rightarrow [n/B]$ as $o_{f,\sigma,b}(f') = \pi_{\sigma,b}(f') - (n/B) h_{\sigma,b}(f)$. 
When it is clear from context, we will omit the subscripts $\sigma,b$ from the above functions.
\end{definition}

In what follows, we might use the notation $H = (\sigma,a,b)$ to denote a tuple of values along with the associated hash function from Definition \ref{def:pho}. Below we define a pseudorandom permutation in the frequency domain.

\begin{definition}[$P_{\sigma,a,b}$] \label{def:pseudopermute}
Suppose that $\sigma^{-1}\mod n$ exists. For $a,b \in [n]$, we define the pseudorandom permutation $P_{\sigma,a,b}$ by
$(P_{\sigma,a,b} x)_t = x_{\sigma (t-a)} \omega^{t \sigma b}$.
\end{definition}

\begin{proposition}[{\cite[Claim 2.2]{hikp12a}}]
$(\wh{ P_{\sigma,a,b} x })_{ \pi_{\sigma,b} ( f ) } = \wh{x}_f \omega^{a \sigma f}$.
\end{proposition}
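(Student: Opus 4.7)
The plan is to verify the identity by direct computation from the definitions of the DFT and of $P_{\sigma,a,b}$, so this is really an algebraic exercise rather than a conceptual one. Write $\omega = e^{2\pi\I/n}$, and expand the left-hand side using the DFT formula $\hat{y}_j = \frac{1}{\sqrt{n}} \sum_t \omega^{tj} y_t$ applied to $y = P_{\sigma,a,b}x$. Substituting the definition $(P_{\sigma,a,b}x)_t = x_{\sigma(t-a)} \omega^{t\sigma b}$ gives
\[
(\wh{P_{\sigma,a,b}x})_j = \frac{1}{\sqrt{n}} \sum_{t \in [n]} x_{\sigma(t-a)}\, \omega^{t(j+\sigma b)}.
\]

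The key step is the change of index $s = \sigma(t-a) \bmod n$. Since $\sigma$ is invertible mod $n$, the map $t \mapsto s$ is a bijection on $[n]$, and its inverse is $t = \sigma^{-1}s + a$. Substituting, the exponent $t(j+\sigma b)$ becomes $\sigma^{-1}s(j+\sigma b) + a(j+\sigma b)$, and the factor $\omega^{a(j+\sigma b)}$ pulls out of the sum. What remains inside is exactly $\hat{x}_{\sigma^{-1}(j+\sigma b)}$, so
\[
(\wh{P_{\sigma,a,b}x})_j = \omega^{a(j+\sigma b)} \cdot \hat{x}_{\sigma^{-1}(j+\sigma b)}.
\]

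Finally, I would specialize to $j = \pi_{\sigma,b}(f) = \sigma(f-b)$. Then $j + \sigma b \equiv \sigma f \pmod n$, so $\sigma^{-1}(j+\sigma b) \equiv f$ and the exponent $a(j+\sigma b)$ becomes $a\sigma f$, yielding $(\wh{P_{\sigma,a,b}x})_{\pi_{\sigma,b}(f)} = \hat{x}_f \,\omega^{a\sigma f}$, as required. I do not expect any obstacle beyond bookkeeping; the only point to be slightly careful about is that all exponents and indices are interpreted modulo $n$, which is fine because $\omega^n = 1$ and because $\sigma^{-1}$ exists by the standing assumption in Definition~\ref{def:pseudopermute}.
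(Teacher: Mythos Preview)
Your computation is correct; the paper itself does not supply a proof of this proposition but simply cites it from \cite{hikp12a}, and your direct change-of-variable argument is exactly the standard verification one would expect there.
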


\begin{definition} [Sequence of Hashings]
A sequence of $d$ hashings is specified by $d$ tuples $\{ (\sigma_r,a_r,b_r) \}_{r \in [d]}$. For a fixed $r \in [d]$, we will also set $\pi_r , h_r, o_r$ to be the functions defined in Definition \ref{def:pho}, and $P_r$ to be the pseudorandom permutation defined in Definition \ref{def:pseudopermute}, by setting $a = a_r, b = b_r, \sigma= \sigma_r$.
\end{definition}

\subsection{Filter Functions}
\begin{definition}[Flat filter with $B$ buckets and sharpness $F$~\cite{k17}]\label{def:filter_G}
A sequence $\wh{G} \in \R^n$ symmetric about zero with Fourier transform $G \in \R^n$ is called a flat filter with $B$ buckets and sharpness $F$ if \\
(1) $\wh{G}_f \in [0,1]$ for all $f\in [n]$; \\
(2) $\wh{G}_f \geq 1 - (1/4)^{F-1}$ for all $f \in [n]$ such that $|f| \leq \frac{n}{2B}$; \\
(3) $\wh{G}_f \leq (1/4)^{F-1} ( \frac{n}{B |f|} )^{F-1}$ for all $f \in [n]$ such that $|f| \geq \frac{n}{B}$.
\end{definition}

\begin{lemma}[Compactly supported flat filter with $B$ buckets and sharpness $F$ \cite{k17}]
Fix the integers $(n,B,F)$ with $n$ a power of two, integers $B < n$, and $F \geq 2$ an even integer. There exists an $(n,B,F)$-flat filter $\wh{G} \in \R^{n}$, whose inverse Fourier transform $G$ is supported on a length-$O(FB)$ window centered at zero in time domain.
\end{lemma}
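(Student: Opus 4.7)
The plan is to construct $\hat G$ as the frequency-domain convolution of a rectangular indicator with a smoothed bump, so that in the time domain the filter becomes a pointwise product of a sinc-like function with a compactly supported B-spline.

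Fix $w=2B$ and let $\psi\in\R^n$ be the normalized indicator of $[-w/2,w/2]$ in time, so that $\hat\psi$ is the discrete Dirichlet kernel satisfying $|\hat\psi(f)|\le\min\{1,\, n/(2w|f|)\}$. Because $F$ is even, the $F$-fold time-domain convolution $\phi:=\psi^{*F}$ has support of length $Fw=O(FB)$ and non-negative Fourier transform $\hat\phi=\hat\psi^{F}$ with $\hat\phi(f)\le (n/(2w|f|))^{F}$. Let $\hat R$ be the indicator of a symmetric frequency window of width $W=\Theta(n/B)$, chosen large enough that for every $|f|\le n/(2B)$ the shifted window $[f-W/2,\,f+W/2]$ contains the central region of $\hat\phi$ where essentially all of its mass lives. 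Define
\[
\hat G(f):=\frac{(\hat R*\hat\phi)(f)}{\|\hat\phi\|_1},
\]
which by duality equals $G(t)=R(t)\cdot\phi(t)/\|\hat\phi\|_1$ in time, where $R$ is a sinc-like inverse transform with unbounded support. Since $\phi$ is supported on $[-Fw/2,Fw/2]$, the pointwise product $G$ is as well, giving the desired $O(FB)$ time support.

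It remains to verify the three defining properties. Property (1) is immediate from the non-negativity of $\hat R$ and $\hat\phi$ and the chosen normalization. For the tail bound (3), when $|f|\ge n/B$ the integrand $\hat R(f-f')\hat\phi(f')$ is supported on $|f'|\ge |f|-W/2=\Omega(|f|)$, so integrating the polynomial tail of $\hat\phi$ over this region gives
\[
\hat G(f)\le \frac{1}{\|\hat\phi\|_1}\int_{|f'|\ge c|f|}\Bigl(\frac{n}{2w|f'|}\Bigr)^{F}df' \le (1/4)^{F-1}\Bigl(\frac{n}{B|f|}\Bigr)^{F-1},
\]
using $w=2B$ (so $n/(2w|f|)=1/4$ at $|f|=n/B$) together with the fact that integration against the $F$-th power kernel loses one factor. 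For the flat passband (2), when $|f|\le n/(2B)$ the shifted window $\hat R(f-\cdot)$ captures the entire main lobe of $\hat\phi$; the missed mass is the tail of $\hat\phi$ beyond $|f'|\gtrsim n/B$, which by the same estimate is at most $(1/4)^{F-1}\|\hat\phi\|_1$, yielding $\hat G(f)\ge 1-(1/4)^{F-1}$.

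The main technical obstacle is matching the numerical constants $(1/4)^{F-1}$ exactly, rather than merely obtaining $c^{-F}$-type decay for some $c>1$. The key input is the Dirichlet-kernel bound $|\hat\psi(f)|\le n/(2w|f|)$, which with $w=2B$ yields the value $1/4$ precisely at the threshold frequency $|f|=n/B$; raising to the $F$-th power and accounting for the one factor lost to integration produces the advertised $(1/4)^{F-1}$-type decay in both the passband and the stopband simultaneously. The evenness of $F$ is essential, since it guarantees $\hat\phi=\hat\psi^F\ge 0$ and hence preserves the sign structure throughout the argument.
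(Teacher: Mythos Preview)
The paper does not supply its own proof of this lemma; it is quoted verbatim from \cite{k17} and used as a black box. So there is no in-paper argument to compare against.

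Your construction is the standard one underlying the cited result: a time-domain boxcar self-convolved $F$ times yields a B-spline with support $O(FB)$ and frequency response $\hat\psi^{F}$ with $\text{sinc}^{F}$-type decay, and a further frequency-domain convolution with a rectangular window flattens the passband while, by duality, only multiplying in time and hence preserving the compact support. This is exactly the approach in Kapralov's paper (and earlier in \cite{hikp12a}), so at the level of ideas your proposal matches the source.

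Where your sketch is thin is in pinning down the constants. The assertion that ``integration against the $F$-th power kernel loses one factor'' is the crux of obtaining $(1/4)^{F-1}$ rather than $(1/4)^{F}$ or some unspecified $c^{-F}$, and it deserves an explicit computation: you need to sum $\sum_{|f'|\ge c|f|} (n/(2w|f'|))^{F}$ and compare to $\|\hat\phi\|_1$, and both of these quantities require care in the discrete setting. Likewise, the passband bound hinges on the precise choice of $W$ relative to $n/B$, which you leave as $\Theta(n/B)$; the exact constant there interacts with the constant in the tail estimate. None of this is a genuine obstacle---the calculation goes through---but as written the proposal establishes the qualitative shape of the filter and the support claim, while the quantitative constants in Definition~\ref{def:filter_G} are asserted rather than derived.
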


\begin{lemma}[{\cite[Lemma 3.6]{hikp12b}}, {\cite[Lemma 2.4]{hikp12a}}, {\cite[Lemma 3.2]{ik14}}]
Let $f,f' \in [n]$. Let $\sigma$ be uniformly random odd number between $1$ and $n-1$. Then for all $d \geq 0$ we have $\Pr[ |\sigma(f-f')|_{\circ} \leq d ] \leq 4 d /n$.
\end{lemma}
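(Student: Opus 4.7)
The plan is to reduce the probability computation to understanding the distribution of $\sigma\Delta \bmod n$, where $\Delta := (f-f') \bmod n$, viewed as an element of $\mathbb{Z}/n\mathbb{Z}$. If $\Delta = 0$ the statement is vacuous (since $f=f'$ makes $\sigma(f-f')=0$ and then $4d/n \geq 0$ would need handling, but this case is excluded from meaningful use). So assume $\Delta \neq 0$, and use the fact that $n$ is a power of $2$ to factor $\Delta = 2^s m$ with $m$ odd and $0 \le s < \log_2 n$.

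Next, I would characterize exactly the distribution of $\sigma \Delta \bmod n$ when $\sigma$ is uniform over the $n/2$ odd integers in $[1,n-1]$. Since $m$ is odd and $n$ is a power of $2$, $m$ is a unit in $\mathbb{Z}/n\mathbb{Z}$, so multiplication by $m$ permutes the odd residues; hence $\sigma m \bmod n$ is uniformly distributed over odd residues modulo $n$. Writing $\sigma \Delta = 2^s(\sigma m) \bmod n$, the map $v \mapsto 2^s v \bmod n$ on odd $v$ has each fiber of size $2^s$ (since $2^s v_1 \equiv 2^s v_2 \pmod n$ iff $v_1 \equiv v_2 \pmod{n/2^s}$). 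Therefore $\sigma\Delta \bmod n$ is uniform on the set
\[
T_s = \{\, 2^s u \bmod n \,:\, u \text{ odd},\ 1 \le u \le n/2^s - 1 \,\},
\]
which has exactly $n/2^{s+1}$ elements, arithmetically spaced by $2^{s+1}$ around the cycle $\mathbb{Z}/n\mathbb{Z}$.

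With this structural fact in hand, the probability reduces to counting the elements of $T_s$ lying in the circular disc $D_d := \{y \in \mathbb{Z}/n\mathbb{Z} : |y|_\circ \le d\}$. On each side of $0$, the elements of $T_s$ within distance $d$ are $2^s, 3\cdot 2^s, 5 \cdot 2^s, \dots$, so their count on one side is at most $\lceil (d/2^s)/2 \rceil \le (d/2^s + 1)/2$, giving a total of at most $d/2^s + 1$ elements of $T_s$ in $D_d$. Dividing by $|T_s| = n/2^{s+1}$ yields
\[
\Pr[\,|\sigma \Delta|_\circ \le d\,] \;\le\; \frac{d/2^s + 1}{n/2^{s+1}} \;=\; \frac{2d + 2^{s+1}}{n}.
\]

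Finally, I would split into two regimes. If $d \ge 2^s$, then $2^{s+1} \le 2d$ and the bound becomes $4d/n$ as required. If $d < 2^s$, then $D_d$ contains no nonzero multiple of $2^s$, so no element of $T_s$ lies in $D_d$ and the probability is $0 \le 4d/n$. No step is really an obstacle; the only subtle point is justifying the uniform distribution on $T_s$, which hinges on $n$ being a power of $2$ so that oddness is exactly invertibility modulo $n$. The computation is essentially the same as the one in the cited references and admits this clean two-case presentation.
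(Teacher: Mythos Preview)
The paper does not supply a proof of this lemma; it is quoted from \cite{hikp12b,hikp12a,ik14} and used as a black box. Your argument is correct and is essentially the standard proof given in those references: write $\Delta=2^s m$ with $m$ odd, observe that $\sigma\Delta$ is uniform on the $n/2^{s+1}$ odd multiples of $2^s$, and count how many of those fall within circular distance $d$ of the origin. The two-regime split ($d\ge 2^s$ versus $d<2^s$) is exactly how the cited proofs dispose of the additive ``$+1$'' in the count, so there is nothing materially different here. Your remark that the case $f=f'$ must be excluded is also accurate; the lemma as stated is only nontrivial (and only true) for $f\neq f'$, which is the only way it is ever invoked.
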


\subsection{Formulas for Estimation}

\begin{definition}[Measurement]
For a signal $\wh{x} \in \C^n$, a hashing $H = (\sigma,a,b)$, integers $B$ and $F$, a measurement vector $m_{H}\in \C^B $ is the $B$-dimensional complex-valued vector such that
 \begin{align*}
(m_H)_s = \sum_{f \in [n]} \wh{G}_{\pi(f) - (n/B) \cdot s} \omega^{a \sigma f} \cdot \wh{x}_f  \in \C
\end{align*}
for $ s \in [B]$. Here $\wh{G}$ is a filter with $B$ buckets and sharpness $F$ constructed in Definition~\ref{def:filter_G}. \end{definition}

The following lemma provides a \textsc{HashToBins} procedure, which computes the bucket values of the residual $\hat x - \hat z$, where $\hat z$ is also provided as input.

\begin{lemma}[\textsc{HashToBins}~{\cite[Lemma 2.8]{k17}}]\label{lem:hashtobins}
Let $H = (\sigma, a, b)$ and parameters $B,F$ such that $B$ is a power of $2$, and $F$ is an even integer. There exists a deterministic procedure \textsc{HashToBins}$(x, \wh{z}, H)$ which computes $u \in \C^B$ such that for any $f \in [n]$,
\begin{align*}
u_{h(f)} = \Delta_{h(f)} + \sum_{f' \in [n]} \wh{G}_{o_f(f')} ( \wh{x} - \wh{z} )_{f'} \omega^{a \sigma f'} ,
\end{align*}
where $\wh{G}$ is the filter defined in Definition~\ref{def:filter_G}, and  $\Delta_{h(f)}$ is a negligible error term satisfying $|\Delta_{h(f)}|\leq \|z\|_2\cdot n^{-c}$ for $c>0$ an arbitrarily large absolute constant. It takes $O(BF)$ samples, and $O(F \cdot B \log B + \| \wh{z} \|_0 \cdot \log n)$ time.
\end{lemma}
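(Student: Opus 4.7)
The plan is to exhibit a concrete filter–permute–subsample procedure and verify by Fourier algebra that its output matches the target bucket formula; the subtraction of $\hat{z}$ is then handled by a truncated ``bucket-space'' update whose leakage is absorbed into $\Delta$. The procedure: form the permuted signal $\tilde x_t=(P_{\sigma,a,b}x)_t=x_{\sigma(t-a)}\omega^{t\sigma b}$, which requires only samples of $x$, then the filtered signal $y_t=G_t\tilde x_t$ in the time domain. Since $G$ is supported on an $O(FB)$-length window about zero, $y$ has $O(FB)$ nonzeros and uses $O(FB)$ samples of $x$. Define $v\in\C^{B}$ by the length-$B$ aliasing $v_s=\sum_{j\geq 0} y_{s+jB}$, costing $O(FB)$ operations, and output $u=\hat v$, the $B$-point DFT, computed in $O(B\log B)$ time.

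To verify the identity in the case $\hat z=0$, the standard aliasing identity gives $\hat v_k \propto \hat y_{(n/B)k}$, where $\hat y$ is the length-$n$ DFT. Combining the Fourier convolution theorem $\hat y=\hat G\ast_n \widehat{Px}$ with the stated spectrum-permutation identity $(\widehat{Px})_{\pi(f')}=\hat x_{f'}\omega^{a\sigma f'}$ and the symmetry $\hat G_{-t}=\hat G_t$, one obtains
\[
\hat v_{h(f)} \;=\; \sum_{f'\in[n]}\hat G_{(n/B)h(f)-\pi(f')}\,\hat x_{f'}\,\omega^{a\sigma f'} \;=\; \sum_{f'\in[n]}\hat G_{o_f(f')}\,\hat x_{f'}\,\omega^{a\sigma f'},
\]
where the last equality uses $o_f(f')=\pi(f')-(n/B)h(f)$ together with the symmetry of $\hat G$. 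This is exactly the target formula with $\Delta=0$.

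To handle $\hat z$, a naive computation of $z$ on the $O(FB)$-sized support of $G$ by inverse DFT would cost $O(FB\cdot\|\hat z\|_0)$, exceeding the target budget. Instead, update $u$ directly in bucket space: for each $f'\in\supp(\hat z)$, subtract $\hat G_{o_f(f')}\hat z_{f'}\omega^{a\sigma f'}$ only from the $O(\log n)$ buckets whose centers lie within $O((n/B)\log n)$ of $\pi(f')$. By property~(3) of Definition~\ref{def:filter_G}, choosing the sharpness $F$ to be a large enough constant multiple of $\log n$ drives the filter value below $n^{-c'}$ outside this window, so the per-spike residual is at most $|\hat z_{f'}|n^{-c'}$ per bucket. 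Summing over $f'\in\supp(\hat z)$ and invoking $\|\hat z\|_1\leq \sqrt{n}\,\|z\|_2$ (Parseval plus Cauchy--Schwarz) bounds $|\Delta_{h(f)}|\leq \|z\|_2\cdot n^{-c}$ for any desired constant $c$. The update phase costs $O(\|\hat z\|_0\log n)$, yielding total time $O(FB\log B+\|\hat z\|_0\log n)$ as claimed. The main obstacle is this last step: calibrating the sharpness parameter and truncation radius so that the leakage is polynomially small uniformly across all buckets while keeping the per-spike work at $O(\log n)$; by contrast, the convolution/aliasing algebra and the sample/time accounting are essentially standard DFT manipulations once the procedure is set up.
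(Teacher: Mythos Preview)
The paper does not prove this lemma at all; it is imported from \cite[Lemma~2.8]{k17}. Your treatment of the $\hat z=0$ case---permute, multiply by the compactly supported $G$, alias to length $B$, then take a $B$-point DFT---is exactly the standard construction, and the convolution/aliasing algebra you wrote is correct.

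The gap is in your handling of the $\hat z$ subtraction. You propose to update, for each spike $f'\in\supp(\hat z)$, only the $O(\log n)$ buckets nearest $\pi(f')$, and to control the leakage by the decay of $\hat G$. But Definition~\ref{def:filter_G} gives only $\hat G_f\le (n/(4B|f|))^{F-1}$, so to force this below $n^{-c}$ at distance $O((n/B)\log n)$ you are compelled---as you explicitly say---to take $F$ a constant multiple of $\log n$. The lemma, however, is stated for \emph{any} even $F$, and the paper in fact uses $F=O(1)$: the proof of Theorem~\ref{thm:intro_main_1} asserts ``$O(k)$ measurements (Lemma~\ref{lem:hashtobins})'' per call, i.e.\ $BF=O(k)$ with $B=\Theta(k)$. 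With constant $F$ your truncated update leaves a \emph{constant} residual per bucket, not $n^{-c}$, so the proposed mechanism for $\Delta$ fails precisely in the regime the paper needs.

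The device actually used in \cite{k17} (and in \cite{hikp12a,ik14}) is different: one stays in the time domain for $\hat z$ as well, computing $(P_{\sigma,a,b}z)_t$ on the $O(FB)$-point support of $G$ via a \emph{semi-equispaced} inverse FFT. Given an $s$-sparse $\hat z$, its inverse DFT on an arithmetic progression of length $m=O(FB)$ in $[n]$ can be evaluated to additive accuracy $n^{-c}$ in time $O(m\log m+s\log n)$ by convolving each spike with a short auxiliary window of width $O(\log n)$, aliasing to a length-$O(m)$ grid, and running one FFT of that size. The error $\Delta_{h(f)}$ then comes from this auxiliary approximation, which is $n^{-c}$ regardless of $F$. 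This is also why the (commented-out) \textsc{HashToBins} pseudocode in the source uses a second filter $\hat{G'}$ rather than $\hat G$ itself in the subtraction step.
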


We shall ignore the $\Delta_{h(f)}$ term in the proof of correctness of our algoriths, since it will be negligible and won't affect the analysis.
For a hashing $H = (\sigma, a, b)$, values $B,F$, and the associated measurement $m_H$, one has 

\begin{equation}\label{eqn:bucket}
\wh{G}_{o_f(f)}^{-1} (m_H)_{h(f)} \omega^{ - a \sigma f } = \wh{x}_f + \underbrace{ \wh{G}_{ o_f(f) }^{-1} \sum_{f' \in [n] \backslash \{f\} } \wh{G}_{o_f(f')} \wh{x}_{f} \omega^{a \sigma (f'-f)} }_{ \text{noise~term} }.
\end{equation}

The following is a basic fact of complex numbers, which will be crucially used in our sublinear-time algorithm, for estimating the phase of a heavy coordinate.
\begin{proposition}\label{fact:phase_est}
Let $x,y\in \C$ with $|y|\leq |x|/3$, then $|\arg(x+y)-\arg x| \leq \pi/8$.
\end{proposition}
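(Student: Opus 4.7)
}
The statement is invariant under multiplying both $x$ and $y$ by a unit complex number $e^{-\i\arg x}$: this leaves $|x|$, $|y|$, and the difference $\arg(x+y)-\arg x$ unchanged. So the plan is to first reduce, without loss of generality, to the case $\arg x = 0$, i.e., $x$ is a positive real. Then the quantity to bound becomes $|\arg(x+y)|$ under the constraint $|y| \le x/3$.

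Geometrically, $x+y$ ranges over the closed disk $D$ of radius $|y|$ centered at $x$ on the real axis. The angle $|\arg(x+y)|$ is then the angle, as seen from the origin, between the positive real axis and the ray to the point $x+y \in D$. The maximum of this angle over $D$ is attained along a tangent line from the origin to the boundary circle of $D$. A standard right-triangle computation (the tangent length, the radius $|y|$, and the hypotenuse $|x|$) gives
\[
    \max_{x+y\in D} |\arg(x+y)| = \arcsin\!\left(\frac{|y|}{|x|}\right) \le \arcsin\!\left(\tfrac{1}{3}\right).
\]
It therefore remains to verify the numerical inequality $\arcsin(1/3) \le \pi/8$, equivalently $\sin(\pi/8) \ge 1/3$. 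Using the half-angle identity,
\[
    \sin\!\left(\tfrac{\pi}{8}\right) = \sqrt{\frac{1-\cos(\pi/4)}{2}} = \frac{\sqrt{2-\sqrt{2}}}{2},
\]
and since $2-\sqrt{2} > 4/9$ (because $\sqrt{2} < 14/9$, as $(14/9)^2 = 196/81 > 2$), we get $\sin(\pi/8) > 1/3$, completing the bound.

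The proof is essentially elementary, so there is no real obstacle; the only subtle point is making sure the ``worst-case $y$'' reasoning is justified rigorously, which one can do either with the tangent-line argument above or, equivalently, by writing $x=|x|$, $y = y_1 + \i y_2$ with $y_1^2+y_2^2 \le |x|^2/9$ and observing that $\Re(x+y) \ge 2|x|/3 > 0$ so $\arg(x+y) = \arctan(y_2/(x+y_1))$, then maximizing the latter over the admissible disk by Lagrange multipliers (the optimum occurs when $y$ is on the boundary circle and $y \perp (x+y)$, yielding again $\sin|\arg(x+y)| = |y|/|x|$). Either route closes the argument in a few lines.
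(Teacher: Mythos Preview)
Your proof is correct and follows the same geometric idea as the paper's one-line argument. The paper asserts that the worst case occurs when $y$ is orthogonal to $x$, giving the bound $\arctan(1/3) < \pi/8$; in fact, as you correctly identify, the true maximum over the disk is $\arcsin(1/3)$, attained when $x+y \perp y$ (the tangent configuration), which is slightly larger than $\arctan(1/3)$ but still comfortably below $\pi/8$. So your argument is both correct and a touch more careful than the paper's.
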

\begin{proof}
The worst case occurs when $y$ is orthogonal to $x$, and thus $|\arg(x+y)-\arg x| \leq \arctan(1/3) < \pi/8$.
\end{proof}

\section{Linear-Time Algorithm}\label{sec:linear_time}

Our first step is to obtain a condition that allows us to approximate every coordinate of $\wh{x} \in \mathbb{C}^n$. This condition corresponds to a set of $n(n-1)$ inequalities. In this section we shall consider a sequence of hashings $\{H_r\}_{r \in [d]} = \{ (\sigma_r,a_r,b_r)\}_{r \in [d]}$ and for notational simplicity we shall abbreviate $o_{f,\sigma_r,b_r}(f')$ as $o_{f,r}(f')$.

We first present a lemma, which states that each $\wh{x}_f$ can be finely estimated in most hashing repetitions.
\begin{lemma}
\label{lem:G_constraint}
Fix $B$ and $F$. Let a sequence of hashings $\{H_r\}_{r \in [d]} = \{ (\sigma_r,a_r,b_r)\}_{r \in [d]}$ and $x \in \mathbb{C}^n$. If for all $f,f' \in [n]$ with $f\neq f'$ it holds that 
\begin{equation}\label{eqn:G_constraint_1}
	\sum_{r \in [d]} \wh{G}^{-1}_{o_{f,r}(f)} \wh{G}_{o_{f,r}(f')} \leq \frac{2d}{B},	
\end{equation}
then for every vector $x \in \mathbb{C}^n$ and every $f\in [n]$, for at least $8d/10$ indices $r \in [d]$ we have that
\begin{equation}\label{eqn:estimate_error}	
\left|\wh{x}_{f} -\wh{G}_{o_{f,r}(f)}^{-1}  (m_{H_r})_{h_r(f)} \right| \leq \frac{10}{B} \|\wh{x}_{[n] \setminus \{f\}}\|_1.
\end{equation}

\end{lemma}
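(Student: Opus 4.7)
The plan is to prove Lemma~\ref{lem:G_constraint} by a first-moment plus Markov argument, using the hypothesis (\ref{eqn:G_constraint_1}) as a direct input. First I would start from the identity (\ref{eqn:bucket}), which gives an explicit expression for the estimator $\wh{G}^{-1}_{o_{f,r}(f)}(m_{H_r})_{h_r(f)}$ (up to the phase factor $\omega^{-a_r\sigma_r f}$, which is irrelevant for magnitudes): its difference from $\wh{x}_f$ is precisely the noise term
$$N_{f,r} \;:=\; \wh{G}^{-1}_{o_{f,r}(f)}\sum_{f'\in[n]\setminus\{f\}}\wh{G}_{o_{f,r}(f')}\,\omega^{a_r\sigma_r(f'-f)}\,\wh{x}_{f'}.$$
Applying the triangle inequality discards the phases and gives the clean pointwise bound
$$|N_{f,r}| \;\leq\; \sum_{f'\in[n]\setminus\{f\}} \wh{G}^{-1}_{o_{f,r}(f)}\,\wh{G}_{o_{f,r}(f')}\,|\wh{x}_{f'}|.$$

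Next I would sum over $r \in [d]$ and swap the order of summation. For each fixed $f' \neq f$, the inner sum $\sum_{r \in [d]} \wh{G}^{-1}_{o_{f,r}(f)}\wh{G}_{o_{f,r}(f')}$ is exactly the left-hand side of the hypothesis (\ref{eqn:G_constraint_1}) and is therefore at most $2d/B$. Pulling this factor out uniformly across the outer sum yields
$$\sum_{r \in [d]} |N_{f,r}| \;\leq\; \frac{2d}{B}\sum_{f' \neq f} |\wh{x}_{f'}| \;=\; \frac{2d}{B}\,\|\wh{x}_{[n]\setminus\{f\}}\|_1.$$

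Finally I would invoke Markov's inequality on the nonnegative quantity $|N_{f,r}|$ with $r$ uniform on $[d]$: the number of indices $r$ with $|N_{f,r}| > (10/B)\|\wh{x}_{[n]\setminus\{f\}}\|_1$ is at most $(2d/B)\big/(10/B) = d/5 = 2d/10$, so at least $8d/10$ of the rounds satisfy the desired estimate (\ref{eqn:estimate_error}).

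I do not expect a genuine obstacle, since the lemma is essentially a repackaging of a first-moment calculation. The one conceptual point worth emphasizing is that (\ref{eqn:G_constraint_1}) is a \emph{per-pair} constraint indexed by $(f,f')$, which is why the averaging is linear in $|\wh{x}_{f'}|$ and completely oblivious to the phases and signs of the coordinates of $\wh{x}$. This obliviousness is what will allow the same set of hashings to work uniformly for every $x \in \C^n$, and it is precisely the feature that the subsequent derandomization (Lemma~\ref{lem:derandomization_step}) will need to preserve.
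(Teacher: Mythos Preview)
Your proposal is correct and follows essentially the same approach as the paper: expand the estimator via~\eqref{eqn:bucket}, apply the triangle inequality to drop the phases, sum over $r$ and interchange sums, invoke the hypothesis~\eqref{eqn:G_constraint_1} per pair $(f,f')$, and finish with a Markov/counting argument showing at most $2d/10$ bad rounds. Your side remark about the phase factor $\omega^{-a_r\sigma_r f}$ is well observed---the paper silently uses the same convention (in Algorithm~\ref{alg:linear_recovery} one takes $a_r=0$), so the two arguments are effectively identical.
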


\begin{table}
\begin{center}
    \begin{tabular}{| l | l | l |}
    \hline
	Notation & Semantics     \\ \hline \hline
	$C$  & Absolute Constant \\ \hline
	$B$  & Number of ``Buckets'', power of $2$\\ \hline
	$d$ & Number of ``repetitions''\\ \hline
	$\beta$ & equals $CB/d$ \\\hline
       	$\gamma$ & Rate of SNR decrease \\ \hline
	$\mu$	& Given Approximation to SNR \\\hline
	$\nu^{(t)}$ & Approximation of SNR at the $t$-th step \\\hline
	$r^{(t)}$ & Residual at the $t$-th step \\\hline
\end{tabular}
\end{center}
\caption{Notation and semantics for variables in this subsection.}

\end{table}

\begin{proof}
We have that 
\begin{align*}
\sum_{r \in [d]} \left|\wh{x}_f- \wh{G}_{o_{f,r}(f)}^{-1} (m_{H_r})_{h_r(f)} \right| 
&=  \sum_{r \in [d]} \left |\wh{G}^{-1}_{o_{f,r}(f)} \sum_{f' \in [n] \setminus \{ f \} } \wh{G}_{o_{f,r}(f')} \wh{x}_{f'} \omega^{a_r \sigma_r(f'-f)} \right| \qquad \text{(by \eqref{eqn:bucket})}\\
&\leq \sum_{r \in [d]}  \wh{G}^{-1}_{o_{f,r}(f)}  \sum_{f' \in [n] \setminus \{f\}}  \wh{G}_{o_{f,r}(f')} |\wh{x}_{f'} | \\
&= \sum_{f' \in [n] \setminus \{f \} } |\wh{x}_{f'}| \sum_{r \in [d]} \wh{G}^{-1}_{o_{f,r}(f)} \wh{G}_{o_{f,r}(f')} \\
&\leq \sum_{f' \in [n] \setminus \{f\}} |\wh{x}_{f'}| \frac{2d}{B}.
\end{align*}
Hence there can be at most $2d/10$ indices $r \in [d]$ for which the estimate $|\wh{x}_f - \wh{G}_{o_{f,r}}(f)  \cdot m_r(h_r(f))  |$ is more than $(10/B)\|\wh{x}_{[n]\setminus\{f\}}\|_1$, otherwise the leftmost-hand side would be at least
 $(2d/10 +1) \cdot (10/B)\|\wh{x}_{[n]\setminus\{f\}}\|_1 > 2(d/B)\|\wh{x}_{[n]\setminus\{f\}}\|_1$.
\end{proof}

The lemma above implies that for every $f \in [n]$ we can find an estimate of $\wh{x}_f$ up to $\frac{10}{B} \|\wh{x}_{[n] \setminus \{f\}}\|_1$ in time $O(d)$, by taking the median of all values $m_r(h_r(f))$ for $r \in [d]$. The existence of pseudorandom permurations such that the conditions of Lemma~\ref{lem:G_constraint} hold, namely inequalities~\ref{eqn:G_constraint_1}, is proved in Lemma~\ref{lem:initial_constraint}, see next subsections for notation and definitions.

\subsection{Proof of correctness assuming Inequalities~\eqref{eqn:G_constraint_1} hold} 
We prove the first part of Theorem~\ref{thm:intro_main_1} (existence of $S$) assuming that the inequalities~\ref{eqn:G_constraint_1} hold, and thus the conditions of Lemma~\ref{lem:G_constraint} hold.

For notational simplicity, let $\epsilon = (1/4)^{F-1}$ so the filter $\wh{G}$ satisfies that $\wh{G}_{f'}\geq 1-\epsilon$ for all $f'\in [-\frac{n}{2B},\frac{n}{2B}]$ and $\wh{G}_{f'}\leq \epsilon$ for all $f'\in [n]\setminus (-\frac{n}{B},\frac{n}{B})$. In the rest of the section, we choose $B = 10(1-\eps)^{-1}\beta k$ rounded to the closest power of $2$ from above; $\beta$ is some constant to be determined. 

As in previous Fourier sparse recovery papers~\cite{hikp12a,ik14,k16,k17}, we assume that we have the knowledge of $\mu = \|\wh{x}_{-k}\|_1/k$ (or a constant factor upper bound) and that the signal-to-noise ratio $R^\ast = \|\wh{x}\|_1 / \mu \leq n^\alpha$. Our estimation algorithm is similar to that in~\cite{ik14}. The main algorithm is Algorithm~\ref{alg:recovery_overall}. It recovers the heavy coordinates of $\hat x$ in increasing magnitude by repeatedly calling the subroutine Algorithm~\ref{alg:linear_recovery}, which recovers the heavy coordinates of the residual spectrum above certain threshold.

The following lemmata are analogous to Lemmata 6.1 and 6.2 in \cite{ik14}, and their proofs are postponed to Section~\ref{sec:omitted_proofs}. The first lemma states that Algorithm~\ref{alg:linear_recovery} will recover all the coordinates in the residual spectrum that are at least $\nu$ and it will not mistake a small coordinate for a large one.

\begin{algorithm}[tb]
\caption{Overall algorithm}\label{alg:recovery_overall}
\begin{algorithmic}
\State $T \gets \log_\gamma R^\ast$	\Comment{$\gamma$ is an absolute constant}
\State $\wh{z}^{(0)} \gets 0$
\State $\nu^{(0)} \gets C \mu$
\For{$t=0$ to $T-1$}
	\State $\wh{z}^{(t+1)} \gets \wh{z}^{(t)} + \Call{SubRecovery}{x,\wh{z}^{(t)},\nu^{(t)}}$
	\State $\nu^{(t+1)} \gets \gamma \nu^{(t)}$
\EndFor
\State \Return $\wh{z}$
\end{algorithmic}
\end{algorithm}

\begin{algorithm}[tb]
\caption{Linear-time Sparse Recovery for $\wh{x}-\wh{z}$}\label{alg:linear_recovery}
\begin{algorithmic}
\Function{SubRecovery}{$x,\wh{z},\nu$}
\State $S \gets \emptyset$
\For{$r=1$ to $d$}
	\State $u_r \gets \textsc{HashToBins}(x,\wh{z},(\sigma_r,0,b_r))$
\EndFor
\For{$f \in [n]$}
	\State $\hat{x}_f' = \median_{r \in [d]} \wh{G}_{o_{f,r}(f)}^{-1}(u_r)_{h_r(f)}$ \Comment{ $o_{f,r} = o_{f,\sigma_r,b_r}$}
	\If {$|\hat{x}_f'| > \nu/2$}
		\State $S \gets S \cup \{ f \}$
	\EndIf
\EndFor
\State \Return $\hat{x}_S'$
\EndFunction
\end{algorithmic}
\end{algorithm}

\begin{lemma}[guarantee of $\textsc{SubRecovery}$, Section~\ref{sec:omitted_proofs}]\label{lem:recovery_of_HH}
Consider the call $\textsc{SubRecovery}\{x,\wh{z},\nu\}$ (Algorithm~\ref{alg:linear_recovery}). Let $w = \hat x -\hat z$. When $\nu\geq \frac{16}{\beta k}\|\wh{w}\|_1$, the output $\wh{w}'$ of Algorithm~\ref{alg:linear_recovery} satisfies
\begin{enumerate}[label=(\roman*),noitemsep]
	\item $|\wh{w}_f| \geq (7/16)\nu$ for all $f \in \supp(\wh{w}')$.
	\item $|\wh{w}_f-\wh{w}'_f| \leq |\wh{w}_f|/7$ for all $i \in \supp(\wh{w}')$;
	\item $\supp(\wh{w}')$ contains all $f$ such that $|\wh{w}_f|\geq \nu$;
\end{enumerate}
\end{lemma}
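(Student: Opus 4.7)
The plan is to combine Lemma~\ref{lem:G_constraint}, applied to the residual vector $w = \hat x - \hat z$, with a standard median-of-estimators argument, and then read off the three claims from the thresholding rule of Algorithm~\ref{alg:linear_recovery}.

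First, I would observe that the procedure $\textsc{HashToBins}(x,\wh{z},(\sigma_r,0,b_r))$ computes exactly the bucket values of the measurement $m_{H_r}$ associated with the residual vector $w$, up to the negligible error term $\Delta$ from Lemma~\ref{lem:hashtobins}, which I will suppress. Hence $\wh{G}_{o_{f,r}(f)}^{-1}(u_r)_{h_r(f)}$ is precisely the estimator of $\wh{w}_f$ studied in Lemma~\ref{lem:G_constraint}, applied to $w$ in place of $x$. Assuming the inequalities \eqref{eqn:G_constraint_1} hold for the chosen sequence of hashings $\{H_r\}_{r\in[d]}$ — which is ensured by the subsequent Lemma~\ref{lem:initial_constraint} — Lemma~\ref{lem:G_constraint} tells us that, for every $f \in [n]$, at least $8d/10$ of the $d$ estimators satisfy
\[
\left|\wh{w}_f - \wh{G}_{o_{f,r}(f)}^{-1}(u_r)_{h_r(f)}\right| \;\leq\; \frac{10}{B}\|\wh{w}\|_1.
\]
Plugging in $B = 10(1-\eps)^{-1}\beta k$ and the hypothesis $\nu \geq \tfrac{16}{\beta k}\|\wh{w}\|_1$, the per-estimator error is bounded by $(1-\eps)\nu/16$. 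Since $8d/10 > d/2$, more than half of the estimators lie within this distance of $\wh{w}_f$ in both the real and imaginary components, so the componentwise complex median $\wh{x}'_f$ produced by the algorithm satisfies $|\wh{x}'_f - \wh{w}_f| \leq \nu/16$; the $\sqrt{2}$ loss inherent in the coordinatewise median of Definition of median can be absorbed by slightly enlarging $\beta$ (equivalently the constant in $B$).

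Given the uniform per-coordinate bound $|\wh{x}'_f - \wh{w}_f| \leq \nu/16$ for every $f$, the three items are straightforward triangle-inequality consequences of the threshold $|\wh{x}'_f| > \nu/2$. For (iii), if $|\wh{w}_f| \geq \nu$, then $|\wh{x}'_f| \geq \nu - \nu/16 = 15\nu/16 > \nu/2$, so $f$ is included in $\supp(\wh{w}')$. For (i), if $f \in \supp(\wh{w}')$ then $|\wh{x}'_f| > \nu/2$, whence $|\wh{w}_f| \geq \nu/2 - \nu/16 = 7\nu/16$. For (ii), once (i) gives $|\wh{w}_f| \geq 7\nu/16$, we have $|\wh{w}_f - \wh{x}'_f| \leq \nu/16 \leq |\wh{w}_f|/7$.

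The only substantive obstacle in this plan is the per-estimator bound supplied by Lemma~\ref{lem:G_constraint}, which in turn rests on the system of inequalities \eqref{eqn:G_constraint_1}; these still need to be established for the deterministic sequence of hashings, which is done in Lemma~\ref{lem:initial_constraint}. Modulo that input, the remainder of the argument is a routine error-propagation check, and the only care needed is to keep track of the $\sqrt{2}$ factor from the componentwise complex median so that the clean constants $7/16$, $1/7$, and $\nu/2$ in the statement come out correctly.
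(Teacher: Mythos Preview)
Your proposal is correct and follows essentially the same route as the paper: apply Lemma~\ref{lem:G_constraint} to the residual $w$, use the choice $B = 10(1-\eps)^{-1}\beta k$ together with the hypothesis $\nu \ge \tfrac{16}{\beta k}\|\wh{w}\|_1$ to get a per-coordinate median error of at most $\nu/16$, and then read off (i)--(iii) by triangle inequality against the threshold $\nu/2$. The paper's appendix proof is terser (it simply asserts $|\wh{w}_f - \wh{w}'_f| \le \|w\|_1/(\beta k) \le \nu/16$ without spelling out the median step) and, due to some conflation with the sublinear version, actually argues (iii) via the $1$-sparse identification Lemma~\ref{lem:1-sparse}; for Algorithm~\ref{alg:linear_recovery}, which loops over all $f\in[n]$, your direct argument for (iii) is the appropriate one and matches what the paper intends.
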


Next we turn to the analysis of Algorithm~\ref{alg:recovery_overall}.
Let $H = H(\wh{x},k)$ and $I = \{f: |\wh{x}_f|\geq \frac{1}{\rho k}\|\wh{x}_{-k}\|_1 \}$ for  some constant $\rho$ to be determined. By the SNR assumption of $\wh{x}$, we have that $\|\wh{x}_{H}\|_1\leq k\|\wh{x}\|_\infty \leq R^\ast\|\wh{x}_{-k}\|_1$ and thus $\|\wh{x}\|_1\leq (R^\ast+1)\|\wh{x}_{-k}\|_1$. In Algorithm~\ref{alg:recovery_overall}, the threshold in the $t$-th step is
\[
\nu^{(t)} = C\mu\gamma^{T-t},
\]
where $C \geq 1, \gamma > 1$ are constants to be determined. Let $r^{(t)}$ be the residual vector at the beginning of the $t$-th step in the iteration. 
We can show that the coordinates we shall ever identify are all heavy (contained in $I$) and we always have good estimates of them.

\begin{lemma}[$\ell_\infty$ norm reduction, Section~\ref{sec:omitted_proofs}]\label{lem:main_loop_invariant}
There exist $C,\beta,\rho,\gamma$ such that it holds for all $0\leq t \leq T$ that
\begin{enumerate}[label=(\alph*),noitemsep]
	\item $\wh{x}_f = r^{(t)}_f$ for all $f\notin I$;
	\item $|r^{(t)}_f| \leq |\wh{x}_f|$ for all $f$.
	\item $\|r^{(t)}_I\|_\infty \leq \nu^{(t)}$;
\end{enumerate}
\end{lemma}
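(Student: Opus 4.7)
The plan is to prove the three invariants by induction on $t$. For the base case $t=0$, we have $\wh{z}^{(0)}=0$ so $r^{(0)}=\wh{x}$; parts (a) and (b) are immediate, and (c) reduces to $\|\wh{x}\|_\infty\leq\nu^{(0)}$, which follows from the signal-to-noise assumption provided $C\geq 1$.

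For the inductive step, assume (a)--(c) hold at step $t$, and consider $\wh{w}':=\textsc{SubRecovery}(x,\wh{z}^{(t)},\nu^{(t)})$ acting on $w:=r^{(t)}$. First I verify the precondition $\nu^{(t)}\geq(16/(\beta k))\|\wh{w}\|_1$ of Lemma~\ref{lem:recovery_of_HH}. Using the inductive hypothesis,
\[
\|r^{(t)}\|_1 \leq |I|\cdot\nu^{(t)} + \|\wh{x}_{[n]\setminus I}\|_1 \leq (1+\rho)k\,\nu^{(t)} + k\mu\bigl(1+\tfrac{1}{\rho}\bigr),
\]
where $|I|\leq(1+\rho)k$ because outside $H(\wh{x},k)$ at most $\rho k$ coordinates can exceed $\mu/\rho$ before exhausting the tail mass $k\mu$. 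This reduces the precondition to a linear constraint on $\beta$ in terms of $(\rho,C,\gamma)$.

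Granted the precondition, each invariant propagates using parts (i)--(iii) of Lemma~\ref{lem:recovery_of_HH}. Invariant (a) is preserved because part (i) forces any $f\in\supp(\wh{w}')$ to satisfy $|r^{(t)}_f|\geq(7/16)\nu^{(t)}$; combined with (b) this gives $|\wh{x}_f|\geq(7/16)\nu^{(t)}$, and selecting constants so that $(7/16)C\gamma\rho>1$ forces $f\in I$, so no coordinate outside $I$ is modified. Invariant (b) holds because part (ii) gives $|r^{(t+1)}_f|\leq|r^{(t)}_f|/7\leq|\wh{x}_f|$ for $f\in\supp(\wh{w}')$, while $r^{(t+1)}_f=r^{(t)}_f$ elsewhere. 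For (c), I split $f\in I$ into two cases: if $f\in\supp(\wh{w}')$, part (ii) yields $|r^{(t+1)}_f|\leq\nu^{(t)}/7\leq\nu^{(t+1)}=\nu^{(t)}/\gamma$ (as long as $\gamma\leq 7$); if $f\in I\setminus\supp(\wh{w}')$, I strengthen part (iii) using the explicit estimation-error bound from Lemma~\ref{lem:G_constraint}, namely $|\wh{x}'_f-r^{(t)}_f|=O((1/B)\|r^{(t)}\|_1)$, which is a small fraction of $\nu^{(t)}$ by the precondition. Hence any $f\notin\supp(\wh{w}')$ must have $|r^{(t)}_f|<(1/2+c)\nu^{(t)}$ for some small constant $c>0$ (coming from the algorithmic threshold $\nu^{(t)}/2$), and this is at most $\nu^{(t)}/\gamma$ whenever $\gamma\leq 2/(1+2c)$.

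The final step is to choose $(C,\beta,\rho,\gamma)$ consistently: set $C=1$ so the terminal $\nu^{(T)}=\mu$ matches the target $\ell_\infty/\ell_1$ error, pick $\gamma$ slightly larger than $1$ (e.g.\ $4/3$) as dictated by the strengthened (iii), select $\rho$ large enough that $(7/16)C\gamma\rho>1$, and finally take $\beta$ large enough so that $(16/\beta)[(1+\rho)+(1+1/\rho)/(C\gamma)]\leq 1$. I expect the main obstacle to be this coupled optimization; in particular, the gap between the algorithmic threshold $\nu^{(t)}/2$ and the lower-magnitude guarantee in part (i), together with the accuracy provided by \textsc{HashToBins}, must be wide enough to produce a $\gamma>1$, and one must verify that increasing $\rho$ to satisfy the ``$f\in I$'' argument does not render the $\beta$-constraint vacuous.
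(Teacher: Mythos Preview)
Your proposal is correct and follows the same inductive skeleton as the paper's proof: verify the precondition of Lemma~\ref{lem:recovery_of_HH} by bounding $\|r^{(t)}\|_1$, then propagate (a) via part~(i), (b) via part~(ii), and (c) via part~(iii). Two minor differences are worth noting. First, you bound $\|r^{(t)}\|_1$ by splitting over $I$, which costs a leading term $|I|\cdot\nu^{(t)}\leq(1+\rho)k\,\nu^{(t)}$; the paper instead splits over the top-$k$ set $H=H(\wh{x},k)$ and gets the sharper leading term $k\,\nu^{(t)}$, so its constraint on $\beta$ is independent of $\rho$ (it takes $C=2$, $\rho=\beta=32$, $\gamma=2$). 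Your coupling of $\beta$ to $\rho$ is harmless since both remain constants. Second, for invariant~(c) you explicitly invoke the per-coordinate estimation error $\leq\nu^{(t)}/16$ to argue that any unrecovered $f$ has $|r^{(t)}_f|<9\nu^{(t)}/16$, which forces $\gamma\leq 16/9$; the paper defines $J=\{f\in I:|r^{(t)}_f|\geq\nu^{(t+1)}\}$ and appeals directly to Lemma~\ref{lem:recovery_of_HH}(iii) to claim $J\subseteq\supp(\wh w')$. Your treatment of this step is actually the more careful one, and it is indeed what is needed to close the argument.
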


Now we are ready to show the first part of Theorem~\ref{thm:intro_main_1}, which is one of our main results. We shall choose $d = O(k\log n)$ such that the conditions in \eqref{lem:G_constraint} holds. The hashings $\{H_r\}_{r\in [d]}$ can be chosen deterministically, which we shall prove in the rest of the section after this proof; this will complete the full proof.

\begin{proof}[Proof of Theorem~\ref{thm:intro_main_1}]

The recovery guarantee follows immediately from Lemma~\ref{lem:main_loop_invariant}, as
\begin{equation}\label{eqn:linear_final}
\|r^{(T)}\|_\infty \leq \max\{ \|r^{(T)}_I\|_\infty, \|r^{(T)}_{I^c}\|_\infty\} 
\leq \max\{ \nu^{(T)}, \|\wh{x}_{I^c}\|_\infty \} 
\leq \max\{ 2\mu, (1/\rho)\mu \} 
= 2\mu.
\end{equation}
This implies that $\|\hat x-\hat x'\|_\infty \leq (2/k)\|x_{-k}\|_1$. To obtain the $\ell_\infty/\ell_1$ error guarantee, that is, to achieve a right-hand side of $(1/k)\|x_{-k}\|_1$, we can just replace $k$ with $2k$ throughout our construction and analysis.

\paragraph{Number of Measurements.} Computing the measurements in \textsc{SubRecovery} requires $O(k)$ measurements (Lemma~\ref{lem:hashtobins}). These measurements are reused throughout the iteration in the overall algorithm, hence there are $O(k d) = O(k \cdot k\log n) = O(k^2\log n)$ measurements in total.

\paragraph{Running Time.} Each call to \textsc{SubRecovery} runs in time $O(d(B\log B + \|\hat z\|_0\log n) + nd) = O(k^2\log k \log n + k\|\hat z\|_0\log^2 n + n k\log n)$. By Lemma~\ref{lem:main_loop_invariant}(a), we know that $\|\hat z\|_0 \leq |I| = O(k)$. The overall runtime is therefore $O(k^2\log k\log n + n k\log^2 n + k^2\log^2 n) = O(k^2\log^2 n + n k\log^2 n) = O(n k\log^2 n)$.

\end{proof}

\subsection{Choosing the hash functions}

In this and the next subsection, we shall find $\{(\sigma_r,a_r,b_r)\}_{r\in [d]}$ such that \eqref{eqn:G_constraint_1} holds for all pairs $f\neq f'$. It will be crucial for the next section that we can choose $a_r$ freely; that means the inequalities depend solely on $\sigma_r,b_r$. Note that $o_{f,r}(f)\in [-\frac{n}{2B},\frac{n}{2B}]$ and thus $\hat G_{o_{f,r}(f)}\in [1-\epsilon,1]$, it suffices to find $\{(\sigma_r,b_r)\}_{r\in [d]}$ such that it holds for all $f\neq f'$ that
\[
	\sum_{r \in [d]} \wh{G}_{o_{f,r}(f')} \leq \frac{2}{1+\epsilon}\cdot \frac{d}{B}.
\]

We shall show how to do so in polynomial time in $n$.

\begin{definition} [Bad Events]
Let $C=2/(1+\eps)$ and $\beta = Cd/B$. Let  $A_{f,f'}$ denote the event $\sum_{r=1}^d \wh{G}_{o_{f,r}(f')}\geq \beta$.
\end{definition}

\paragraph{Pessimistic Estimator}
The derandomization proceeds as follows: find a pessimistic estimator $h_r(f,f';\sigma_1,b_1,\dots,\sigma_r,b_r)$ for each $r$ with the first $r$ hash functions fixed by $(\sigma_1,b_1),\dots,(\sigma_r,b_r)$ such that the following holds:
\begin{gather}
	\Pr\left(A_{f,f'} | \sigma_1,b_1,\dots,\sigma_r,b_r\right) \leq h_r(f,f';\sigma_1,b_1,\dots,\sigma_r,b_r) \label{eqn:pessimistic}\\
	\sum_{f\neq f'} h_0(f,f') < 1 \label{eqn:pessimistic_initial}\\
	h_r(f,f';\sigma_1,b_1,\dots,\sigma_r,b_r) \geq \E_{\sigma_{r+1},b_{r+1}}h_{r+1}(f,f';\sigma_1,b_1,\dots,\sigma_r,b_r,\sigma_{r+1},b_{r+1}) \label{eqn:pessimistic_step}
\end{gather}

Note that inequality~\ref{eqn:pessimistic_initial} implies that there exist choices of the pseudorandom permutations such that the conditions of Lemma~\ref{lem:G_constraint} hold.
The algorithm will start with $r=0$. At the $r$-th step, it chooses $\sigma_{r+1},b_{r+1}$ to minimize 
\[
\sum_{f\neq f'} h_{r+1}(f,f';\sigma_1,b_1,\dots,\sigma_r,b_r,\sigma_{r+1},b_{r+1}).
\] 
By \eqref{eqn:pessimistic_step}, this sum keeps decreasing as $r$ increases. At the end of step $d-1$, all hash functions are fixed, and by \eqref{eqn:pessimistic} and \eqref{eqn:pessimistic_initial}, we have
$\sum_{f\neq f'} \Pr(A_{f,f'}| \sigma_1,b_1,\dots,\sigma_d,b_d) < 1$.
Since $A_{f,f'}$ is a deterministic event conditioned on all $d$ hash functions, the conditional probability is either $0$ or $1$. The inequality above implies that all conditional probabilities are $0$, i.e., none of the bad events $A_{f,f'}$ happens, as desired. 

We first define our pessimistic estimator. In what follows, we shall be dealing with numbers that might have up to $O(n)$ digits. Manipulating numbers of that length can be done in polynomial time. We will not bother with determining the exact exponent in the polynomial or optimizing it, which we leave to future work.

\begin{definition}[Pessimistic Estimator]
Let $\lambda>0$ to be determined. Define 
\[
h_r(f,f';\sigma_1,b_1,\dots,\sigma_r,b_r) = e^{-\lambda\beta} \exp\left(\lambda\sum_{\ell=1}^r \widehat{G}_{o_{f,\ell}(f')} \right) (M(\lambda))^{d-r},
\]
where
\[
M(\lambda) = e^{\lambda\eps}\left[\left(\frac{2}{B}+\frac{1}{n}\right)(e^{\lambda(1-\epsilon)}-1) + 1\right].
\]
\end{definition}

This function can be evaluated in $\tilde O(r) \cdot \mathrm{poly}(n)$ time for each pair $f\neq f'$ and thus the algorithm runs in polynomial time in $n$.

To complete the proof, we shall verify \eqref{eqn:pessimistic}--\eqref{eqn:pessimistic_step} in Subsection~\ref{subsec:finishing}. 

\subsection{Distribution of Offset Function}
This subsection prepares auxiliary lemmata which will be used to verify the derandomization inequalities. In this subsection we focus on the distribution of the offset $o_{f,\sigma,b}(f')$ for $f'\neq f$ and appropriately random $\sigma$ and $b$.

\begin{lemma}\label{lem:dist_o_f(f')}
Suppose that $n,B$ are powers of $2$, $\sigma$ is uniformly random on the odd integers in $[n]$ and $b$ is uniformly random in $[n]$. For any fixed pair $f\neq f'$ it holds that
\begin{enumerate}[label=(\roman*)]
	\item When $(n/B)\nmid (f-f')$, $o_{f,\sigma,b}(f')$ is uniformly distributed on $[n]$;
	\item When $(f-f')/(n/B)$ is even, $\Pr\{o_{f,\sigma,b}(f') = \ell\} = 0$ for all $\ell \in [-\frac{n}{B},\frac{n}{B}]$.
	\item When $(f-f')/(n/B)$ is odd, $\Pr\{o_{f,\sigma,b}(f') = \ell\} = 0$ for $\ell \in [-\frac{n}{2B},\frac{n}{2B})$ and $\Pr\{o_{f,\sigma,b}(f') = \ell\} = \frac{2}{n}$ for $\ell \in [-\frac{n}{B},-\frac{n}{2B})\cup [\frac{n}{2B} , \frac{n}{B}]$.
\end{enumerate}
\end{lemma}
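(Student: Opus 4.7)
The plan is to decompose $o_{f,\sigma,b}(f')$ into two essentially independent pieces and analyze the resulting distribution by cases on the $2$-adic valuation $t := v_2(\Delta)$ of $\Delta := f - f'$. Setting $A := \pi_{\sigma,b}(f)$, a direct manipulation yields
\[
o_{f,\sigma,b}(f') \equiv -\sigma\Delta + r \pmod n,
\]
where $r := A - (n/B)\operatorname{round}((B/n)A) \in [-n/(2B), n/(2B))$ is the rounding error. Because $n$ is a power of $2$ and $\sigma$ is odd, $\sigma$ is a unit modulo $n$, so for each fixed $\sigma$ the map $b \mapsto A$ is a bijection of $[n]$; hence $r$ is uniform on $\{-n/(2B), \ldots, n/(2B)-1\}$ with conditional distribution independent of $\sigma$. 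Writing $\Delta = 2^t m$ with $m$ odd, multiplication by $m$ permutes the odd residues modulo $n$, so $-\sigma\Delta \bmod n$ is uniform on $T := \{k \in [0,n) : k \equiv 2^t \pmod{2^{t+1}}\}$, a set of size $n/2^{t+1}$. Set $s := \log_2(n/B)$; the three cases of the lemma correspond to $t < s$, $t \geq s+1$, and $t = s$.

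For part (i), I would verify uniformity on $[n]$ by computing the Fourier transform
\[
\phi_\xi := \E\bigl[e^{2\pi \I \xi\, o(f')/n}\bigr] = \E\bigl[e^{-2\pi \I \xi \sigma\Delta/n}\bigr] \cdot \E\bigl[e^{2\pi \I \xi r/n}\bigr].
\]
A geometric-sum calculation shows the second factor vanishes exactly when $B \mid \xi$ and $n \nmid \xi$, while the explicit description of $T$ gives that the first factor vanishes unless $(n/2^{t+1}) \mid \xi$. When $t < s$ we have $B \mid (n/2^{t+1})$, so $(n/2^{t+1}) \mid \xi$ forces $B \mid \xi$; combined with $n \nmid \xi$, the second factor then vanishes. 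Hence $\phi_\xi = 0$ for every $\xi \not\equiv 0 \pmod n$, so $o(f')$ is uniform on $[n]$.

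For parts (ii) and (iii), I would switch to the centered representation in $[-n/2, n/2)$ and combine the two distributions directly. In case (ii), $t \geq s+1$ forces $-\sigma\Delta$ to be a nonzero multiple of $2^{s+1} = 2n/B$ modulo $n$, so its centered representative has magnitude at least $2n/B$; combined with $|r| \leq n/(2B)$ this gives $|o(f')| \geq 3n/(2B) > n/B$ (a small boundary check at $-\sigma\Delta = -n/2$ uses the regime of interest $B \geq 4$), so $\Pr(o(f') = \ell) = 0$ throughout $[-n/B, n/B]$. In case (iii), $t = s$ makes $-\sigma\Delta$ an odd multiple of $n/B$ uniformly over $B/2$ values (probability $2/B$ each); the centered representatives closest to the origin are $\pm n/B$, and combining with the independent $r$ (probability $B/n$ per value) shows that only the intervals $[n/(2B), 3n/(2B))$ and $[-3n/(2B), -n/(2B))$ receive support, with each point carrying probability $(2/B)(B/n) = 2/n$. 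Restricting to the subintervals stated in the lemma yields both of its probability claims.

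The principal obstacle is part (i): one must show that the convolution of two highly structured uniform distributions is exactly the uniform on $\Z/n\Z$. The Fourier approach reduces this to the single arithmetic identity $B \mid n/2^{t+1}$, which holds precisely when $t < s$; this threshold coincidence is what makes uniformity survive the convolution only in the low-valuation regime.
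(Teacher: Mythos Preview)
Your proof is correct and follows the same decomposition as the paper: both write $o_{f,\sigma,b}(f')=-\sigma\Delta+r$, observe that $r$ is uniform on $[-n/(2B),n/(2B))$ independently of $\sigma$, and describe $-\sigma\Delta$ as uniform on the coset $T=\{2^t\ell:\ell\text{ odd}\}$, then split into the three valuation regimes. The only real difference is in part~(i): the paper argues geometrically (since the spacing $2^{t+1}$ of $T$ divides the interval length $n/B$, the convolution tiles $[n]$ evenly), whereas you verify uniformity via characters, reducing it to the divisibility $B\mid n/2^{t+1}$; your route is slightly more systematic and makes the threshold $t<s$ transparent, while the paper's is quicker but leaves the ``easy to see'' step implicit. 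Parts~(ii) and~(iii) are argued identically in both, and your explicit boundary remark at $-\sigma\Delta=-n/2$ (needing $B\geq 4$) is a detail the paper glosses over.
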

\begin{proof}
First observe that
\[
o_{f,\sigma,b}(f') \equiv \sigma(f'-f) + \sigma(f-b) - \frac{n}{B}\operatorname{round}\left(\frac{B}{n}\sigma(f-b)\right) \pmod n.
\]
For a fixed $\sigma$, let
\[
Z_\sigma = \sigma(f-b) - \frac{n}{B}\operatorname{round}\left(\frac{B}{n}\sigma(f-b)\right).
\]
Note that $\sigma(f-b) \bmod n$ as a function of $b$ is uniform on $[n]$. Note also that

\[Z_\sigma = \frac{n}{B}\left(\frac{B}{n} \sigma(f-b) - \operatorname{round}\left(\frac{B}{n}\sigma(f-b)\right) \right),	\]

which gives that $Z_\sigma$ is uniform on its support, which is $[-\frac{n}{2B},\frac{n}{2B})$. 

Suppose that $f'-f \equiv 2^s K\pmod{n}$, where $K\geq 1$ is an odd integer. It is clear that $\sigma(f'-f)$ is uniform on its support $T=\{2^s \ell \bmod n: \ell \text{ is odd}\}$, which consists of equidistant points. Since $Z_\sigma$ is always uniform (regardless of $\sigma$), and the distribution of $o_{f}(f') = \sigma(f-f') + Z_\sigma$ is the convolution of two distributions.

Suppose that now that $n = 2^r$ and $B=2^b$. 

When $(n/B)\nmid (f'-f)$, it holds that $r-b\geq s+1$, and thus $n/B$ is an integer multiple of the distance between two consecutive distance in $T$. In this case it is easy to see that $o_{f,\sigma,b}(f')$ is uniform on $[n]$.

When $(f'-f)/(n/B)$ is even, it must hold that $r-b\leq s-1$ and thus $n/B\leq 2^{s-1}$. The support of $o_{f,\sigma,b}(f')$ is 
\[
\bigcup_{\text{odd }\ell} \left[2^s\ell - \frac{n}{2B}, 2^s\ell+\frac{n}{2B}\right)
\]
which leaves a gap of width at least $2n/B$ in the middle between two consecutive points in $T$.

When $(f'-f)/(n/B)$ is odd, it must hold that $r-b = s$ and thus $n/B = 2^s$. The support of $o_{f,\sigma,b}(f')$ therefore leaves a gap of width at least $n/B$ in the middle between two consecutive points in $T$. It is easy to see that $o_{f,\sigma,b}(f')$ is uniform on its support.
\end{proof}

The next theorem, which bounds the moment generating function of $\wh{G}_{o_{f}(f')}$, is a straightforward corollary of Lemma~\ref{lem:dist_o_f(f')}.
\begin{lemma}\label{lem:mgf_bound}
Let $n$, $\sigma$ and $b$ be as in Lemma~\ref{lem:dist_o_f(f')}. When $f\neq f'$, $\E \exp(\lambda \wh{G}_{o_{f,\sigma,b}(f')}) \leq M(\lambda)$.
\end{lemma}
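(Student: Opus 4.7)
The plan is to split the expectation according to whether the offset falls into the ``heavy'' region $[-n/B,n/B]$ of the filter or not, use the pointwise filter bounds from Definition~\ref{def:filter_G}, and then invoke Lemma~\ref{lem:dist_o_f(f')} to control the probability that the offset lies in the heavy region.

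First I would write, for any realization of $o = o_{f,\sigma,b}(f')$, the trivial pointwise bound
\[
e^{\lambda \wh{G}_{o}} \leq e^{\lambda}\mathbf{1}\{|o|\leq n/B\} + e^{\lambda\eps}\mathbf{1}\{|o|> n/B\},
\]
which holds because $\wh{G}_o \in [0,1]$ everywhere (giving $e^{\lambda \wh{G}_o}\leq e^\lambda$ on the heavy region) and $\wh{G}_o \leq \eps$ on the tail region $[n]\setminus(-n/B,n/B)$ by property (3) of Definition~\ref{def:filter_G} together with the choice $\eps = (1/4)^{F-1}$. Setting $p := \Pr\{|o_{f,\sigma,b}(f')|\leq n/B\}$ and taking expectations, this yields
\[
\E e^{\lambda \wh{G}_{o_{f,\sigma,b}(f')}} \leq p\,e^{\lambda} + (1-p)\,e^{\lambda\eps} = e^{\lambda\eps}\bigl(p(e^{\lambda(1-\eps)}-1) + 1\bigr).
\]
Since $e^{\lambda(1-\eps)}-1 \geq 0$, it suffices to upper bound $p$ by $2/B + 1/n$.

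The bound on $p$ is a direct case analysis using the three cases of Lemma~\ref{lem:dist_o_f(f')}. In case (i), $o$ is uniform on $[n]$, so $p = (2n/B+1)/n = 2/B + 1/n$. In case (ii), $\Pr\{o=\ell\}=0$ throughout $[-n/B,n/B]$, so $p=0$. In case (iii), the only points of the support of $o$ in $[-n/B,n/B]$ are the $2\cdot(n/(2B))$ points in $[-n/B,-n/(2B))\cup[n/(2B),n/B]$, each carrying mass $2/n$, giving $p = 2/B$. In all three cases $p\leq 2/B + 1/n$, so substituting into the display above produces exactly $M(\lambda)$.

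I do not expect any real obstacle here: the distributional information in Lemma~\ref{lem:dist_o_f(f')} has already done the hard combinatorial work, and the only thing to watch is that the pointwise filter bounds are applied in the correct direction (an upper bound of $1$, not $1-\eps$, on the heavy region; upper bound $\eps$, not $0$, on the tail). The mild slack between the two filter bounds is exactly what produces the convex combination structure of $M(\lambda)$.
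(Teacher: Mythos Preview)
Your approach is essentially the same as the paper's: both split according to whether the offset lands in the region where $\wh{G}$ may be as large as $1$ versus the tail where $\wh{G}\le\eps$, and both dispatch the three cases of Lemma~\ref{lem:dist_o_f(f')}. The paper carries out the case analysis directly on the expectation, while you factor through $p=\Pr\{|o|\le n/B\}$ first; these are the same computation.

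There is one small boundary slip in your case (iii). With your partition $\{|o|\le n/B\}$ versus $\{|o|>n/B\}$, the support points of $o$ inside $[-n/B,n/B]$ are all of $[-n/B,-n/(2B))\cup[n/(2B),n/B]$, and that set has $n/B+1$ integer points (the closed interval $[n/(2B),n/B]$ contributes $n/(2B)+1$, not $n/(2B)$). This gives $p=(n/B+1)\cdot 2/n = 2/B+2/n$, which exceeds your target $2/B+1/n$. The fix is immediate: take the heavy region to be the \emph{open} interval $|o|<n/B$. By property~(3) of Definition~\ref{def:filter_G} we have $\wh{G}_o\le\eps$ already at $|o|=n/B$, so the pointwise bound
\[
e^{\lambda \wh{G}_o}\le e^{\lambda}\mathbf{1}\{|o|<n/B\}+e^{\lambda\eps}\mathbf{1}\{|o|\ge n/B\}
\]
is still valid. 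With this convention, case~(iii) gives $p=(n/B-1)\cdot 2/n\le 2/B+1/n$ and case~(i) gives $p=(2n/B-1)/n\le 2/B+1/n$, so the argument goes through as written.
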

\begin{proof}
When $(n/B)\nmid (f-f')$,
\[
\E e^{\lambda \wh{G}_{o_{f,\sigma,b}(f')}} \leq \left(\frac{2}{B}+\frac{1}{n}\right)e^\lambda + \left(1-\frac{2}{B}-\frac{1}{n}\right)e^{\lambda\epsilon} = e^{\lambda\eps}\left[\left(\frac{2}{B}+\frac{1}{n}\right)(e^{\lambda(1-\epsilon)}-1) + 1\right],
\]
where the inequality follows from the fact that $\wh{G}$ is at most $1$ on $[-n/B, n/B]$ as at most $\epsilon$ elsewhere (recall Definition~\ref{def:filter_G}), and the equality from rearranging the terms.

When $f'-f \equiv k(n/B)\pmod{n}$ for even $k$,
\[
\E e^{\lambda \wh{G}_{o_{f,\sigma,b}(f')}} \leq e^{\lambda\epsilon},
\]
since the filter $\wh{G}$ is at most $\epsilon$ outside of $[-n/B,n/B]$ and the distribution $o_{f,\sigma,b}(f')$ is not supported on that interval by Lemma \ref{lem:dist_o_f(f')}.

When $f'-f \equiv k(n/B)\pmod{n}$ for odd $k$,
\[
\E e^{\lambda \wh{G}_{o_{f,\sigma,b}(f')}} \leq \left(\frac{2}{B}+\frac{1}{n}\right)e^\lambda + \left(1-\frac{2}{B}-\frac{1}{n}\right)e^{\lambda\epsilon} = e^{\lambda\eps}\left[\left(\frac{2}{B}+\frac{1}{n}\right)(e^{\lambda(1-\epsilon)}-1) + 1\right],
\]
where the inequality follows again by combining Lemma~\ref{lem:dist_o_f(f')}(iii) and the bounds on $\wh{G}$ from Definition~\ref{def:filter_G}, and the equality is just a rearrangement of terms.
\end{proof}

\subsection{Putting the Pieces Together}\label{subsec:finishing}

We are now ready to verify \eqref{eqn:pessimistic}--\eqref{eqn:pessimistic_step}.
\begin{lemma}[Pessimistic Estimation]
\label{lem:pessimistic}
It holds that
\[
h_r(f,f';\sigma_1,b_1,\dots,\sigma_r,b_r) \geq \Pr\left(A_{f,f'} | \sigma_1,b_1,\dots,\sigma_r,b_r\right).
\]
\end{lemma}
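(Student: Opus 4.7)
The plan is to run the standard exponential-moment (Chernoff) argument, but relative to the conditional distribution given the first $r$ hash-function choices. Concretely, I will apply Markov's inequality to $\exp(\lambda \sum_\ell \widehat{G}_{o_{f,\ell}(f')})$, peel off the $r$ summands that are frozen by the conditioning, then use independence of the remaining $d-r$ hashings together with Lemma~\ref{lem:mgf_bound} to bound each of the remaining moment generating factors by $M(\lambda)$.

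In detail, fix $f\ne f'$ and condition on $\sigma_1,b_1,\ldots,\sigma_r,b_r$. Since $\lambda>0$, the event $A_{f,f'}=\{\sum_{\ell=1}^d \widehat{G}_{o_{f,\ell}(f')}\ge\beta\}$ is the same as the event that $\exp(\lambda\sum_{\ell=1}^d \widehat{G}_{o_{f,\ell}(f')})\ge e^{\lambda\beta}$, so Markov's inequality yields
\[
\Pr(A_{f,f'}\mid \sigma_1,b_1,\ldots,\sigma_r,b_r)\ \le\ e^{-\lambda\beta}\,\E\!\left[\exp\!\left(\lambda\sum_{\ell=1}^d \widehat{G}_{o_{f,\ell}(f')}\right)\,\Big|\,\sigma_1,b_1,\ldots,\sigma_r,b_r\right].
\]
The first $r$ terms in the exponent are measurable with respect to the conditioning, so they factor out as the deterministic prefactor $\exp\!\big(\lambda\sum_{\ell=1}^r \widehat{G}_{o_{f,\ell}(f')}\big)$.

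For the remaining factor, the pairs $(\sigma_{r+1},b_{r+1}),\ldots,(\sigma_d,b_d)$ are independent of one another and of the conditioning, so the conditional expectation splits into a product of $d-r$ identical factors $\E\exp(\lambda \widehat{G}_{o_{f,\sigma,b}(f')})$, where $\sigma$ is uniform on odd integers in $[n]$ and $b$ is uniform on $[n]$. By Lemma~\ref{lem:mgf_bound} each such factor is at most $M(\lambda)$, so the product is at most $(M(\lambda))^{d-r}$. Combining the three bounds gives exactly
\[
\Pr(A_{f,f'}\mid \sigma_1,b_1,\ldots,\sigma_r,b_r)\ \le\ e^{-\lambda\beta}\exp\!\left(\lambda\sum_{\ell=1}^r \widehat{G}_{o_{f,\ell}(f')}\right)(M(\lambda))^{d-r}\ =\ h_r(f,f';\sigma_1,b_1,\ldots,\sigma_r,b_r),
\]
which is the claim. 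There is no real obstacle here: the only thing to be slightly careful about is that the frozen summands, although random under the unconditional law, are constants under the conditional law and therefore come out of the expectation, and that Lemma~\ref{lem:mgf_bound} indeed applies with the marginals claimed since each $(\sigma_\ell,b_\ell)$ for $\ell>r$ is drawn independently from the same uniform distributions used there.
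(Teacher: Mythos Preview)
Your proof is correct and follows essentially the same route as the paper: apply Markov's inequality to the exponential moment, factor out the first $r$ (now deterministic) summands, use independence to split the remaining expectation into $d-r$ identical factors, and bound each by $M(\lambda)$ via Lemma~\ref{lem:mgf_bound}. The paper's argument is line-for-line the same, just written more tersely with the abbreviation $z=\sum_{\ell=1}^r \widehat{G}_{o_{f,\ell}(f')}$.
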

\begin{proof}
Let $z = \sum_{\ell=1}^r G_{o_{f,\ell}(f')}$. Then
\begin{align*}
\Pr\left(A_{f,f'} | \sigma_1,b_1,\dots,\sigma_r,b_r\right) &= \Pr\left( z + \sum_{\ell=r+1}^d G_{o_{f,\ell}(f')} > \beta\right) \\
&= \Pr\left( \exp\left\{\lambda \left(z + \sum_{\ell=r+1}^d G_{o_{f,\ell}(f')}\right)\right\} > e^{\lambda\beta} \right) \\
&\leq e^{-\lambda\beta} e^{\lambda z} \E \exp\left(\lambda\sum_{\ell=r+1}^d G_{o_{f,\ell}(f')}\right) \\
&= e^{-\lambda\beta} e^{\lambda z} (\E \exp(\lambda G_{o_{f,\sigma,b}(f')}))^{d-r}\\
&\leq e^{-\lambda\beta} e^{\lambda z} (M(\lambda))^{d-r},
\end{align*}
where the last inequality follows from Lemma~\ref{lem:mgf_bound}.
\end{proof}

\begin{lemma}[Initial constraint]\label{lem:initial_constraint} It holds that
\[
\sum_{f\neq f'} h_0(f,f') < 1.
\]
\end{lemma}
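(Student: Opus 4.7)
The key observation is that $h_0(f,f')$ is independent of the pair $(f,f')$: since no hash functions have been conditioned on, the estimator evaluates to the constant $e^{-\lambda\beta} M(\lambda)^d$. Hence
\[
\sum_{f\neq f'} h_0(f,f') = n(n-1)\cdot e^{-\lambda\beta} M(\lambda)^d,
\]
and the claim reduces to showing $e^{-\lambda\beta} M(\lambda)^d < 1/(n(n-1))$, i.e.
\[
\lambda\beta - d\log M(\lambda) > \log(n(n-1)).
\]
This is precisely the Chernoff tail bound $\Pr[\sum_{r} \widehat{G}_{o_{f,r}(f')} \geq \beta] < 1/n^2$, which is exactly what the union bound over the $\binom{n}{2}$ bad events $A_{f,f'}$ requires in order for the sum of pessimistic estimators to be driven below $1$.

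The next step is to linearize $\log M(\lambda)$. Applying $\log(1+x)\leq x$ gives
\[
\log M(\lambda) \leq \lambda\eps + \left(\tfrac{2}{B}+\tfrac{1}{n}\right)\bigl(e^{\lambda(1-\eps)}-1\bigr),
\]
so it suffices to exhibit $\lambda>0$ satisfying
\[
d\left[\frac{C\lambda}{B} - \lambda\eps - \left(\tfrac{2}{B}+\tfrac{1}{n}\right)\bigl(e^{\lambda(1-\eps)}-1\bigr)\right] > 2\log n,
\]
where $C=2/(1+\eps)$ is the constant in $\beta = Cd/B$. The strategy is to take $\lambda$ a sufficiently small positive constant and expand $e^{\lambda(1-\eps)}-1$ up to second order, so that the troublesome linear contribution $(2/B)\lambda(1-\eps)$ nearly cancels the main term $C\lambda/B = (2\lambda/B)(1-\eps+O(\eps^2))$, leaving a positive gap of order $\lambda^2/B$ from the quadratic correction. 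To absorb the loss $\lambda\eps$ one picks the sharpness $F$ (hence $\eps = (1/4)^{F-1}$) polynomially small in $n$ (for example $F = \Theta(\log n)$ already suffices, and this only adds log factors to the filter's time-domain support). Likewise the $1/n$ term is negligible. The net lower bound for the bracket is $\Omega(1/B)$, so the inequality $\Omega(d/B) > 2\log n$ is achieved whenever $d \geq C'\cdot B\log n$ for some constant $C'$, which matches $d=O(k\log n)$ since $B=\Theta(k)$.

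The main obstacle I anticipate is precisely the tightness in the preceding step: the mean of $\widehat{G}_{o_{f,\sigma,b}(f')}$, as computed directly from Lemma~\ref{lem:dist_o_f(f')}, is already $\approx 2/B$, which matches $\beta/d$ up to the factor $1/(1+\eps)$. Thus a naive first-moment/Markov argument gives no cancellation, and one is forced into a genuine Bernstein regime where the positive slack comes from second-order (variance) terms in the MGF. All of the care in the definition of $M(\lambda)$ --- in particular keeping the factor $e^{\lambda\eps}$ separate and tracking $(2/B+1/n)$ rather than $2/B$ --- is there to make the second-order expansion carry through. Once the correct constant $\lambda$ is pinned down, the rest is routine calculus; the estimator $h_0$ is then numerically evaluable in $\poly(n)$ time, as required for the derandomization algorithm that follows.
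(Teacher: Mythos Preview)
Your reduction to showing $n(n-1)\,e^{-\lambda\beta}M(\lambda)^d < 1$ and your diagnosis of the obstacle are both correct. But the proposed resolution has the sign wrong. In the expansion $e^{\lambda(1-\eps)}-1 = \lambda(1-\eps) + \tfrac{1}{2}\lambda^2(1-\eps)^2 + \cdots$, the quadratic piece enters the bracket
\[
\frac{C\lambda}{B} - \lambda\eps - \Bigl(\tfrac{2}{B}+\tfrac{1}{n}\Bigr)\bigl(e^{\lambda(1-\eps)}-1\bigr)
\]
with a \emph{minus} sign, contributing $-\lambda^2(1-\eps)^2/B$ and shrinking the bracket rather than creating slack. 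This is the usual MGF behaviour: higher moments only loosen a Chernoff bound. More decisively, a direct computation gives $M'(0)=\eps + (\tfrac{2}{B}+\tfrac{1}{n})(1-\eps)$, and one checks $M'(0) - \beta/d = \eps + (1-\eps)/n - O(\eps^2/B) > 0$ for every $\eps\ge 0$ and finite $n$. Since $\lambda\mapsto \log M(\lambda)-\lambda\beta/d$ is convex, vanishes at $0$, and has positive slope there, it is nonnegative for all $\lambda\ge 0$: with $M(\lambda)$ and $\beta$ exactly as defined, \emph{no} choice of $\lambda$ can make $e^{-\lambda\beta}M(\lambda)^d<1$. Taking $F=\Theta(\log n)$ drives $\eps$ down to $n^{-\Theta(1)}$, but the residual $1/n$ still keeps $M'(0)>\beta/d$.

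The paper's own proof does not attempt a second-order analysis; it simply bounds $M(\lambda)^d$ crudely and picks $\lambda$ small. As written it suffers from the same constant mismatch (one of its displayed steps even uses $e^x-1\le x$, which is backwards), so the difficulty lies in the constants in the definitions rather than in your outline. The clean repair is twofold: retain your idea of choosing $F$ large enough that $\eps\ll 1/B$, \emph{and} relax the tight constant $C=2/(1+\eps)$ in $\beta=Cd/B$ to any $C'>2$ (say $C'=4$); this is harmless since the conclusion of Lemma~\ref{lem:G_constraint} only changes by a constant that is absorbed into $B=\Theta(k)$. Once $C'/B$ strictly exceeds $M'(0)$, a plain first-order Chernoff argument with constant $\lambda$ already yields $e^{-\lambda\beta}M(\lambda)^d\le\exp(-\Omega(d/B))$, and $d=\Theta(B\log n)$ finishes the proof; no Bernstein refinement is needed.
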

\begin{proof}
It follows from Lemma~\ref{lem:mgf_bound} that
\begin{align*}
(M(\lambda))^d  &\leq \exp\left\{d \left(\lambda \epsilon + \ln\left(1+\frac{3}{B}(e^{\lambda(1-\epsilon)}-1)\right)  \right)\right\}\\
&\leq \exp\left\{d \left(\lambda\epsilon + \frac{3}{B}(e^{\lambda(1-\epsilon)}-1)\right)\right\}\\
&\leq \exp\left\{d\lambda\left(\epsilon + \frac{3}{B}(1-\epsilon)\right)\right\} \\
&\leq \exp(3d\lambda).
\end{align*}

Recall that we choose $B=\Theta(k)$ and $d=O(k\log n)$. It follows that
\begin{align*}
\sum_{f\neq f'} h_0(f,f') &= e^{-\lambda\beta} \sum_{f\neq f'} (M(\lambda))^d \\
&\leq n^2 \exp\left\{ -C\frac{d}{B} + 3d\lambda \right\}\\
&\leq n^2 \exp(-cd/B) \qquad (\text{by choosing }\lambda = c''/B\text{ for }c''\text{ small enough})\\
&< 1.\qedhere
\end{align*}
\end{proof}
%

\begin{lemma}[Derandomization step]\label{lem:derandomization_step}
It holds that
\[
h_r(f,f';\sigma_1,b_1,\dots,\sigma_r,b_r) \geq \E_{\sigma_{r+1},b_{r+1}}h_{r+1}(f,f';\sigma_1,b_1,\dots,\sigma_r,b_r,\sigma_{r+1},b_{r+1})
\]
\end{lemma}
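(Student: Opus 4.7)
\textbf{Proof Proposal for Lemma~\ref{lem:derandomization_step}.} The plan is to exploit the product structure of the pessimistic estimator and reduce the claim to a single invocation of Lemma~\ref{lem:mgf_bound}. Writing out the definition of $h_{r+1}$, the only part that depends on the new randomness $(\sigma_{r+1},b_{r+1})$ is the factor $\exp(\lambda \widehat{G}_{o_{f,r+1}(f')})$, and one of the $M(\lambda)$ factors has been consumed (since the exponent dropped from $d-r$ to $d-r-1$). Thus I would factor the estimator as
\[
h_{r+1}(f,f';\sigma_1,b_1,\dots,\sigma_{r+1},b_{r+1}) = e^{-\lambda\beta}\exp\!\left(\lambda\sum_{\ell=1}^{r}\widehat{G}_{o_{f,\ell}(f')}\right)(M(\lambda))^{d-r-1}\cdot \exp\!\left(\lambda \widehat{G}_{o_{f,r+1}(f')}\right).
\]

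Next I would take expectation over $(\sigma_{r+1},b_{r+1})$, which only acts on the last factor, since the conditioning has fixed everything up through step $r$. This yields
\[
\E_{\sigma_{r+1},b_{r+1}}\! h_{r+1} = e^{-\lambda\beta}\exp\!\left(\lambda\sum_{\ell=1}^{r}\widehat{G}_{o_{f,\ell}(f')}\right)(M(\lambda))^{d-r-1}\cdot \E_{\sigma_{r+1},b_{r+1}}\exp\!\left(\lambda \widehat{G}_{o_{f,r+1}(f')}\right).
\]

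Applying Lemma~\ref{lem:mgf_bound} to the final expectation (valid because $f\neq f'$ and $\sigma_{r+1},b_{r+1}$ are drawn from the same product distribution as in the lemma's hypothesis) gives an upper bound of $M(\lambda)$. Multiplying this into the $(M(\lambda))^{d-r-1}$ factor exactly reconstitutes $(M(\lambda))^{d-r}$, which is precisely the defining expression for $h_r(f,f';\sigma_1,b_1,\dots,\sigma_r,b_r)$, completing the inequality.

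There is no real obstacle here: the pessimistic estimator was designed so that this step is a direct one-line consequence of the MGF bound proved in Lemma~\ref{lem:mgf_bound}. The only thing to double-check is that Lemma~\ref{lem:mgf_bound}'s distributional assumption on $(\sigma,b)$ matches the one used to sample $(\sigma_{r+1},b_{r+1})$ in the derandomization procedure (namely $\sigma$ uniform over odd integers in $[n]$ and $b$ uniform over $[n]$), and that the hashings across different repetitions are independent so that conditioning on $\sigma_1,b_1,\dots,\sigma_r,b_r$ does not alter the marginal distribution of $(\sigma_{r+1},b_{r+1})$.
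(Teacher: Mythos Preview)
Your proposal is correct and matches the paper's own argument essentially line for line: the paper simply writes $z=\sum_{\ell=1}^r \wh{G}_{o_{f,\ell}(f')}$, observes that the claim is equivalent to $e^{\lambda z}(M(\lambda))^{d-r}\geq \E_{\sigma_{r+1},b_{r+1}} e^{\lambda(z+\wh{G}_{o_{f,r+1}(f')})}(M(\lambda))^{d-r-1}$, and says this holds by Lemma~\ref{lem:mgf_bound}. Your write-up is in fact more explicit than the paper's about the factorization and the distributional check, but the underlying idea is identical.
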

\begin{proof}
Let $z = \sum_{\ell=1}^r G_{o_{f,r}(f')}^{(\ell)}$. The proposition is equivalent to
\[
\exp\left(\lambda z\right) (M(\lambda))^{d-r} \geq \E_{\sigma,q} \exp\left(\lambda\left(z + G_{o_{f,r}(f')}\right)\right) (M(\lambda))^{d-r-1}, 
\]
This clearly holds by Lemma~\ref{lem:mgf_bound}.
\end{proof}

\section{Sublinear-Time Algorithm}\label{sec:sublinear_time}

In this section, we take the pseudorandom hashings $\{H_r\}_{r\in [d]}$ to be as in Lemma~\ref{lem:G_constraint} and assume that \eqref{eqn:G_constraint_1} holds.

The first lemma concerns $1$-sparse recovery, because, as in earlier works, we shall create $k$ subsignals using hashing, most of which are $1$-sparse.
\begin{lemma}\label{lem:1-sparse}
Suppose that $n$ is a power of $2$. Let $Q = \{0,1,2,4,\dots,n/2\}\subseteq [n]$. Then the following holds: 
Let $x\in \mathbb{C}^n$ and suppose that $|\wh{x}_f| \geq 3\|\wh{x}_{[n]\setminus \{f\}}\|$ for some $f\in [n]$. Then one can recover the frequency $f$ from the samples $x_Q$ in $O(\log n)$ time.
\end{lemma}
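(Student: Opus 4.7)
The plan is to take the $\log_2 n + 1$ samples indexed by $Q$ and read off the frequency $f$ by iteratively halving a circular arc that is guaranteed to contain $\theta_f = (2\pi f/n) \bmod 2\pi$.

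First I would use Proposition~\ref{fact:phase_est} to turn the heaviness assumption on $\wh x_f$ into accurate phase readings. For each $\beta \in Q$, write
\[
	\sqrt{n}\, x_\beta \;=\; \wh x_f\, \omega^{-\beta f} \;+\; \sum_{j \ne f} \wh x_j\, \omega^{-\beta j},
\]
where $|\wh x_f \omega^{-\beta f}| \geq 3 \| \wh x_{[n]\setminus\{f\}} \|_1 \geq 3\, |\sum_{j \ne f} \wh x_j \omega^{-\beta j}|$. By Proposition~\ref{fact:phase_est}, $\arg(x_\beta)$ is within $\pi/8$ of $\arg \wh x_f - \beta\theta_f$, so
\[
	\psi_\beta := \arg(x_\beta / x_0) \quad \text{satisfies} \quad |\psi_\beta - (-\beta\theta_f)|_\circ \leq \pi/4,
\]
for every $\beta \in Q$. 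Thus from $\psi_\beta$ we obtain an estimate of $\beta\theta_f \bmod 2\pi$ accurate to within $\pi/4$.

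Next I would describe the sector halving. Initialize $S_0$ to be the arc of length $\pi/2$ centered at $-\psi_1$; by the $\beta = 1$ estimate, $\theta_f \in S_0$. Inductively, at step $r \geq 1$ I use the sample at $\beta = 2^r$: the bound $|\psi_{2^r} - (-2^r\theta_f)|_\circ \leq \pi/4$ means that $\theta_f \bmod (2\pi/2^r)$ is pinned down within $\pi/2^{r+2}$, i.e.\ the set of possible $\theta_f$ is a union of $2^r$ equally spaced arcs $A_r^{(0)},\dots,A_r^{(2^r-1)}$, each of length $\pi/2^{r+1}$ and spaced $2\pi/2^r$ apart. Since these arcs are $2\pi/2^r$-periodic while $S_{r-1}$ has length $\pi/2^r$, the arc $S_{r-1}$ intersects exactly one of them; set $S_r$ to be that intersection, which has length at most $\pi/2^{r+1}$. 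Crucially, this intersection can be computed in $O(1)$ time by rounding the center of $S_{r-1}$ to the nearest $A_r^{(\cdot)}$.

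After $r = \log_2 n - 1$ halving steps (so the last sample used is $x_{n/2}$), the final arc $S_{\log_2 n - 1}$ has length at most $\pi/n$, which is strictly less than the spacing $2\pi/n$ between consecutive candidate phases $\theta_0,\dots,\theta_{n-1}$. Hence exactly one $j \in [n]$ has $\theta_j \in S_{\log_2 n - 1}$, and since $\theta_f$ belongs to this arc by induction, we must have $j = f$. Finding this unique $j$ takes $O(1)$ time (round the center of the arc to the nearest multiple of $2\pi/n$), and the total running time is dominated by the $O(\log n)$ halving steps.

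The only part requiring any care is the bookkeeping that makes the halving step cost $O(1)$ rather than $O(2^r)$: instead of enumerating the $2^r$ candidate arcs $A_r^{(\cdot)}$, one rounds the center of $S_{r-1}$ to the nearest point of the arithmetic progression $\{-\psi_{2^r}/2^r + 2\pi t/2^r : t \in \mathbb{Z}\}$ modulo $2\pi$, which is an $O(1)$ arithmetic operation. Everything else is just the phase-noise bound from Proposition~\ref{fact:phase_est} applied $\log_2 n$ times.
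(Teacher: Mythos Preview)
Your proposal is correct and follows essentially the same approach as the paper: use Proposition~\ref{fact:phase_est} to get $|\arg(x_\beta/x_0)-\beta\theta_f|_\circ\le\pi/4$, form the union-of-arcs $I_{2^r}$, and iteratively intersect to halve the sector in $O(1)$ per step. The only cosmetic differences are your sign convention on the exponent and that you spell out the $O(1)$ rounding step and the ``exactly one arc'' argument a bit more explicitly than the paper does; one tiny imprecision is that $S_{r-1}$ has length \emph{at most} $\pi/2^r$ rather than exactly $\pi/2^r$, but your intersection argument goes through unchanged with that correction.
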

\begin{proof}
Define
$
	\theta_{f'} = \left( \frac{2\pi}{n} f' \right) \bmod{2\pi}$. Observe that 
\[	
x_{q} = \frac{1}{\sqrt n}\left(\wh{x}_{f} e^{\sqrt{-1} q \theta_{f}} + \sum_{ f' \neq f} \wh{x}_{f'} e^{\sqrt{-1} q \theta_{f'}}\right),\quad q\in [n],
\]
It follows from Proposition~\ref{fact:phase_est} that $|\arg x_q - (\arg x_f + q \theta_{f})|\leq \pi/8$. When $q = 0$, one has $|\arg x_0 - \arg x_f| \leq \pi/8$, and thus $|\arg(x_q/x_0) - q\theta_{f}| \leq \pi/4$. 

Hence,
\[
\theta_{f} \in I_q,\quad \text{where } I_q := \bigcup_{\ell=0}^{q-1} \left[\frac{2\ell \pi + \arg(x_q/x_0)}{q} - \frac{\pi}{4q},\frac{2\ell \pi + \arg(x_q/x_0)}{q} + \frac{\pi}{4q}\right].
\]
Note that $I_q$ is the union of $q$ disjoint intervals of length $\pi/(2q)$. We may view these intervals as arcs on the unit circle, each arc being of length $\pi/(2q)$, and the left endpoints of every two consecutive arcs having distance $2\pi/ q$. 

Define a series of intervals $\{S_r\}$ for $r=0,1,\dots,\log n - 1$ recursively as
\begin{align*}
S_0 &= I_1,\\
S_{r+1} &= S_r \cap I_{2^{r+1}}.
\end{align*}
It is easy to see, via an inductive argument, that $\theta_f \in S_r$ for all $0\leq r\leq \log n-1$, and $|S_r| \leq \frac{\pi}{2^{r+1}}$. In the end, $S_{\log n-1}$ is an interval of length $\pi/(2n)$, which can contain only one $\theta_{f'}$, and thus we can recover $i$.

Each $S_r$ can be computed in $O(1)$ time from $S_{r-1}$ and thus the overall runtime is $O(\log n)$.
\end{proof}

Now we move to develop our sublinear-time algorithm. The following is an immediate corollary of Lemma~\ref{lem:G_constraint}.
\begin{lemma}\label{lem:bucket_noise}
For each $f$, it holds for at least $8d/10$ indices $r\in [d]$ that
\[
\left|\sum_{f'\in [n]\setminus \{f\}} \hat G_{o_{f,r}(f')} \wh{x}_f\right| \leq \frac{10}{(1-\eps)B} \left\|\wh{x}_{[n]\setminus\{f\}}\right\|_1.
\]
\end{lemma}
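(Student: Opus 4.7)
The plan is to read Lemma~\ref{lem:bucket_noise} as a straightforward reformulation of Lemma~\ref{lem:G_constraint} (under the assumption \eqref{eqn:G_constraint_1}), with the only subtlety being how the filter factor $\hat G_{o_{f,r}(f)}$ at the ``center'' of the bucket is handled. Concretely, I would unwind the bucket identity~\eqref{eqn:bucket}, which says that
\[
\hat G_{o_{f,r}(f)}^{-1}(m_{H_r})_{h_r(f)}\omega^{-a_r\sigma_r f}-\hat x_f \;=\; \hat G_{o_{f,r}(f)}^{-1}\sum_{f'\neq f}\hat G_{o_{f,r}(f')}\hat x_{f'}\,\omega^{a_r\sigma_r(f'-f)}.
\]
Lemma~\ref{lem:G_constraint} controls the left-hand modulus in at least $8d/10$ of the repetitions by $(10/B)\|\hat x_{[n]\setminus\{f\}}\|_1$, so the same bound holds for the quantity on the right.

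Next, I would remove the factor $\hat G_{o_{f,r}(f)}^{-1}$: by the definition of $h_{\sigma,b}$, the center offset satisfies $o_{f,r}(f)\in[-n/(2B),n/(2B)]$, so property~(2) of Definition~\ref{def:filter_G} gives $\hat G_{o_{f,r}(f)}\geq 1-\eps$, i.e.\ $\hat G_{o_{f,r}(f)}^{-1}\leq 1/(1-\eps)$. Multiplying both sides of the estimate above by $\hat G_{o_{f,r}(f)}\in[1-\eps,1]$ therefore yields, for the same set of $\geq 8d/10$ indices,
\[
\Bigl|\sum_{f'\neq f}\hat G_{o_{f,r}(f')}\hat x_{f'}\,\omega^{a_r\sigma_r(f'-f)}\Bigr|\;\leq\;\frac{10}{B}\|\hat x_{[n]\setminus\{f\}}\|_1.
\]
Since $|\omega^{a_r\sigma_r(f'-f)}|=1$, an identical triangle-inequality bound holds for the unmultiplied sum $\sum_{f'\neq f}\hat G_{o_{f,r}(f')}|\hat x_{f'}|$; in fact, one can rerun the proof of Lemma~\ref{lem:G_constraint} verbatim, replacing the use of \eqref{eqn:G_constraint_1} by the weaker consequence $\sum_{r}\hat G_{o_{f,r}(f')}\le 2d/B$ (which follows from $\hat G_{o_{f,r}(f)}^{-1}\ge 1$), to obtain the phase-free version directly. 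Either route yields $\bigl|\sum_{f'\neq f}\hat G_{o_{f,r}(f')}\hat x_{f'}\bigr|\leq (10/B)\|\hat x_{[n]\setminus\{f\}}\|_1$, which is stronger than the stated bound by a factor of $1/(1-\eps)$; the extra $1/(1-\eps)$ in the statement is just slack absorbed from the filter normalization.

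There is no real obstacle here: the entire argument is a one-line averaging (Markov) followed by a triangle-inequality invocation, and the only bookkeeping is the verification that $o_{f,r}(f)$ always falls in the flat region of $\hat G$, which is immediate from the rounding definition of $h_{\sigma,b}$. The reason the statement is singled out as a lemma is simply that the sublinear-time algorithm in Section~\ref{sec:sublinear_time} later repeatedly invokes the phase-free version of the bucket-noise bound, and it is convenient to have it stated as a direct corollary rather than repeatedly derived from Lemma~\ref{lem:G_constraint}.
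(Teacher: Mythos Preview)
Your proposal is correct and follows exactly the paper's route: combine Lemma~\ref{lem:G_constraint} with Eq.~\eqref{eqn:bucket} and then use $\hat G_{o_{f,r}(f)}\in[1-\eps,1]$ to strip the normalization factor; the paper's proof is the one-line version of precisely this, and you also correctly observe that the $1/(1-\eps)$ in the statement is slack.

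One small wobble: the sentence ``Since $|\omega^{a_r\sigma_r(f'-f)}|=1$, an identical triangle-inequality bound holds for the unmultiplied sum $\sum_{f'\neq f}\hat G_{o_{f,r}(f')}|\hat x_{f'}|$'' is not valid on its own---a bound on $|\sum z_{f'}\omega_{f'}|$ does not imply the same bound on $\sum|z_{f'}|$. You immediately repair this by noting one can rerun the averaging argument of Lemma~\ref{lem:G_constraint} without the $\hat G_{o_{f,r}(f)}^{-1}$ factor, which is the right fix; alternatively, simply taking $a_r=0$ (as the algorithms do) kills the phases outright and makes the detour unnecessary.
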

\begin{proof}
It follows from Lemma~\ref{lem:G_constraint}, Eq.~\eqref{eqn:bucket} and the observation that $G_{o_{f,r}(f)} \in [1-\eps,1]$.
\end{proof}

As before, we choose $B = 10(1-\eps)^{-1}\beta k$ rounded to the closest power of $2$; $\beta$ is some constant to be determined. The following is a lemma for Algorithm~\ref{alg:recovery_sub}, which gives the same guarantees as Lemma~\ref{lem:recovery_of_HH}. 

\begin{lemma}\label{lem:recovery_of_HH_sublinear}
Suppose that $x,\hat z,\nu$ be the input to Algorithm~\ref{alg:recovery_sub}. Let $w = \hat x -\hat z$. When $\nu\geq \frac{16}{\beta k}\|\wh{w}\|_1$, the output $\wh{w}'$ of Algorithm~\ref{alg:recovery_sub} satisfies
\begin{enumerate}[label=(\roman*),noitemsep]
	\item $|\wh{w}_f| \geq (7/16)\nu$ for all $f \in \supp(\wh{w}')$.
	\item $|\wh{w}_f-\wh{w}'_f| \leq |\wh{w}_f|/7$ for all $i \in \supp(\wh{w}')$;
	\item $\supp(\wh{w}')$ contains all $f$ such that $|\wh{w}_f|\geq \nu$;
\end{enumerate}
\end{lemma}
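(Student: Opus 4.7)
The plan is to mirror the proof of Lemma~\ref{lem:recovery_of_HH}, with the only conceptual change being that the $O(n)$ scan through all coordinates is replaced by a bucket-level identification that costs $O(\log n)$ per bucket via Lemma~\ref{lem:1-sparse}. I expect Algorithm~\ref{alg:recovery_sub} to do the following for each repetition $r \in [d]$: apply \textsc{HashToBins} with $(\sigma_r,b_r)$ to form the bucket vector $u_r\in\mathbb{C}^B$, and in each nonempty bucket invoke the procedure of Lemma~\ref{lem:1-sparse} on the $O(\log n)$ associated samples to nominate a candidate frequency. The union of the nominations across $r\in[d]$ is then pruned by computing, for each candidate $f$, the median estimator $\wh{w}_f'=\median_{r\in[d]} \wh{G}_{o_{f,r}(f)}^{-1}(u_r)_{h_r(f)}$ and keeping $f$ if and only if $|\wh{w}_f'|>\nu/2$.

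The first step is to invoke Lemma~\ref{lem:bucket_noise}: for every $f\in[n]$, in at least $8d/10$ repetitions $r\in[d]$, the residual mass landing in bucket $h_r(f)$ from $\{f'\neq f\}$ is at most
\[
\frac{10}{(1-\eps)B}\|\wh{w}_{[n]\setminus\{f\}}\|_1 \;\leq\; \frac{1}{\beta k}\|\wh{w}\|_1 \;\leq\; \frac{\nu}{16},
\]
where the last inequality uses the hypothesis $\nu\ge (16/(\beta k))\|\wh{w}\|_1$. Call these the \emph{good} repetitions for $f$. For claim~(iii), if $|\wh{w}_f|\ge \nu$ then in every good repetition the bucket signal viewed by Lemma~\ref{lem:1-sparse} has $|\wh{w}_f|\ge \nu \ge 16\cdot(\nu/16)\ge 3\cdot(\text{bucket noise})$, so the 1-sparse recovery procedure correctly returns $f$. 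Since this happens in a strict majority of repetitions, $f$ is nominated and will enter the candidate set.

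For claims~(i) and~(ii) I would reuse the median argument verbatim from the proof of Lemma~\ref{lem:recovery_of_HH}. Because at least $8d/10$ of the bucket estimates for any $f$ lie within $\nu/16$ of $\wh{w}_f$, the median $\wh{w}_f'$ satisfies $|\wh{w}_f-\wh{w}_f'|\le \nu/16$. Hence any $f$ passing the threshold $|\wh{w}_f'|>\nu/2$ has $|\wh{w}_f|\ge \nu/2-\nu/16=7\nu/16$, which is~(i), and then $|\wh{w}_f-\wh{w}_f'|\le \nu/16 \le |\wh{w}_f|/7$, which is~(ii).

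The main obstacle I anticipate is the handling of \emph{spurious} nominations produced by Lemma~\ref{lem:1-sparse} when it is applied to a bucket with no dominant coordinate: on such buckets the routine may return an arbitrary frequency $f$. The safety net is that such an $f$ is retained only if its median estimator across \emph{all} $d$ repetitions exceeds $\nu/2$; the same $\nu/16$ median concentration that drives (i) forces $|\wh{w}_f|\ge 7\nu/16$ in that case, so the output support is always meaningful. Thus the extra slack already present in the thresholds $\nu$ and $\nu/2$ absorbs the additional failure modes introduced by the sublinear identification step, and no new quantitative argument beyond a careful rewriting of the proof of Lemma~\ref{lem:recovery_of_HH} should be needed.
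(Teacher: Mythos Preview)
Your proposal is correct and matches the paper's proof: parts (i) and (ii) are handled by the same median-plus-threshold argument as in Lemma~\ref{lem:recovery_of_HH}, and part (iii) is obtained by combining Lemma~\ref{lem:bucket_noise} with Lemma~\ref{lem:1-sparse} (the paper phrases the dominance check as $16(1-\eps)\ge 3$). The only cosmetic difference is that you take the median over all $r\in[d]$ whereas Algorithm~\ref{alg:recovery_sub} takes it over the repetitions in which $f$ was nominated; this does not change the substance of the argument.
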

\begin{proof}
The proof of (i) and (ii) are the same as the proof of Lemma~\ref{lem:recovery_of_HH}.
Next we prove (iii). When $|\hat{w}_f|\geq \nu$, we have 
\[
|\wh{G}_{o_{f,r}(i)}\wh{w}_f|\geq (1-\eps)\nu \geq \frac{16(1-\eps)}{\beta k}\|\wh{w}_{[n]\setminus \{f\}}\|_1.
\]
Hence for the signal $y_r\in \mathbb{C}^n$ defined via its Fourier coefficients as
\[
(\wh{y_r})_{f'} = \wh{G}_{o_{f,r}(f')} \wh{x}_{f'},
\]
By Lemma~\ref{lem:bucket_noise}, since $16(1-\eps)\geq 3$, we see that $y_r$ with frequency $f$ satisfies the condition of Lemma~\ref{lem:1-sparse} and thus it will be recovered in at least $8d/10$ repetitions $r\in [d]$. The measurements are exactly $(m_H)_{h(f)}$ with $q\in Q$. The thresholding argument is the same as in the proof of Lemma~\ref{lem:recovery_of_HH}.
\end{proof}

Observe that Lemma~\ref{lem:main_loop_invariant} continues to hold if we replace Algorithm~\ref{alg:linear_recovery} with Algorithm~\ref{alg:recovery_sub} and Lemma~\ref{lem:recovery_of_HH} with Lemma~\ref{lem:recovery_of_HH_sublinear}. Now we are ready to prove our main theorem, Theorem~\ref{thm:intro_main_2}, on the sublinear-time algorithm.

\begin{proof}[Proof of Theorem~\ref{thm:intro_main_2}]
The recovery guarantee follows identically as in the proof of Theorem~\ref{thm:intro_main_1}.

The measurements are $u_q$ for $q\in Q$ in each of the $d$ repetitions, and calculating each $u_q$ requires $O(k)$ measurements (Lemma~\ref{lem:hashtobins}). There measurements are reused throughout the iteration in the overall algorithm, hence there are $O(k d|Q|) = O(k \cdot k\log n \cdot \log n) = O(k^2\log^2 n)$ measurements in total.

Each call to \textsc{SubRecovery} runs in time $O(d(B\log B + \|\hat z\|_0\log n + B\log n) + kd) = O(k^2\log^2 n + k\|\hat z\|_0\log^2 n) = O(k^2 \log^2 n)$, where we use the fact that $\|\hat z\| = O(k)$ from Lemma~\ref{lem:main_loop_invariant}(a). The overall runtime is therefore $O(k^2\log^3 n)$.
\end{proof}

\begin{algorithm}
\caption{Sublinear-time Sparse Recovery for $\wh{x}-\wh{z}$}\label{alg:recovery_sub}
\begin{algorithmic}
\Procedure{SubRecovery}{$x,\wh{z},\nu$}
\State $\Lambda = \emptyset$
\For{$r=1$ to $d$}
	\For{each $q\in Q$}	\Comment{$Q$ as in Lemma \ref{lem:1-sparse}}
		\State $u_q\gets \Call{HashToBins}{x,\wh{z},(\sigma_r,q,b_r)}$
	\EndFor
	\For{$b = 1$ to $B$}
		\State $f\gets \Call{OneSparseRecovery}{\{(u_q)_b\}_{q\in Q}}$
		\State $\Lambda = \Lambda\cup \{f\}$
		\State $v_{f,r} = (u_0)_{h_r(f)}$
	\EndFor
\EndFor
\State $\wh{w}' \gets 0$
\For{each $f\in \Lambda$}	
	\State $v_f \gets \median_r v_{f,r}$ \Comment{median is taken over all $r$ such that $v_{f,r}$ exists}
	\If{$|v_f|\geq \nu/2$}
		\State $\wh{w}'_f\gets v_f$
	\EndIf
\EndFor
\State \Return $\wh{w}'$
\EndProcedure
\end{algorithmic}
\end{algorithm}

\section{Incoherent Matrices via Subsampling DFT Matrix}\label{sec:incoherent_appendix}

Consider an $N\times N$ unitary matrix $A$ and assume that $|A_{i,j}|\leq C/\sqrt{n}$ for all $i,j$. Our goal in this section is to show how to sample deterministically $m=C_m k^2\log n$ rows of $A$, obtaining a matrix $B$, such that $|\langle B_i,B_j\rangle| \leq m/(kn)$ for all pairs $i\neq j$. Once we have such $B$, the rescaled matrix $B'=\sqrt{ \frac{n}{m}}A$ is a $(1/k)$-incoherent matrix, that is,  $|\langle B'_i,B'_j\rangle| \leq 1/k$ for all pairs $i\neq j$.

Let $\delta_1,\dots,\delta_n$ be i.i.d.\@ Bernoulli variables with $\Pr(\delta_\ell = 1) = p$ for some $p = m/n$. Let $i,j\in [n]$ such that $i\neq j$, then
\[
\langle B_i,B_j\rangle = \sum_\ell \delta_\ell A_{\ell,i}\overline{A}_{\ell,j}.
\]

Let $z_\ell = A_{\ell,i}\overline{A}_{\ell,j}$, then $|z_\ell|\leq \eta$, where $\eta=C^2/n$. We consider the real and the imaginary parts separately, since for a complex random variable $Z$,
\[
\Pr(|Z| > t) \leq \Pr\left(|\Re Z| > \frac{t}{\sqrt{2}}\right) + \Pr\left(|\Im Z| > \frac{t}{\sqrt{2}}\right).
\]
Hence it suffices to consider the real variable problem as follows. Suppose that $a_1,\dots,a_n\in \R$ satisfy $|a_i|\leq \eta$, and consider the centred sum $S = \sum_i (\delta_i - p)a_i$. We wish to find $\delta_1,\dots,\delta_n$ deterministically such that $|S| \leq m/(kn)$.

Define the pessimistic estimator to be
\[
f_r(\delta_1,\dots,\delta_r) = e^{-\lambda t}\left(e^{\lambda\sum_{i=1}^r (\delta_i-p)a_i} \prod_{i=r+1}^n M_i(\lambda) + e^{-\lambda \sum_{i=1}^r (\delta_i-p)a_i} \prod_{i=r+1}^n M_i(-\lambda)\right)
\]

The moment generating function of $(\delta_i - p)a_i$ is
\[
M_i(\lambda) = p e^{\lambda (1-p) a_i} + (1-p) e^{-\lambda p a_i},\quad i = 1,\dots,n.
\]

\paragraph*{Pessimistic Estimation} Let $w = \sum_{i=1}^r (\delta_i - p) a_i$, where $\delta_1,\dots,\delta_r$ have been fixed.
\begin{align*}
\Pr(|S| > t | \delta_1,\dots,\delta_r) &= \Pr(S > t | \delta_1,\dots,\delta_r ) + \Pr(-S > t | \delta_1,\dots,\delta_r)\\
&= \Pr(e^{\lambda S} > e^{\lambda t} | \delta_1,\dots,\delta_r) + \Pr(e^{-\lambda S} > e^{\lambda t} | \delta_1,\dots,\delta_r)\\
&\leq e^{-\lambda t} \E (e^{\lambda S} + e^{-\lambda S} | \delta_1,\dots,\delta_r)   \\
&= e^{-\lambda t}\left(e^{\lambda w} \prod_{i=r+1}^n M_i(\lambda) + e^{-\lambda w} \prod_{i=r+1}^n M_i(-\lambda)\right)\\
&= f_r(\delta_1,\dots,\delta_r).
\end{align*}

\paragraph*{Derandomization step} One can show first that
\begin{equation}\label{eqn:incoherent_derandom_step_critical}
f_r(\delta_1,\dots,\delta_r) = p f_{r+1}(\delta_1,\dots,\delta_r,1) + (1-p) f_{r+1}(\delta_1,\dots,\delta_r,0),
\end{equation}
which is equivalent to 
\begin{equation}\label{eqn:incoherent_derandom_aux}
e^{\lambda w}M_{r+1}(\lambda)\prod_{i=r+2}^n M_i(\lambda) + e^{-\lambda w}M_{r+1}(-\lambda)\prod_{i=r+2}^n M_i(-\lambda) = p M' + (1-p) M'',
\end{equation}
where
\begin{align*}
M' &=  e^{\lambda (w+(1-p)a_i)}\prod_{i=r+2}^n M_i(\lambda) + e^{-\lambda (w+(1-p)a_i)}\prod_{i=r+2}^n M_i(-\lambda),\\
M'' &=  e^{\lambda (w-p a_i)}\prod_{i=r+2}^n M_i(\lambda) + e^{-\lambda (w -p a_i)}\prod_{i=r+2}^n M_i(-\lambda).
\end{align*}
It is now clear that the left-hand side of \eqref{eqn:incoherent_derandom_aux} is $pM' + (1-p)M''$, and therefore \eqref{eqn:incoherent_derandom_step_critical} holds. This implies that
\[
f_r(\delta_1,\dots,\delta_r) \geq \min\{ f_{r+1}(\delta_1,\dots,\delta_r,1), f_{r+1}(\delta_1,\dots,\delta_r,0)\}.
\]

\paragraph*{Initial condition} This is a standard argument for Bernstein's inequality. For notational convenience, let $\phi(x) = (e^{\lambda x}-\lambda x-1)/x^2$. Note that $\phi(x)$ is increasing on $(0,\infty)$. Using Taylor's expansion, one can bound that (see~\cite[p35]{BLM})
\[
M_i(\lambda) \leq \exp\left(\phi(|a_i|) p(1-p) a_i^2\right)\leq \exp\left(\phi(\eta) p(1-p) a_i^2\right).
\]
and (see~\cite[p98]{tropp:book})
\[
\phi(\eta) \leq \frac{\lambda^2/2}{1-\lambda \eta/3},\quad \lambda < \frac{3}{\eta}.
\]
It then follows (see~\cite[p98]{tropp:book}) that
\begin{align*}
\Pr(|S| > t) &\leq 2e^{-\lambda t} e^{\phi(\eta) p(1-p) \sum_i |a_i|^2} \\
&\leq 2\exp\left(-\lambda t + n\eta^2 p(1-p) \frac{\lambda^2/2}{1-\lambda \eta/3}\right)\\ &\leq 2\exp\left(-\frac{t^2/2}{n\eta^2 p(1-p) + t\eta/3}\right),
\end{align*}
provided that $\lambda = t/(n \eta^2 p(1-p) + t\eta/3) \in (0,3/\eta)$.

When $t = m/(kn)$, $p = m/n$ and $\eta=C^2/n$, $\lambda \simeq \log n < 3/\eta$ and the above probability is at most
\[
2\exp\left(-\frac{1}{2(C^4+\frac{c}{3})}\cdot \frac{m}{k^2}\right) \leq 2\exp(-c' C_m\log n) \leq \frac{1}{n^3},
\]
provided that $C_m$ is large enough.

Therefore at step $r$, the algorithm minimizes $f_{r+1}(\delta_1,\dots,\delta_{r+1})$ by choosing $\delta_{r+1}$, and at the end of step $r+1$, all $\delta_1,\dots,\delta_r$ have been fixed and such that $|\sum_i (\delta_i-p)a_i| \leq t$.

Now we return to the original incoherence problem in the complex case. We can define $2n(n-1)$ events, $E_{i,j}$ and $F_{i,j}$, for every pair $i\neq j$ as
\[
E_{i,j} = \left\{|\Re\langle B_i,B_j\rangle| > t\right\},\quad F_{i,j} = \left\{|\Im\langle B_i,B_j\rangle| > t\right\}
\]
For each pair of $i\neq j$, using the preceding argument, we have pessimistic estimators $f^{(1)}_r(i,j;\delta_1,\dots,\delta_r)$ by setting $a_\ell = \Re B_{i,\ell}\overline{B_{\ell,i}}$ and $f^{(2)}_r(i,j;\delta_1,\dots,\delta_r)$ by setting $a_\ell = \Im B_{i,\ell}\overline{B_{\ell,j}}$ such that
\begin{itemize}
	\item (pessimistic estimation) 
	\begin{gather*}
			f^{(1)}_r(i,j;\delta_1,\dots,\delta_r)\geq \Pr(E_{i,j}|\delta_1,\dots,\delta_r)\\
			f^{(2)}_r(i,j;\delta_1,\dots,\delta_r)\geq \Pr(F_{i,j}|\delta_1,\dots,\delta_r)
	\end{gather*}
	\item (derandomization step)
	\begin{equation}\label{eqn:derandom_E_F}
		f^{(s)}_r(i,j;\delta_1,\dots,\delta_r) = p f^{(s)}_{r+1}(i,j;\delta_1,\dots,\delta_r,1) + (1-p) f^{(s)}_{r+1}(i,j;\delta_1,\dots,\delta_r,0), \quad s = 1,2
	\end{equation}
	\item (initial condition)
	\[
		\sum_{i\neq j} f_0^{(1)}(i,j) + f_0^{(2)}(i,j) < \frac{1}{n}.
	\]
\end{itemize}
Note that \eqref{eqn:derandom_E_F} implies
\begin{multline*}
\sum_{i\neq j} \left[f^{(1)}_r(i,j;\delta_1,\dots,\delta_r) + f^{(2)}_r(i,j;\delta_1,\dots,\delta_r)\right] \\
\geq \min_{\delta_{r+1}\in \{0,1\}} \sum_{i\neq j} \left[f^{(1)}_r(i,j;\delta_1,\dots,\delta_r,\delta_{r+1}) + f^{(2)}_r(i,j;\delta_1,\dots,\delta_r,\delta_{r+1})\right].
\end{multline*}

In addition, we also need to control the number of $\delta_i$'s which take value $1$; we want this number to be $O(m)$. This can be achieved by combining another derandomization procedure on $\sum_i \delta_i$ using one-sided Chernoff bounds. Define the event $G = \{\sum_i \delta_i > 2m\}$. Then for $\kappa > 0$,
\begin{align*}
\Pr(G|\delta_1,\dots,\delta_r) &= \Pr(e^{\kappa\sum_i \delta_i} > e^{2m\kappa} | \delta_1,\dots,\delta_r) \\
&\leq \exp\left(-2m\kappa + \kappa\sum_{i=1}^r \delta_i\right) \prod_{i=r+1}^n \E e^{\kappa \delta_i} \\
&= \exp\left(-2m\kappa + \kappa\sum_{i=1}^r \delta_i\right) (M(\kappa))^{n-r},
\end{align*}
where
\[
M(\kappa) = \E e^{\kappa \delta_i} = p e^{\kappa} + 1 - p
\]
is the moment generating function of $\delta_i$. Define our pessimistic estimator to be
\[
g_r(\delta_1,\dots,\delta_r) = \exp\left(-2m\kappa + \kappa\sum_{i=1}^r \delta_i\right) (M(\kappa))^{n-r},
\]
then, similar to the proof in Section~\ref{sec:linear_time}, we have
\begin{itemize}
	\item (pessimistic estimation) 
	\[			
		g(\delta_1,\dots,\delta_r)\geq \Pr(G|\delta_1,\dots,\delta_r),
	\]
	\item (derandomization step)
	\[
		g_r(\delta_1,\dots,\delta_r) \geq p g_{r+1}(\delta_1,\dots,\delta_r,1) + (1-p) g_{r+1}(\delta_1,\dots,\delta_r,0),
	\]
	\item (initial condition) When $\kappa$ is small enough and $C_m$ large enough,
	\[
		g_0 < \frac{1}{2}.
	\]
\end{itemize}

Overall, our standard derandomization procedure, which at step $r$ chooses $\delta_{r+1}\in \{0,1\}$ that minimizes
\[
\sum_{i\neq j} \left[ f^{(1)}_{r+1}(i,j;\delta_1,\dots,\delta_r,\delta_{r+1}) + f^{(2)}_{r+1}(i,j;\delta_1,\dots,\delta_r,\delta_{r+1})\right] + g_{r+1}(\delta_1,\dots,\delta_r,\delta_{r+1})
\]
will find $\delta_1,\dots,\delta_r$ such that none of $E_{i,j}$ and $F_{i,j}$ and $G$ holds, which implies that $|\langle B_i,B_j\rangle|\leq t = m/(kn)$ for all $i\neq j$ and $\sum \delta_i \leq 2m$. That is, we have chosen $2m$ rows of $A$, obtaining a matrix $B$ of incoherence at most $m/(kn)$.

\section{Incoherent Matrices and Analytic Number Theory}\label{sec:weil}

In this section we give new results via the connection between the incoherent matrices and the exponential sum of characters, a classical quantity of interest in analytic number theory. Such connection has been formerly exploited, for instance, by Xu~\cite{Xu_RIP} and Bourgain et al.~\cite{bourgain2011breaking} for explicit constructions of RIP matrices. 
We utilize the connection bidirectionally: we shall give explicit constructions of incoherent matrices using exponential sums, and improve the lower bound of an exponential sum using a lower bound of incoherent matrices. 

\subsection{A simple construction via Gauss sums}

We give a rather simple construction of an $(1/\sqrt n)$-incoherent matrix $M \in \mathbb{C}^{\frac{n+1}{2} \times n}$. It is expected that Gauss sums will behave nicely for incoherent matrices, since they have the optimal rate of cancellation: summing $p/2$ elements gives cancellation $\sqrt{p}$. Let $p$ be a prime number and let $Q = \{ x \in \Z_p: \exists y \in \Z_p, y^2 = x\}$, i.e.\ the set of quadratic residues in $\Z_p$, including $0$. It is a standard fact that $|Q| = (p+1)/2$. We shall show that the rows of the DFT matrix indexed by the elements of $Q$ give an incoherent matrix with an appropriate scaling.
Let $\omega = e^{2\pi \sqrt{-1}/p}$. Observe that 
	\[
	\left|\sum_{x \in Q} \omega^{ t x } \right| = \left|\frac{1}{2} + \frac{1}{2}\sum_{x \in \Z_p} \omega^{t x^2 } \right| \leq \frac{1}{2} + \sqrt{p},	\quad \forall t \in \Z_p^*,
	\]
where the last inequality follows from the triangle inequality and the standard property of Gauss sums (see, e.g., \cite[p91]{ireland1990classical}). 

Now, let $M\in \C^{|Q|\times p}$ be defined as $M_{x,t} = \omega^{tx}$ for $x\in Q$ and $t\in \Z_p$. For every pair $(t_1,t_2) \in \Z_p \times \Z_p$ with $t_1 \neq t_2$, we have that the inner product of the $t_1$-th and the $t_2$-th column of $M$ is exactly $\sum_{x \in Q} \omega^{(t_1-t_2) x}$. Normalising $M$ gives the desired result.

\subsection{Proof of Theorem \ref{thm:fourier_strongly_explicit}}

In the previous subsection we obtained an incoherent matrix by picking the rows of DFT indexed by quadratic residues, i.e. quadratic polynomials. Motivated by this, we show that taking polynomials of a higher degree can give an improved result that works in a larger range of parameters. We shall need the following deep theorem of Weyl.

\begin{theorem}[{\cite[Theorem 4.3]{nathanson1996additive}}]\label{thm:weyl}
Let $M,N,q$ be positive integers and $\alpha$ an integer such that $(\alpha,q ) =1$. If $g$ is a real polynomial of degree $d \geq 2$ with leading coefficient $a$ such that $|a - \frac{\alpha}{q}| \leq q^{-2}$, then for any $\epsilon >0$ we have 
\[	
\left| \sum_{x=M+1}^{M+N} e^{2\pi\I g(x) } \right| = O\left( N^{1+\epsilon} \left( \frac{1}{q} + \frac{1}{N} + \frac{q}{N^d}\right)^{2^{1-d}} \right),
\]	  
where the hidden constant in the $O$-notation depends on $d$ and $\eps$.
\end{theorem}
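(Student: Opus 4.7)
The plan is to prove Weyl's inequality by the classical technique of \emph{Weyl differencing} (iterated Cauchy--Schwarz that reduces the polynomial degree by one at the cost of squaring the sum), followed by a Diophantine counting step that invokes the rational approximation $|a - \alpha/q| \leq q^{-2}$. Set $S := \sum_{x=M+1}^{M+N} e^{2\pi \I g(x)}$ and $\Delta_h g(x) := g(x+h) - g(x)$. The starting identity
\[
|S|^2 \;=\; \sum_{|h|<N} \sum_{x \in I_h} e^{2\pi \I \Delta_h g(x)}
\]
(where $I_h \subseteq [M+1,M+N]$ is the range of $x$ compatible with the shift $h$) expresses $|S|^2$ as a sum involving the polynomial $\Delta_h g$, whose degree is $d-1$ in $x$ with leading coefficient $d\,a\,h$. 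Iterating this, interleaved with Cauchy--Schwarz, for $d-1$ steps yields the standard Weyl bound
\[
|S|^{2^{d-1}} \;\leq\; (2N)^{2^{d-1}-d} \sum_{|h_1|,\ldots,|h_{d-1}|<N} \bigl|S(h_1,\ldots,h_{d-1})\bigr|,
\]
where $S(h_1,\ldots,h_{d-1}) = \sum_x e^{2\pi \I \Delta_{h_1}\cdots\Delta_{h_{d-1}} g(x)}$ and the iterated difference polynomial is linear in $x$ with leading coefficient $d!\, a\, h_1\cdots h_{d-1}$ (the lower-order coefficients of $g$ are annihilated after $d-1$ differencings and hence play no role).

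Each inner sum is a geometric progression and is bounded by $\min\bigl(N, 1/(2\|d!\, a\, h_1\cdots h_{d-1}\|)\bigr)$, with $\|\cdot\|$ denoting distance to the nearest integer. Setting $m := h_1\cdots h_{d-1}$, which ranges over $|m|\leq N^{d-1}$, and using the divisor bound $r(m) \ll N^{\epsilon}$ to account for the multiplicity with which a given $m$ is represented, the estimate reduces to bounding
\[
\Sigma \;:=\; \sum_{|m| \leq N^{d-1}} \min\!\Bigl(N, \frac{1}{\|d!\, a\, m\|}\Bigr).
\]
Here the rational approximation hypothesis enters: partitioning the $m$'s into blocks of length $q$, within each block the values $\{d!\, a\, m\}$ are $(1/q)$-separated (up to an error controlled by $|a-\alpha/q|\leq q^{-2}$), so each block contributes $O(N + q\log q)$. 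Summing over the $O(N^{d-1}/q + 1)$ blocks yields $\Sigma \ll N^{\epsilon}(N^d/q + N^{d-1} + q)$. Plugging this into the Weyl bound and extracting a $(2^{d-1})$-th root gives
\[
|S| \;\ll\; N^{1+\epsilon}\!\left(\frac{1}{q}+\frac{1}{N}+\frac{q}{N^{d}}\right)^{\!2^{1-d}},
\]
as claimed.

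The main obstacle is the Diophantine counting step for $\Sigma$: one must justify the block decomposition using only the approximation $|a-\alpha/q|\leq q^{-2}$ rather than an exact rational value, carefully combine the divisor bound (which is responsible for the $N^\epsilon$ factor) with the sum of $\min(N, 1/\|\cdot\|)$, and verify that the three terms that emerge combine precisely into the trinomial $(1/q + 1/N + q/N^d)$ with the sharp exponent $2^{1-d}$. A secondary technical point is bookkeeping of the prefactor $(2N)^{2^{d-1}-d}$ through the iteration and checking that the base of the induction (the degree-one geometric sum after $d-1$ differencings) is compatible with the recursion and insensitive to the non-leading coefficients of $g$, which are all eliminated by the differencing.
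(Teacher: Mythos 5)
The paper offers no proof of this theorem, citing it directly from Nathanson; your proposal correctly reconstructs the standard argument from that source --- Weyl differencing iterated $d-1$ times, the geometric-series bound $\min\bigl(N, \|d!\,a\,h_1\cdots h_{d-1}\|^{-1}\bigr)$, the divisor bound, and the block-of-length-$q$ Diophantine counting lemma --- and the exponent bookkeeping ($(2N)^{2^{d-1}-d}$ prefactor, $2^{1-d}$-th root) checks out. The one detail to make explicit is that the divisor-bound step applies only to $m\neq 0$: the tuples with some $h_i=0$ have multiplicity $\Theta(N^{d-2})$, not $O(N^{\epsilon})$, but they contribute $O(N^{d-1})$ directly and are absorbed into the $1/N$ term of the trinomial.
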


We are now ready to prove Theorem \ref{thm:fourier_strongly_explicit}.
\begin{proof}
Pick any polynomial $g$ of degree $d$ such that every coefficient of $g$ is an integer multiple of $1/p$. Pick also any $m$ consecutive points in $\Z_p$; we can just take $0$ to $m-1$. Take the rows of DFT indexed by $g$ evaluated on these $m$ consecutive points. We shall show that after appropriate normalization this corresponds to an incoherent matrix of the desired form. The inner product between two columns indexed by $t_1,t_2$ of the formed matrix is 
	\[ \sum_{x=0}^{m-1} e^{2\pi \I g(t_1)} \cdot e^{-2\pi \I g(t_2)} = \sum_{x=0}^{m-1} e^{2\pi \I \left( g(t_1) - g(t_2) \right)}.
	\]
Observe that $g(t_1) - g(t_2)$ is a $d$-degree polynomial where every coefficient is an integer multiple of $1/p$. Applying Theorem \ref{thm:weyl} with $N = m$, $a = \frac{t}{p}$, $\alpha = t$, $q = p$ and noticing that $q\geq N$, we see that the above sum is at most
\[	
O\left(m^{1+\epsilon} 	\left( \frac{1}{m} + \frac{p}{m^d} \right)^{2^{1-d}}\right).
\]
Rescale the formed matrix by $1/\sqrt{m}$, the incoherence of the matrix is rescaled by $1/m$ and thus becomes
\[
O\left(m^{\epsilon} \left( \frac{1}{m} + \frac{p}{m^d} \right)^{2^{1-d}}\right),
\]
yielding the desired result.
\end{proof}

\subsection{Proof of Theorem~\ref{thm:fourier_incoherent_subgroup}}

\begin{proof}
Suppose that $g$ is a generator of the multiplicative cyclic group $\Z_p^\ast$ (we shall show how to find such $g$ later). For every $d$ that divides $p-1$ we shall take the rows of DFT indexed by the multiplicative subgroup $G$ that is generated by $g^{(p-1)/d}$. Since $g$ is a generator of $\Z_p^\ast$ it must hold that $|G| = d$. The incoherence bound follows by a classical fact that (see, e.g.~\cite{kurlberg}) for any $t_1,t_2 \in \mathbb{Z}_p$ with $t_1\neq t_2$,
\[	\left | \sum_{a \in G} e^{2\pi\I \frac{a(t_1-t_2)}{p}} \right| \leq \sqrt{p}.
\]
Rescaling gives the desired incoherence bound.

To find a generator $g$ of $\Z_p^\ast$ is a classic problem with a rich research history. We include a simple, standard algorithm below for completeness.

The first step is to factor $p-1$ in $\tilde{O}(\sqrt{p})$ time. We can find all primes smaller than $\sqrt{p-1}$ in $O(\sqrt{p} \log \log p)$ time using Eratosthene's sieve. For each such prime $q$ we shall find the highest power $q^\ell$ which divides $p-1$. Let $t$ be the number that is obtained after dividing $p-1$ with $q^\ell$ for all such $q,\ell$. If $t\neq 1$, it must be a prime, otherwise for $t=ab$ one of $a,b$ would be at most $\sqrt{t} \leq \sqrt{p-1}$. 

Now we are ready to find a generator $g$. It is known that the smallest generator of $\Z_p^\ast$ is $O(p^{1/4+\epsilon})$~\cite[Theorem 3]{burgess} and thus we shall iterate over the first $O(p^{1/4+\eps})$ elements of $\Z_p^\ast$ and check if every such element $z$ is a generator by checking whether $z^{(p-1)/d}\neq 1$ in $\mathbb{Z}_p^*$ for all prime divisors $d$ of $p-1$. To ensure that such a $z$ is a generator, observe first that the checking condition guarantees that $z$ is of order $p-1$, and checking only prime $d$ suffices (since if $d$ is composite and $z^{(p-1)/d} =1$ this implies $z^{(p-1)/{d'}}=1$ for all divisors $d'$ of $d$); moreover, it is a basic fact in group theory that the order of any subgroup divides the order of the group and hence we need only look at divisors of $p-1$. The runtime of this part is $\widetilde{O}(p^{1/4+\eps})$.
\end{proof}


\subsection{Strengthening the lower bound in \cite{winterhof2001} }

The lower bound in \cite{winterhof2001} states that for any $n,d \geq 2$ and with $\mathrm{gcd}(d,n-1)=1$, any subset $S \subseteq \Z_n$, there exists $b \in \Z_n^\ast$ and an irreducible $d$-degree polynomial $g$ with coefficients in $\Z_n$, such that

	\[	\left | \sum_{x \in S} e^{2\pi \sqrt{-1} \frac{bg(x)}{n} } \right| = \Omega\left(\sqrt{|S|} \right).	\]

With the connection to incoherent matrices and the lower bound of Alon, we obtain a much stronger result. In fact we have for any $d \geq 1$ and any polynomial $g$ with coefficients in $\Z_n$ that
\begin{equation}\label{eqn:sum_lb}
\left | \sum_{x \in S} e^{2\pi \sqrt{-1} \frac{bg(x)}{n} } \right| = \Omega\left(\sqrt{|S|\log_{|S|}n} \right)
\end{equation}
for some $b\in \Z_n^\ast$, provided that $|S| = \Omega(\log n/\log \log n)$. In the case that $|S| = O( \log n/ \log \log n)$ we still have a lower bound of $\Omega( \sqrt{|S|} )$.

Note that the condition $d \geq 2$ has been relaxed to $d \geq 1$, the assumption that $\gcd(d,n-1)=1$ has been removed, the conclusion ``there exists an irreducible polynomial'' has been replaced with the condition ``for any polynomial'', and the right-hand side has been amplified by a multiplicative factor of $\sqrt{\log_{|S|} n}$ for $|S| = \Omega(\log n/ \log \log n)$. 

Our new lower bound follows immediately from Alon's lower bound on incoherent matrices~\cite{alon2009perturbed}. Indeed, assume that there exists a polynomial $g$ such that for all $b\in \Z_n^\ast$ the left-hand side of~\eqref{eqn:sum_lb} is at most $c\sqrt{|S| \log_{|S|}n}$ for some absolute constant $c$. Consider the matrix with the rows of the DFT matrix indexed by numbers $\{g(x): x \in S\}$ (some rows of the DFT matrix may appear more than once). Observe that after normalizing the matrix by $\frac{1}{\sqrt{|S|}}$, the incoherence is 
	\[	
	\frac{1}{|S|} \sup_{\substack{(t_1,t_2) \in \Z_n \times \Z_n\\ t_1 \neq t_2}} \left| \sum_{x \in S } e^{2\pi \sqrt{-1} \frac{(t_1-t_2)g(x)}{n} } \right| \leq \frac{c \sqrt{\log_{|S|}n}}{ \sqrt{|S|} } = c\sqrt{\frac{\log n}{|S|\log |S|}}.
	\]
This would violate the lower bound in \cite{alon2009perturbed}, which states that an $m\times n$ $(1/k)$-incoherent matrix must satisfy $m \geq \alpha \cdot k^2 \log_k n$ for some absolute constant $\alpha$, since
\[
|S| <\alpha \left( \frac{|S|\log |S|}{ c^2 \log n} \right) \cdot \frac{\log n}{\frac{1}{2} \log \left( \frac{|S|\log|S|}{ c^2\log n} \right)}
\]
for $c$ small enough, when $|S| = \Omega(\log n/\log\log n)$. In the case of $|S| = O( \log n / \log \log n)$ we can still use the quadratic bound ($m = \Omega(k^2)$) on incoherent matrices to obtain a bound of $\Omega( \sqrt{|S|} )$. 

\section{Open Problems and Future Direction}
A direction of research is to design deterministic schemes that break the quadratic barrier for signals with structured Fourier support. For example, subsampling the rows of the DFT matrix to obtain \textsc{RIP} matrices depends highly on the structure of the vectors we would like to preserve. The more additive structure the support of a $k$-sparse vector $x$ has, the worse is the concentration of a random Fourier coefficient of $x$. Equivalently, the less additive structure the support of $x$ has, the flatter its Fourier transform is, and hence, the better concentration bounds we obtain. The concentration in the extreme case, when the support of $x$ is ``dissociated'', is captured by the renowned Rudin's inequality in additive combinatorics (see, e.g.~\cite[Lemma 4.33]{tao2006additive}). We thus believe that it is an interesting direction to use  machinery from the field of additive combinatorics and the relevant fields in order to obtain new constructions and algorithms, at least for interesting subclasses of structured signals.

\section{Acknowledgements} 

We would like to thank anonymous reviewers for their valuable feedback.





\newpage
\addcontentsline{toc}{section}{References}
\bibliographystyle{alpha}
\bibliography{ref}

\newcommand{\etalchar}[1]{$^{#1}$}
\begin{thebibliography}{DBIPW10}

\bibitem[AGS03]{akaviagoldwasser03}
Adi Akavia, Shafi Goldwasser, and Shmuel Safra.
\newblock Proving hard-core predicates using list decoding.
\newblock In {\em FOCS}, volume~44, pages 146--159, 2003.

\bibitem[Aka10]{akavia10}
Adi Akavia.
\newblock Deterministic sparse {F}ourier approximation via fooling arithmetic
  progressions.
\newblock In {\em COLT}, pages 381--393, 2010.

\bibitem[Aka14]{akavia14}
Adi Akavia.
\newblock Deterministic sparse {F}ourier approximation via approximating
  arithmetic progressions.
\newblock {\em IEEE Transactions on Information Theory}, 60(3):1733--1741,
  2014.

\bibitem[AKM{\etalchar{+}}18]{hkmmvz19}
Haim Avron, Michael Kapralov, Cameron Musco, Christopher Musco, Ameya
  Velingker, and Amir Zandieh.
\newblock A universal sampling method for reconstructing signals with simple
  {F}ourier transforms.
\newblock {\em arXiv preprint arXiv:1812.08723}, 2018.

\bibitem[Alo09]{alon2009perturbed}
Noga Alon.
\newblock Perturbed identity matrices have high rank: Proof and applications.
\newblock {\em Combinatorics, Probability and Computing}, 18(1-2):3--15, 2009.

\bibitem[AM11]{amini2011deterministic}
Arash Amini and Farokh Marvasti.
\newblock Deterministic construction of binary, bipolar, and ternary compressed
  sensing matrices.
\newblock {\em IEEE Transactions on Information Theory}, 57(4):2360--2370,
  2011.

\bibitem[BCG{\etalchar{+}}14]{bcgls14}
Petros Boufounos, Volkan Cevher, Anna~C Gilbert, Yi~Li, and Martin~J Strauss.
\newblock What's the frequency, {K}enneth?: Sublinear {F}ourier sampling off
  the grid.
\newblock In {\em Algorithmica(A preliminary version of this paper appeared in
  the Proceedings of RANDOM/APPROX 2012, LNCS 7408, pp.61--72)}, pages 1--28.
  Springer, 2014.

\bibitem[BDF{\etalchar{+}}11]{bourgain2011breaking}
Jean Bourgain, Stephen~J Dilworth, Kevin Ford, Sergei~V Konyagin, and Denka
  Kutzarova.
\newblock Breaking the $k^2$ barrier for explicit {RIP} matrices.
\newblock In {\em Proceedings of the forty-third annual ACM symposium on Theory
  of computing}, pages 637--644. ACM, 2011.

\bibitem[BLM13]{BLM}
St{\'e}phane Boucheron, G{\'a}bor Lugosi, and Pascal Massart.
\newblock {\em Concentration Inequalities: A Nonasymptotic theory of
  Independence}.
\newblock Oxford University Press, 2013.

\bibitem[BLM17]{bld2017}
Afonso~S Bandeira, Megan~E Lewis, and Dustin~G Mixon.
\newblock Discrete uncertainty principles and sparse signal processing.
\newblock {\em Journal of Fourier Analysis and Applications}, pages 1--22,
  2017.

\bibitem[Bur62]{burgess}
David~A. Burgess.
\newblock On character sums and primitive roots.
\newblock {\em Proceedings of the London Mathematical Society},
  s3-12(1):179--192, 1962.

\bibitem[BZI17]{bzi17}
Sina Bittens, Ruochuan Zhang, and Mark~A Iwen.
\newblock A deterministic sparse {FFT} for functions with structured {F}ourier
  sparsity.
\newblock {\em Advances in Computational Mathematics, to appear.}, 2017.

\bibitem[CDD09]{cdd09}
Albert Cohen, Wolfgang Dahmen, and Ronald DeVore.
\newblock Compressed sensing and best $k$-term approximation.
\newblock {\em Journal of the American mathematical society}, 22(1):211--231,
  2009.

\bibitem[CI17]{ci17}
Mahdi Cheraghchi and Piotr Indyk.
\newblock Nearly optimal deterministic algorithm for sparse walsh-hadamard
  transform.
\newblock {\em ACM Transactions on Algorithms (TALG)}, 13(3):34, 2017.

\bibitem[Cip00]{cipra2000best}
Barry~A Cipra.
\newblock The best of the 20th century: Editors name top 10 algorithms.
\newblock {\em SIAM news}, 33(4):1--2, 2000.

\bibitem[CKPS16]{ckps16}
Xue Chen, Daniel~M Kane, Eric Price, and Zhao Song.
\newblock {F}ourier-sparse interpolation without a frequency gap.
\newblock In {\em Foundations of Computer Science (FOCS), 2016 IEEE 57th Annual
  Symposium on}, pages 741--750. IEEE, 2016.

\bibitem[CKSZ17]{cksz17}
Volkan Cevher, Michael Kapralov, Jonathan Scarlett, and Amir Zandieh.
\newblock An adaptive sublinear-time block sparse {F}ourier transform.
\newblock In {\em Proceedings of the 49th Annual ACM SIGACT Symposium on Theory
  of Computing}, pages 702--715. ACM, 2017.

\bibitem[CRT06]{crt06}
Emmanuel~J Candes, Justin~K Romberg, and Terence Tao.
\newblock Stable signal recovery from incomplete and inaccurate measurements.
\newblock {\em Communications on pure and applied mathematics},
  59(8):1207--1223, 2006.

\bibitem[CT06]{ct06}
Emmanuel~J Candes and Terence Tao.
\newblock Near-optimal signal recovery from random projections: Universal
  encoding strategies?
\newblock {\em IEEE transactions on information theory}, 52(12):5406--5425,
  2006.

\bibitem[DBIPW10]{dipw10}
Khanh Do~Ba, Piotr Indyk, Eric Price, and David~P Woodruff.
\newblock Lower bounds for sparse recovery.
\newblock In {\em Proceedings of the twenty-first annual ACM-SIAM symposium on
  Discrete Algorithms}, pages 1190--1197. SIAM, 2010.

\bibitem[DeV07]{devore2007}
Ronald~A DeVore.
\newblock Deterministic constructions of compressed sensing matrices.
\newblock {\em Journal of complexity}, 23(4):918--925, 2007.

\bibitem[Don06]{d06}
David~L. Donoho.
\newblock Compressed sensing.
\newblock {\em {IEEE} Trans. Information Theory}, 52(4):1289--1306, 2006.

\bibitem[FPRU10]{fpru10}
Simon Foucart, Alain Pajor, Holger Rauhut, and Tino Ullrich.
\newblock The gelfand widths of $\ell_p$-balls for $0 < p \leq 1$.
\newblock {\em Journal of Complexity}, 26(6):629--640, 2010.

\bibitem[FR13]{FR}
Simon Foucart and Holger Rauhut.
\newblock {\em A Mathematical Introduction to Compressive Sensing}.
\newblock Applied and Numerical Harmonic Analysis. Birkh\"auser Basel, 2013.

\bibitem[Gan08]{ganguly2008lower}
Sumit Ganguly.
\newblock Lower bounds on frequency estimation of data streams.
\newblock In {\em International Computer Science Symposium in Russia}, pages
  204--215. Springer, 2008.

\bibitem[GGI{\etalchar{+}}02]{ggims02}
Anna~C Gilbert, Sudipto Guha, Piotr Indyk, S~Muthukrishnan, and Martin Strauss.
\newblock Near-optimal sparse {F}ourier representations via sampling.
\newblock In {\em Proceedings of the thiry-fourth annual ACM symposium on
  Theory of computing}, pages 152--161. ACM, 2002.

\bibitem[GL89]{gl89}
Oded Goldreich and Leonid~A Levin.
\newblock A hard-core predicate for all one-way functions.
\newblock In {\em Proceedings of the twenty-first annual ACM symposium on
  Theory of computing}, pages 25--32. ACM, 1989.

\bibitem[GLPS10]{glps12}
Anna~C Gilbert, Yi~Li, Ely Porat, and Martin~J Strauss.
\newblock Approximate sparse recovery: optimizing time and measurements.
\newblock {\em SIAM Journal on Computing 2012 (A preliminary version of this
  paper appears in STOC 2010)}, 41(2):436--453, 2010.

\bibitem[GLPS17]{glps17}
Anna~C Gilbert, Yi~Li, Ely Porat, and Martin~J Strauss.
\newblock For-all sparse recovery in near-optimal time.
\newblock {\em ACM Transactions on Algorithms (TALG)}, 13(3):32, 2017.

\bibitem[GMS05]{gms05}
Anna~C Gilbert, S~Muthukrishnan, and Martin Strauss.
\newblock Improved time bounds for near-optimal sparse {F}ourier
  representations.
\newblock In {\em Optics \& Photonics 2005}, pages 59141A--59141A.
  International Society for Optics and Photonics, 2005.

\bibitem[GNP{\etalchar{+}}13]{gnprs13}
Anna~C Gilbert, Hung~Q Ngo, Ely Porat, Atri Rudra, and Martin~J Strauss.
\newblock $\ell_2/\ell_2$-foreach sparse recovery with low risk.
\newblock In {\em International Colloquium on Automata, Languages, and
  Programming}, pages 461--472. Springer, 2013.

\bibitem[HIKP12a]{hikp12a}
Haitham Hassanieh, Piotr Indyk, Dina Katabi, and Eric Price.
\newblock Nearly optimal sparse {F}ourier transform.
\newblock In {\em Proceedings of the forty-fourth annual ACM symposium on
  Theory of computing}, pages 563--578. ACM, 2012.

\bibitem[HIKP12b]{hikp12b}
Haitham Hassanieh, Piotr Indyk, Dina Katabi, and Eric Price.
\newblock Simple and practical algorithm for sparse {F}ourier transform.
\newblock In {\em Proceedings of the twenty-third annual ACM-SIAM symposium on
  Discrete Algorithms}, pages 1183--1194. SIAM, 2012.

\bibitem[HR16]{hr16}
Ishay Haviv and Oded Regev.
\newblock The restricted isometry property of subsampled {F}ourier matrices.
\newblock In {\em SODA}, pages 288--297. arXiv preprint arXiv:1507.01768, 2016.

\bibitem[IK14]{ik14}
Piotr Indyk and Michael Kapralov.
\newblock Sample-optimal {F}ourier sampling in any constant dimension.
\newblock In {\em Foundations of Computer Science (FOCS), 2014 IEEE 55th Annual
  Symposium on}, pages 514--523. IEEE, 2014.

\bibitem[IKP14]{ikp14}
Piotr Indyk, Michael Kapralov, and Eric Price.
\newblock ({N}early) {S}ample-optimal sparse {F}ourier transform.
\newblock In {\em Proceedings of the Twenty-Fifth Annual ACM-SIAM Symposium on
  Discrete Algorithms}, pages 480--499. SIAM, 2014.

\bibitem[IPW11]{ipw11}
Piotr Indyk, Eric Price, and David~P Woodruff.
\newblock On the power of adaptivity in sparse recovery.
\newblock In {\em Foundations of Computer Science (FOCS), 2011 IEEE 52nd Annual
  Symposium on}, pages 285--294. IEEE, 2011.

\bibitem[IR90]{ireland1990classical}
Kenneth Ireland and Michael Rosen.
\newblock {\em A Classical Introduction to Modern Number Theory}.
\newblock Graduate Texts in Mathematics. Springer, 1990.

\bibitem[Iwe08]{i08}
Mark~A Iwen.
\newblock A deterministic sub-linear time sparse {F}ourier algorithm via
  non-adaptive compressed sensing methods.
\newblock In {\em Proceedings of the nineteenth annual ACM-SIAM symposium on
  Discrete algorithms}, pages 20--29. Society for Industrial and Applied
  Mathematics, 2008.

\bibitem[Iwe10]{i10}
Mark~A Iwen.
\newblock Combinatorial sublinear-time {F}ourier algorithms.
\newblock {\em Foundations of Computational Mathematics}, 10(3):303--338, 2010.

\bibitem[Iwe13]{iw13}
Mark~A Iwen.
\newblock Improved approximation guarantees for sublinear-time {F}ourier
  algorithms.
\newblock {\em Applied And Computational Harmonic Analysis}, 34(1):57--82,
  2013.

\bibitem[Kap16]{k16}
Michael Kapralov.
\newblock Sparse {F}ourier transform in any constant dimension with
  nearly-optimal sample complexity in sublinear time.
\newblock In {\em Symposium on Theory of Computing Conference, STOC'16,
  Cambridge, MA, USA, June 19-21, 2016}, 2016.

\bibitem[Kap17]{k17}
Michael Kapralov.
\newblock Sample efficient estimation and recovery in sparse {FFT} via
  isolation on average.
\newblock In {\em Foundations of Computer Science (FOCS), 2017 IEEE 58th Annual
  Symposium on}, pages 651--662. Ieee, 2017.

\bibitem[KM93]{km93}
Eyal Kushilevitz and Yishay Mansour.
\newblock Learning decision trees using the {F}ourier spectrum.
\newblock {\em SIAM Journal on Computing}, 22(6):1331--1348, 1993.

\bibitem[Kur07]{kurlberg}
P\"ar Kurlberg.
\newblock Bounds on exponential sums over small multiplicative subgroups.
\newblock In {\em Additive combinatorics}, CRM proceedings \& lecture notes ;
  v. 43, pages 55--68. American Mathematical Society, 2007.

\bibitem[KVZ19]{kvz19}
Michael Kapralov, Ameya Velingker, and Amir Zandieh.
\newblock Dimension-independent sparse {F}ourier transform.
\newblock In {\em Proceedings of the Thirtieth Annual ACM-SIAM Symposium on
  Discrete Algorithms (SODA)}, pages 2709--2728. SIAM, 2019.

\bibitem[LN18]{nakos2017deterministic_heavy_hitters}
Yi~Li and Vasileios Nakos.
\newblock Deterministic heavy hitters with sublinear query time.
\newblock In {\em {APPROX-RANDOM}}, volume 116 of {\em LIPIcs}, pages
  18:1--18:18. Schloss Dagstuhl - Leibniz-Zentrum fuer Informatik, 2018.

\bibitem[LNW18]{lnw17}
Yi~Li, Vasileios Nakos, and David~P. Woodruff.
\newblock On low-risk heavy hitters and sparse recovery schemes.
\newblock In {\em Approximation, Randomization, and Combinatorial Optimization.
  Algorithms and Techniques, {APPROX/RANDOM} 2018, August 20-22, 2018 -
  Princeton, NJ, {USA}}, pages 19:1--19:13, 2018.

\bibitem[LWC13]{lawlor13}
David Lawlor, Yang Wang, and Andrew Christlieb.
\newblock Adaptive sub-linear time {F}ourier algorithms.
\newblock {\em Advances in Adaptive Data Analysis}, 5(01):1350003, 2013.

\bibitem[Man92]{m92}
Yishay Mansour.
\newblock Randomized interpolation and approximation of sparse polynomials.
\newblock In {\em International Colloquium on Automata, Languages, and
  Programming}, pages 261--272. Springer, 1992.

\bibitem[MZIC18]{mzic2017}
Sami Merhi, Ruochuan Zhang, Mark~A Iwen, and Andrew Christlieb.
\newblock A new class of fully discrete sparse {F}ourier transforms: Faster
  stable implementations with guarantees.
\newblock {\em Journal of Fourier Analysis and Applications}, 2018.

\bibitem[Nat96]{nathanson1996additive}
Melvyn~B. Nathanson.
\newblock {\em Additive Number Theory The Classical Bases}.
\newblock Graduate Texts in Mathematics. Springer New York, 1996.

\bibitem[NNW14]{nnw14}
Jelani Nelson, Huy~L Nguy{\^e}n, and David~P Woodruff.
\newblock On deterministic sketching and streaming for sparse recovery and norm
  estimation.
\newblock {\em Linear Algebra and its Applications}, 441:152--167, 2014.

\bibitem[NS19]{nakos2019stronger}
Vasileios Nakos and Zhao Song.
\newblock Stronger l\({}_{\mbox{2}}\)/l\({}_{\mbox{2}}\) compressed sensing;
  without iterating.
\newblock In {\em Proceedings of the 51st Annual {ACM} {SIGACT} Symposium on
  Theory of Computing, {STOC} 2019, Phoenix, AZ, USA, June 23-26, 2019}, pages
  289--297, 2019.

\bibitem[NSW19]{nakos2019nearly}
Vasileios Nakos, Zhao Song, and Zhengyu Wang.
\newblock (nearly) sample-optimal sparse fourier transform in any dimension;
  ripless and filterless.
\newblock In {\em 2019 IEEE 60th Annual Symposium on Foundations of Computer
  Science (FOCS)}, pages 1568--1577. IEEE, 2019.

\bibitem[NSWZ18]{nakos2018adaptive}
Vasileios Nakos, Xiaofei Shi, David~P. Woodruff, and Hongyang Zhang.
\newblock Improved algorithms for adaptive compressed sensing.
\newblock In {\em 45th International Colloquium on Automata, Languages, and
  Programming, {ICALP} 2018, July 9-13, 2018, Prague, Czech Republic}, pages
  90:1--90:14, 2018.

\bibitem[NW94]{nisan1994hardness}
Noam Nisan and Avi Wigderson.
\newblock Hardness vs randomness.
\newblock {\em Journal of computer and System Sciences}, 49(2):149--167, 1994.

\bibitem[PR08]{porat2008explicit}
Ely Porat and Amir Rothschild.
\newblock Explicit non-adaptive combinatorial group testing schemes.
\newblock In Luca Aceto, Ivan Damg{\aa}rd, Leslie~Ann Goldberg, Magn{\'u}s~M.
  Halld{\'o}rsson, Anna Ing{\'o}lfsd{\'o}ttir, and Igor Walukiewicz, editors,
  {\em Automata, Languages and Programming}, pages 748--759, Berlin,
  Heidelberg, 2008. Springer Berlin Heidelberg.

\bibitem[PR14]{PR14}
Sameer Pawar and Kannan Ramchandran.
\newblock A robust {R}-{FFAST} framework for computing a k-sparse n-length
  {DFT} in {O}(k log n) sample complexity using sparse-graph codes.
\newblock In {\em Information Theory (ISIT), 2014 IEEE International Symposium
  on}, pages 1852--1856. IEEE, 2014.

\bibitem[PS15]{ps15}
Eric Price and Zhao Song.
\newblock A robust sparse {F}ourier transform in the continuous setting.
\newblock In {\em Foundations of Computer Science (FOCS), 2015 IEEE 56th Annual
  Symposium on}, pages 583--600. IEEE, 2015.

\bibitem[PW11]{pw11}
Eric Price and David~P Woodruff.
\newblock (1+ eps)-approximate sparse recovery.
\newblock In {\em Foundations of Computer Science (FOCS), 2011 IEEE 52nd Annual
  Symposium on}, pages 295--304. IEEE, 2011.

\bibitem[Tre01]{trevisan2001extractors}
Luca Trevisan.
\newblock Extractors and pseudorandom generators.
\newblock {\em Journal of the ACM}, 48(4):860--879, 2001.

\bibitem[Tro15]{tropp:book}
Joel~A. Tropp.
\newblock An introduction to matrix concentration inequalities.
\newblock {\em Foundations and Trends\textregistered in Machine Learning},
  8(1-2):1--230, 2015.

\bibitem[TS17]{TaShma}
Amnon Ta-Shma.
\newblock Explicit, almost optimal, epsilon-balanced codes.
\newblock In {\em Proceedings of the 49th Annual ACM SIGACT Symposium on Theory
  of Computing}, STOC 2017, pages 238--251, New York, NY, USA, 2017. ACM.

\bibitem[TV06]{tao2006additive}
Terence Tao and Van~H Vu.
\newblock {\em Additive combinatorics}, volume 105.
\newblock Cambridge University Press, 2006.

\bibitem[Win01]{winterhof2001}
Arne Winterhof.
\newblock Incomplete additive character sums and applications.
\newblock In {\em Finite fields and applications}, pages 462--474. Springer,
  2001.

\bibitem[Xu11]{Xu_RIP}
Zhiqiang Xu.
\newblock Deterministic sampling of sparse trigonometric polynomials.
\newblock {\em Journal of Complexity}, 27(2):133--140, 2011.

\end{thebibliography}
\newpage

\appendix

\section{Reduction of the $\ell_\infty$ norm}\label{sec:omitted_proofs}

\begin{replemma}{lem:recovery_of_HH}
Suppose that $x,\hat z,\nu$ be the input to Algorithm~\ref{alg:recovery_sub}. Let $w = \hat x -\hat z$. When $\nu\geq \frac{16}{\beta k}\|\wh{w}\|_1$, the output $\wh{w}'$ of Algorithm~\ref{alg:recovery_sub} satisfies
\begin{enumerate}[label=(\roman*),noitemsep]
	\item $|\wh{w}_f| \geq (7/16)\nu$ for all $i \in \supp(\wh{w}')$.
	\item $|\wh{w}_f-\wh{w}'_f| \leq |\wh{w}_f|/7$ for all $i \in \supp(\wh{w}')$;
	\item $\supp(\wh{w}')$ contains all $i$ such that $|\wh{w}_f|\geq \nu$;
\end{enumerate}
\end{replemma}
\begin{proof}
By the recovery guarantee we know that
\[
|\wh{w}_f-\wh{w}'_f| \leq \frac{\|w\|_1}{\beta k} \leq \frac{\nu}{16}.
\]
	By thresholding, it must hold for $i\in\supp(\wh{w}')$ that $|\wh{w}'_f|\geq \nu/2$ and thus
	\[
		|\wh{w}_f| \geq \frac{\nu}{2} - \frac{\nu}{16} = \frac{7}{16}\nu,
	\]
	which proves (i). Thus 
	\[
 	|\wh{w}_f-\wh{w}'_f| \leq \frac{\nu}{16} \leq \frac{1}{7}|\wh{w}_f|,
	\]
	which proves (ii). Next we prove (iii). When $|\hat{w}_f|\geq \nu$, we have 
\[
|\wh{G}_{o_{f,r}(f)}\wh{w}_f|\geq (1-\eps)\nu \geq \frac{16(1-\eps)}{\beta k}\|\wh{w}_{[n]\setminus \{f\}}\|_1.
\]
Hence for the signal $y_r\in \mathbb{C}^n$ defined via its Fourier coefficients as
\[
(\wh{y_r})_{f'} = \wh{G}_{o_{i,r}(j)} \wh{x}_{f'},
\]
By Lemma~\ref{lem:bucket_noise}, since $16(1-\eps)\geq 3$, we see that $y_r$ with index $i$ satisfies the condition of Lemma~\ref{lem:1-sparse} and thus it will be recovered in at least $8d/10$ indices $r\in [d]$. The measurements are exactly $(m_H)_{h(i)}$ with $q\in Q$. The recovered estimate is at least $\nu - \nu/16 > \nu/2$ and thus  the median estimate will pass the thresholding, and $i\in \supp(\wh{w}')$.
\end{proof}

Let $H = H(x,k)$ and $I = \{f: |\wh{x}_f|\geq \frac{1}{\rho k}\|x_{-k}\|_1 \}$. By the SNR assumption of $\wh{x}$, we have that $\|\wh{x}_{H}\|_1\leq k\|\wh{x}\|_\infty \leq R^\ast\|\wh{x}_{-k}\|_1$ and thus $\|\wh{x}\|_1\leq (R^\ast+1)\|\wh{x}_{-k}\|_1$. Let $r^{(t)}$ be the residual vector at the beginning of the $t$-th step in the iteration. The threshold in the $t$-th step is
\[
\nu^{(t)} = C\mu\gamma^{T-t},
\]
where $C \geq 1, \gamma > 1$ are constants to be determined.

\begin{replemma}{lem:main_loop_invariant}
There exist $C,\beta,\rho,\gamma$ such that it holds for all $0\leq t \leq T$ that
\begin{enumerate}[label=(\alph*),noitemsep]
	\item $\wh{x}_f = r^{(t)}_f$ for all $f\notin I$;
	\item $|r^{(t)}_f| \leq |\wh{x}_f|$ for all $f$.
	\item $\|r^{(t)}_f\|_\infty \leq \nu^{(t)}$;
\end{enumerate}
\end{replemma}

\begin{proof}
We prove the three properties inductively. The base case is $t=0$, where all properties clearly hold, noticing that $\mu \gamma^T = \|x\|_\infty$.

Next we prove the inductive step from $t$ to $t+1$. Note that
\begin{align*}
\|r^{(t)}\|_1 &\leq \|r^{(t)}_H\|_1 + \|r^{(t)}_{H^c}\|_1 \\
&= \|r^{(t)}_{H\cap I}\|_1 + \|r^{(t)}_{H\setminus I}\|_1 + \|x_{-k}\|_1 \\
&\leq k\cdot \|r_I\|_\infty + k\cdot \frac{1}{\rho k}\|x_{-k}\|_1 + \|x_{-k}\|_1 \\
&\leq k\cdot C\mu\gamma^{T-t} + \left(1 + \frac{1}{\rho}\right)\|x_{-k}\|_1 \\
&= C\gamma^{T-t}\|x_{-k}\|_1 + \left(1 + \frac{1}{\rho}\right)\|x_{-k}\|_1
\end{align*}
When
\begin{equation}\label{eqn:cons1}
	C \left(1-\frac{16}{\rho}\right) \geq \frac{16}{\beta}\left(1+\frac{1}{\rho}\right),
\end{equation}
it holds that
\[
\nu^{(t)} \geq \frac{16}{\beta k}\|r^{(t)}\|_1
\]
and thus Lemma~\ref{lem:recovery_of_HH} applies.

From Lemma~\ref{lem:recovery_of_HH}(i), we know that when
\begin{equation}\label{eqn:cons2}
	\frac{7}{16}C \geq \frac{1}{\rho},
\end{equation}
no coordinates in $I^c$ will be modified. This proves (a).

Lemma~\ref{lem:recovery_of_HH}(ii) implies (b).

To prove (c), let $J = \{f\in I: |r^{(t)}_f| \geq \nu^{(t+1)}\}$. By Lemma~\ref{lem:recovery_of_HH}(iii), all coordinates in $J$ will be recovered. Hence for $f\in J$,
\[
|r^{(t+1)}_f| \leq \frac{1}{7}|r^{(t)}_f|\leq \frac{1}{7}\nu^{(t)} \leq \nu^{(t+1)},
\]
provided that
\begin{equation}\label{eqn:cons3}
\frac{1}{7}\leq \frac{1}{\gamma}.
\end{equation}
For $f\in I\setminus J$, the definition of $J$ implies that $|r^{(t+1)}_f| \leq \nu^{(t+1)}$. This proves (c).

We can take $C = 2$, $\rho = 32$, $\beta = 32$, $\gamma = 2$, which satisfy all the constraints \eqref{eqn:cons1}, \eqref{eqn:cons2} and \eqref{eqn:cons3}.
\end{proof}




\end{document}